\renewcommand{\theequation}{\thesection\arabic{equation}}
\newtheorem{theorem}{Theorem}
\newtheorem{lemma}{Lemma}
\newtheorem{corollary}{Corollary}[theorem]
\theoremstyle{definition}
\newtheorem{definition}{Definition}
\newtheorem{remark}{Remark}
\newtheorem{assumption}{Assumption}
\begin{document}
	
	
	\renewcommand{\baselinestretch}{2}
	
	\markright{ \hbox{\footnotesize\rm 
		}\hfill\\[-13pt]
		\hbox{\footnotesize\rm
		}\hfill }
	
	\markboth{\hfill{\footnotesize\rm Yuhan Tian and Abolfazl Safikhani} \hfill}
	{\hfill {\footnotesize\rm Online Change Point Detection in High-dimensional Vector Auto-regressive Models} \hfill}
	
	\renewcommand{\thefootnote}{}
	$\ $\par
	
	
	\fontsize{12}{14pt plus.8pt minus .6pt}\selectfont
\vspace{0.8pc}
\centerline{\large\bf Sequential Change Point Detection in}
\vspace{2pt} 
\centerline{\large\bf High-dimensional Vector Auto-regressive Models}
\vspace{.4cm} 
\centerline{Yuhan Tian\textsuperscript{1} (yuhan.tian@ufl.edu) and Abolfazl Safikhani\textsuperscript{2} (asafikha@gmu.edu)} 
\vspace{.4cm} 
\centerline{\it \textsuperscript{1}Department of Statistics, University of Florida, Gainesville, FL 32611} 
\centerline{\it \textsuperscript{2}Department of Statistics, George Mason University, Fairfax, VA 22030} 
\vspace{0.3cm}

\fontsize{9}{11.5pt plus.8pt minus.6pt}\selectfont

	
	\begin{quotation}
		\noindent {\it Abstract:}
		{Sequential/Online change point detection involves continuously monitoring time series data and triggering an alarm when shifts in the data distribution are detected. We propose an algorithm for real-time identification of alterations in the transition matrices of high-dimensional vector auto-regressive models. This algorithm initially estimates transition matrices and error term variances using regularization techniques applied to training data, then computes a specific test statistic to detect changes in transition matrices as new data batches arrive. We establish the asymptotic normality of the test statistic under the scenario of no change points, subject to mild conditions. An alarm is raised when the calculated test statistic exceeds a predefined quantile of the standard normal distribution. We demonstrate that as the size of the change (jump size) increases, the test's power approaches one. Empirical validation of the algorithm's effectiveness is conducted across various simulation scenarios. Finally, we discuss two applications of the proposed methodology: analyzing shocks within S\&P 500 data and detecting the timing of seizures in EEG data.}
		
		\vspace{9pt}
		\noindent {\it Key words and phrases:}
		auto correlation; break point; sequential data; structural break; temporal dependence.
		\par
	\end{quotation}\par

	\def\thefigure{\arabic{figure}}
	\def\thetable{\arabic{table}}
	
	\renewcommand{\theequation}{\thesection.\arabic{equation}}

	\fontsize{12}{14pt plus.8pt minus .6pt}\selectfont

\newpage

\section{INTRODUCTION}\label{introduction}

\lhead[\footnotesize\thepage\fancyplain{}\leftmark]{}\rhead[]{\fancyplain{}\rightmark\footnotesize\thepage}

Abrupt changes in daily life are often perceived as anomalies, typically requiring careful study to prevent future repercussions. In a data set, such abrupt changes are usually triggered by shifts in the data-generating process. Detecting these changes precisely and quickly is essential for understanding their origins and mitigating potential harm. Consequently, change point detection (CPD) has become a critical research area in both data science and statistics, with real-world applications spanning power systems, quality control, and advertising \citep{routtenberg2017pmu, page1954continuous, zhang2017online}. Most CPD methods, classified here as offline CPD, require access to the full dataset and aim to pinpoint change point locations accurately. However, with advancements in cloud storage and computing, streaming data has become ubiquitous, necessitating a different CPD approach for monitoring incomplete and dynamic data in real time. Online (or Sequential) CPD addresses this need by triggering alarms as changes are detected in data streams. A robust online CPD method should therefore achieve low false alarm rates, minimal detection delays, and efficient computational processing. In this study, we introduce an online change point detection algorithm that meets all these requirements.

A wide range of online CPD methods has been documented in the literature, primarily focusing on techniques to detect changes in distribution parameters of univariate data, such as the mean, variance, or overall distribution. These methods include early applications of Shewhart charts, cumulative sum (CUSUM) charts, and exponentially weighted moving average (EWMA) charts for quality control \citep{shewhart1930economic, page1954continuous, roberts2000control}, as well as more recent advances based on likelihood ratio tests \citep{hawkins2005statistical}. For more comprehensive coverage of control charts for univariate and multivariate time series, see \cite{montgomery2019introduction, qiu2013introduction}. Recent advances in computational power and data storage have enabled broader interest in multivariate time series, with applications across finance, weather forecasting, healthcare, and industrial operations. Developing online CPD algorithms for multivariate (or high-dimensional) data introduces two main challenges: adapting univariate test techniques for multivariate data and accounting for inter-component correlations, both contemporaneous and cross-correlated. These challenges complicate the theoretical guarantees for false alarm control and detection delay and require careful attention to computational efficiency in an online setting. Two common solutions include applying univariate CPD methods independently to each series or transforming the multivariate time series into a single series for univariate CPD analysis \citep{jandhyala2013inference}. The former approach may weaken detection power by overlooking common change points, while the latter may be ineffective if change points in some series are masked by noise in the combined data. To address this, several methods aggregate component-wise test statistics rather than observations \citep{chen2022high, gosmann2022sequential, bardwell2019most, xu2021optimum}. For instance, the algorithm in \cite{bardwell2019most} examines each series individually to generate a profile-likelihood-like statistic for change points and post-processes the results to identify common change points. However, this method lacks theoretical guarantees for false positives and detection delays. Another approach by \cite{chen2022high} uses likelihood ratio tests across scales and coordinates, but it assumes independent Gaussian observations, limiting its applicability to real-world data with temporal dependencies and cross-correlations. Additionally, \cite{gosmann2022sequential} integrates component-wise schemes using a maximum statistic, which converges to a Gumbel distribution as dimension and sample size increase, though this method only detects mean shifts and not changes in variance or covariance. Finally, \cite{xu2021optimum} introduces an online CPD approach with sampling control, selecting only a few observable series at each time point and employing a sequential probability ratio test. This approach reduces dimensionality by focusing on selected series, though it assumes independent and identically distributed samples. These methods, while capable of handling high-dimensional settings, struggle with contemporary and cross correlations inherent in multivariate time series due to their reliance on independent test statistics per series.

For online CPD in data with dependencies and cross-correlations, treating the multivariate series as an integrated entity is more viable. In Bayesian Online CPD \citep{adams2007bayesian}, change point inference is based on the posterior distribution of the current run length, updated sequentially with new data. This method is flexible through its choice of predictive distribution, but it is not readily adaptable to high-dimensional data, where the likelihood becomes computationally infeasible as dimensions increase. Additionally, Bayesian Online CPD’s time complexity scales linearly with the number of observations, making it unsuitable for long time series (in contrast, our algorithm's complexity is independent of the number of observations, depending only on window size). Several recent methods employ graph-based techniques for online CPD in high-dimensional data. For instance, k-nearest neighbor (k-NN) algorithms by \cite{chen2019sequential, chu2022sequential} perform two-sample tests on k-NN sequentially as data arrives. These algorithms apply to high-dimensional series with contemporaneous correlations, given a suitable similarity measure. However, temporal dependencies can undermine k-NN methods, as the local neighborhood’s definition becomes unreliable over time. Specifically, the choice of neighbors may change as patterns shift, making it difficult to adapt effectively to temporal dependencies. Projection-based control charts offer a practical approach for handling high-dimensionality and correlations in process monitoring. A notable example is \cite{zhang2020spatial}, which uses random projections to reduce dimensionality, creating subprocesses that are monitored by local nonparametric control charts and then fused for decision-making. PCA-based control charts provide another option for dimension reduction, addressing various high-dimensional data types \citep{de2015overview}. For example, dynamic PCA-based charts \citep{ku1995disturbance} manage autocorrelation by including lagged data, while recursive PCA charts \citep{choi2006adaptive} handle nonstationarity by updating parameters with a forgetting factor, and moving window PCA charts \citep{he2008variable} maintain a recent data window. However, projection-based methods can lack interpretability since identified changes may involve multiple variables, complicating the source identification. For a detailed overview of CPD methods, see \cite{aminikhanghahi2017survey}. Despite the extensive work on online CPD, few methods effectively manage high-dimensional settings with cross correlations and even fewer offer theoretical guarantees, highlighting the need for our proposed algorithm tailored to these challenges.

To address high-dimensional data with temporal and cross correlations, our algorithm is based on the vector auto-regressive (VAR) model, represented in equation (\ref{eqn:a1}). The key parameters of a VAR model are its transition matrices, which capture temporal and cross dependencies among observations. This linear structure provides computational and analytical efficiency, making VAR a staple in multivariate time series analysis, with applications spanning economics \cite{rosser1995vector}, neuroscience \cite{goebel2003investigating}, and quality control \cite{pan2012and}. Changes within a VAR time series manifest as alterations in its transition matrices, as described in \cite{wang2019localizing}, enabling our algorithm to detect shifts in higher-order structures, including temporal and cross correlations. This capability differentiates our approach, as most online CPD algorithms focus on changes in mean \citep{gosmann2022sequential, dette2020likelihood, hawkins2005statistical, aminikhanghahi2018real}, variance \citep{dette2020likelihood, hawkins2005statistical, aminikhanghahi2018real}, or contemporary correlations \citep{dette2020likelihood, zhang2023spectral, cabrieto2017detecting}. With the rise of high-dimensional data, VAR models have become increasingly important. Despite their wide applications, a clear gap remains: to our knowledge, no online or sequential CPD algorithms are explicitly designed for high-dimensional VAR models. Our algorithm aims to bridge this gap by providing an online CPD approach for detecting abrupt changes in transition matrices in high-dimensional VAR models.

To explain our algorithm, we introduce two essential quantities: $n$ and $\omega$. Here, $n$ represents the number of observations in our training data set (historical data set), carefully selected to exclude any change points—a common practice in online CPD research \citep{qiu2022transparent, gosmann2022sequential, chen2022high}. The parameter $\omega$ denotes the permissible detection delay, as data is observed incrementally in an online setting. To assess whether time $t$ is a change point, a few subsequent data points are needed, referred to as the “pre-specified detection delay” ($\omega$). Following \cite{aminikhanghahi2017survey}, our algorithm is therefore a $\omega$-real-time algorithm. Our algorithm has two primary steps. First, we estimate transition matrices and error variances using the training data set, denoted as $X_1, X_2, \ldots, X_n$. Given the high-dimensional nature of the model, we use an $\ell_1$-penalized least squares estimator, which needs to be computed only once for the entire monitoring process. Second, we compute a test statistic on sequential data batches of size $\omega$, specifically $X_{t+1}, X_{t+2}, \ldots, X_{t+\omega}$, observed at times $t, t+1, \ldots$. Our method triggers an alarm if the test statistic exceeds a predefined threshold (see Section~\ref{statistics} for details). Additionally, we include a refinement step to locate change points and reduce false alarms (see Section~\ref{refine}). In Theorem~\ref{th:1}, we establish the asymptotic normality of the test statistic under conditions without change points, and in Theorem~\ref{th:2}, we examine the relationship between the power of the test and the jump size (the difference in model parameters before and after the change point). This analysis requires examining the fourth-order properties of the VAR time series, as parameter consistency in high-dimensional settings generally involves only first- and second-order properties \citep{basu2015regularized, wong2020lasso}. The proof is detailed in Section~\ref{Appendix C} of the Supplementary Materials. The normality of the test statistic allows for selecting a threshold to control the average run length (false alarm rate) without resorting to costly Monte Carlo simulations, a step often required by other online CPD algorithms \citep{chen2022high, qiu2022transparent}. This is particularly beneficial in high-dimensional online scenarios. Our algorithm demonstrates robust numerical performance, achieving well-controlled average run length and short detection delay, as detailed in Section~\ref{section:num_main}. It also surpasses several alternative methods in computational efficiency, an advantage supported by results in Sections~\ref{section:num_main} and \ref{realdata}. This speed is critical for online CPD applications. Additionally, the algorithm is resource-efficient, requiring only moderate memory and data storage for parameter estimates.

Our algorithm's main contributions are as follows. It is an online change point detection algorithm specifically designed for high-dimensional VAR models, addressing a notable gap in the field. It detects changes in higher-order structures, such as temporal and cross correlations, while handling high dimensionality and dependencies. The algorithm has theoretical guarantees, with asymptotic normality allowing control over the average run length. We also analyze the link between algorithm power and jump size. It optimizes resource usage, reducing the need for Monte Carlo simulations during threshold selection and enhancing the efficiency of the CPD process.

The paper is organized as follows: Section~\ref{formulation} describes the model setup and change point detection problem. The test statistic and detection algorithm are outlined in Section~\ref{algorithm}, and theoretical results are presented in Section~\ref{theorem}. Section~\ref{refine} introduces a refinement approach for precise change point estimation. Section~\ref{multicp} extends the algorithm to multiple change points. Performance results on synthetic data are presented in Section~\ref{section:num_main}, while real data applications are discussed in Section~\ref{realdata}. Finally, Section~\ref{future} provides concluding remarks and future research directions.

\emph{Notations:} In this paper, when referring to a vector \(v \in \mathbb{R}^{p}\), we denote its \(j\)-th feature as \(v_j\). The \(\ell_q\) norms are represented by \(\|v\|_{q} = \left(\sum_{j=1}^{p}\left|v_{j}\right|^{q}\right)^{1 / q}\), where \(q > 0\). We use \(\|v\|_{0}\) to denote \(\left|\operatorname{supp}(v)\right| = \sum_{i=1}^{p} \mathbf{1}\left[v_{j} \neq 0\right]\), and \(\|v\|_{\infty}\) to represent \(\max_{j}\left|v_{j}\right|\). For an indexed vector, \(v_i \in \mathbb{R}^{p}\) for \(i = 1, \ldots, n\), its \(j\)-th feature is denoted as \(v_{i,j}\). In the case of a matrix \(A\), \(\rho(A)\), \(\left\|A\right\|\), and \(\left\|A\right\|_{F}\) denote its spectral radius \(\left|\Lambda_{\max }(A)\right|\), operator norm \(\sqrt{\Lambda_{\max }\left(A^{\prime} A\right)}\), and Frobenius norm \(\sqrt{\operatorname{tr}\left(A^{\prime} A\right)}\), respectively. Additionally, \(\left\|A\right\|_{\max}\), \(\left\|A\right\|_{\infty}\), and \(\left\|A\right\|_{1}\) denote the coordinate-wise maximum (in absolute value), maximum absolute row sum, and maximum absolute column sum of matrix \(A\), respectively. For matrix \(A\), its maximum and minimum eigenvalues are represented as \(\Lambda_{\min }(A)\) and \(\Lambda_{\max }(A)\), and the element in the \(i\)-th row and \(j\)-th column is denoted as \(a_{i,j}\). The \(i\)-th unit vector in \(\mathbb{R}^{p}\) is indicated as \(e_{i}\). Throughout this paper, the notation \(A \succsim B\) signifies the existence of an absolute constant \(c\), independent of model parameters, such that \(A \geq c B\). The notation \(A \asymp B\) is used to denote \(A \succsim B\) and \(B \succsim A\). The notation \(\Gamma_{X}\left(\ell\right)\) is used to represent Cov\(\left(X_t,X_{t+l}\right)\). \(A^\prime\) and \(A^*\) stand for the ordinary transpose and the Hermitian transpose of matrix \(A\) respectively. Convergence in probability and in distribution is indicated by \(\xrightarrow[]{p}\) and \(\xrightarrow[]{D}\) respectively. The variance of a random variable or vector \(x\) is denoted as \(\operatorname{VAR}\left(x\right)\).

\vspace{-0.5cm}

\section{MODEL FORMULATION}\label{formulation}
We assume the $p$-dimensional data are generated from a finite order VAR(h) process with the transition matrices $A_1,\ldots, A_h$ before the occurrence of the change. After the change point, i.e., from time ${t^*+1}$, transition matrices are changed to $A^*_1,\ldots, A^*_h$ with $\left( A^*_1,\ldots, A^*_h \right) \neq \left( A_1,\ldots, A_h \right)$. The lag $h$ may vary before and after the change point. In such situations, we refer to $h$ as the maximum of the two lags, and we augment the process with the smaller lag by including a few zero-matrix transition matrices. Consequently, to simplify matters, we assume that the lag $h$ remains constant both before and after the change point without loss of generality. Formally, data points before the change point, denoted as $\left\{\ldots, X_{t^*-1}, X_{t^*}\right\}$ (including the training set $\left\{X_{-h+1}, \ldots, X_n\right\}$), and data points subsequent to the change point, denoted as $\left\{X_{t^*+1}, X_{t^*+2}, \ldots \right\}$, are generated according to the following equations:
\begin{equation}\label{eqn:a1}
    X_t = \sum_{l=1}^h A_l X_{t-l} + \varepsilon_t, \,\,\, \text{for}\,\, t \leq t^*; \, X_t = \sum_{l=1}^h A^*_l X_{t-l} + \varepsilon_t, \,\,\, \text{for}\,\, t > t^*.
\end{equation}

Here, the error vectors $\varepsilon_t$ are temporally independent, possessing a mean of zero and a covariance matrix denoted as $\Sigma = \sigma^2 I_p$. We refer to Remark~\ref{remark1} for a more flexible structure for the covariance of the error term. The VAR model naturally entails a high-dimensional parameter space, as the number of parameters scales quadratically with respect to data dimension (i.e. the number of parameters is of order \(p^2\) where $p$ is the data dimension). Even when \(p\) is relatively modest, this scaling leads to a substantial parameter count, exposing the model to the challenges of high-dimensional settings.


\vspace{-0.5cm}

\section{DETECTION ALGORITHM}\label{algorithm}

In this section, we provide details of the proposed detection algorithm. The algorithm consists of two main steps. We assume we have access to $n + h$ training data points in which there are no changes in transition matrices. In the first step, these data points are used to estimate the baseline transition matrices and variance of error terms. In the second step, new batches of observations of size $\omega$ are observed, and test statistics are computed using these batches and model parameter estimations from the first step. Large values of the test statistic indicate a potential change point.

\vspace{-0.5cm}

\subsection{Step I: Estimation of Transition Matrices and Error Variance}\label{estimation}
In this step, we aim to estimate the transition matrices and the variance of error terms using the provided training data. To achieve this, we construct a regression problem based on the training data denoted as $\mathbb{X}_{hist} = \left\{X_{-h+1}, \ldots, X_n\right\}$. This regression problem takes the following form:
\[
\underbrace{\left[\begin{array}{c}
X_{n}^{\prime} \\
\vdots \\
X_{1}^{\prime}
\end{array}\right]}_{\mathcal{Y}_n}=\underbrace{\left[\begin{array}{ccc}
X_{n-1}^{\prime} & \cdots & X_{n-h}^{\prime} \\
\vdots & \ddots & \vdots \\
X_{0}^{\prime} & \cdots & X_{-h+1}^{\prime}
\end{array}\right]}_{\mathcal{X}_n} \underbrace{\left[\begin{array}{c}
A_{1}^{\prime} \\
\vdots \\
A_{h}^{\prime}
\end{array}\right]}_{B^{*}}+\underbrace{\left[\begin{array}{c}
\varepsilon_{n}^{\prime} \\
\vdots \\
\varepsilon_{1}^{\prime}
\end{array}\right]}_{E_n}.
\]
This problem can be expressed in vector form as: $\operatorname{vec}(\mathcal{Y}_n) =\operatorname{vec}\left(\mathcal{X}_n  B^*\right)+\operatorname{vec}(E_n)=(I \otimes \mathcal{X}_n) \operatorname{vec}\left(B^*\right)+\operatorname{vec}(E_n)$. Alternatively, it can be represented as: $$
\underbrace{Y_n}_{n p \times 1} =\underbrace{Z_n}_{n p \times {hp^2}} \underbrace{\beta^{*}}_{{hp^2} \times 1}+\underbrace{\operatorname{vec}(E_n)}_{n p \times 1}.
$$

We employ an $\ell_1$-penalized least squares approach to estimate the transition matrices $A_1, A_2, \ldots, A_h$, which is equivalent to estimating $\beta^{*}$. Simultaneously, we estimate $\sigma^2$ and $\operatorname{Var}(\varepsilon^2_{1,1})$ using the method of moments:
\begin{equation}\label{eqn:a3}
\hat{\beta}_n = \underset{\beta \in \mathbb{R}^{hp^2}}{\operatorname{argmin}} \: \left(\frac{1}{n}\|Y_n-Z_n \beta\|_2^{2}+\lambda_{n}\|\beta\|_{1}\right),  
\end{equation}
\begin{equation}
\hat{\sigma}^2_n = \frac{1}{pn}\sum_{i=1}^{n}\left\|\left(\sum_{l=1}^h\hat{A}_l X_{i-l}\right) - X_i\right\|_2^2 ,
\end{equation}
\begin{equation}
\text{and }\hat{V}_n = \left|\frac{1}{pn}\sum_{i=1}^n\left\|\left(\sum_{l=1}^h\hat{A}_l X_{i-l}\right)-X_i\right\|_4^4 - \hat{\sigma}^4_n\right|.
\end{equation}

The estimator for the transition matrices, employing $\ell_1$-penalized least squares, exhibits several valuable properties, including consistency in high-dimensional settings \citep{basu2015regularized}. The parameter $\lambda_n$ acts as a tuning parameter, controlling sparsity in the estimation. The choice of the tuning parameter $\lambda_n$ is determined through cross-validation, and additional information can be found in the R package ``sparsevar," as introduced in \cite{sparsevar}. Finally, if the lag $h$ for the VAR process is unknown, We recommend estimating it by comparing the Bayes information criterion (BIC), defined as $\operatorname{BIC}(h)=\ln |\hat{\Sigma}(h)|+\frac{\ln n}{n} h p^{2},\text{ where }\hat{\Sigma}(h)=n^{-1} \sum_{t=1}^{n} \hat{\varepsilon}_{t,h} \hat{\varepsilon}_{t,h}^{\prime}$. Here, $\hat{\varepsilon}_{t,h}$ represents the residual at time $t$ when a VAR(h) model is employed. To determine the appropriate lag, one should calculate BIC($h$) for a grid of VAR models with various potential lags, using the historical data. The lag with the lowest BIC($h$) should be selected as the estimated lag.

\vspace{-0.5cm}

\subsection{Step II: Test Statistic}\label{statistics}
Given the parameter estimates $\hat{\beta}_n$, $\hat{\sigma}^2_n$, $\hat{V}_n$, and the new observations $\mathbb{X}_{obs} = {X_{t+1}, \ldots, X_{t+\omega}}$ with $t > n - \omega$, we define the test statistic as follows: 
\begin{equation}\label{eqn:a4}
\hat{T}_t^{(n,\omega)} =\sqrt{\frac{p\omega}{\hat{V}_n}}\left(\frac{\hat{R}_t^{(n,\omega)}}{p}-\hat{\sigma}^2_n\right),
\end{equation}
\begin{equation}
\text{ where }\hat{R}_t^{(n,\omega)} = \frac{1}{\omega}\sum_{i = t + 1}^{t + \omega} \left\|\left(\sum_{l=1}^h\hat{A}_l X_{i-l}\right) - X_i\right\|^2_2.
\end{equation}
Finally, we compute the test statistic $\hat{T}_t^{(n,\omega)}$ for $t = n - \omega + 1, n - \omega +2, \ldots$. An alarm will be raised at time $\hat{t}$ if $\left|\hat{T}_{\hat{t}}^{(n,\omega)}\right| > \Phi(1-\alpha/2)$, where $\Phi(\cdot)$ represents the standard normal quantile function.

\begin{algorithm}[!ht]
\caption{VAR\_cpDetect\_Online}\label{alg1}
\begin{algorithmic}
\Require $data \in \mathbf{R}^{p \times T}$, $n$, $\omega$, $\alpha$, $h$
\State $t \gets 0$; $\mathbb{X}_{hist} \gets data[,t + 1:t + n + h]$; $\hat{\beta}_n \gets \underset{\beta \in \mathbb{R}^{hp^2}}{\operatorname{argmin}} \: \left(\frac{1}{n}\left\|Y_n-Z_n \beta\right\|^{2}_2+\lambda_{n}\left\|\beta\right\|_{1}\right)$
\State $\hat{\sigma}^2_n \gets \frac{1}{pn}\sum_{i=1}^{n}\left\|\left(\sum_{l=1}^h\hat{A}_l X_{i-l}\right) - X_i\right\|^2_2$; 
\State $\hat{V}_n \gets \left|\frac{1}{pn}\sum_{i=1}^n\left\|\left(\sum_{l=1}^h\hat{A}_l X_{i-l}\right)-X_i\right\|_4^4 - \hat{\sigma}^4_n\right|$; $t \gets n+h-\omega+1$
\While{$t \leq T-\omega$}
\State $\mathbb{X}_{obs} \gets data[,t - h + 1: t + \omega]$; $\hat{T}_t^{(n,\omega)} \gets \sqrt{\frac{p\omega}{\hat{V}_n}}\left(\frac{\hat{R}_t^{(n,\omega)}}{p}-\hat{\sigma}^2_n\right)$ 
\If{$\left|\hat{T}_t^{(n,\omega)}\right| > \Phi(1-\alpha/2)$}
    \State raise alarm; $t \gets t + 1$ 
\Else
    \State $t \gets t + 1$
\EndIf
\EndWhile
\end{algorithmic}
\end{algorithm}

The underlying concept behind the developed test statistic lies in its behavior under different scenarios. When no change points are present, our test statistic becomes a normalized sum of independent and identically distributed random variables, assuming our estimation of transition matrices is consistent. In such cases, the distribution of the test statistic closely approximates a standard normal distribution, as demonstrated in Theorem~\ref{th:1}. Conversely, if a change point exists before time $t$, our test statistic exhibits a shift, as described in Theorem~\ref{th:2}. Consequently, an alarm will be raised when $\left|\hat{T}_t^{(n,\omega)}\right| > \Phi(1-\alpha/2)$. The selection of \( \omega \) is discussed in detail in Section~\ref{sim:w} of the supplementary material. Identifying which components experience shifts after a raised alarm is crucial in high-dimensional settings. We propose using an online debiasing technique \citep{deshpande2023online} to construct confidence intervals for the differences in transition matrices before and after the change to infer the changing components; further details are provided in Section~\ref{postchangeanalysis} of the supplementary material.
\begin{remark}\label{remark1}
    It is possible to extend the proposed algorithm to accommodate scenarios with variance heterogeneity. In such cases, the modified test statistic is defined as: 
    \[
    \hat{T}_t^{(n,\omega)} = \frac{\sqrt{\omega}\left(\hat{R}^{(n,\omega)}_t-\sum^p_{j=1}\hat{\sigma}^2_{n,j}\right)}{\sqrt{\sum^p_{j=1}\hat{V}_{n,j}}},  
    \]
    where $\hat{\sigma}^2_{n,j}$ and $\hat{V}_{n,j}$ are estimated separately using the method of moments for each component $j$. In order to keep the exposition of proposed methodology clear, we focus on fixed/homogeneous variance case in the remainder of the paper while the satisfactory performance under heterogeneous case is empirically illustrated in Section~\ref{sim:hetervar} in supplementary material. Note that extending the algorithm to scenarios with non-diagonal covariance matrices is discussed in Section~\ref{future}. 
\end{remark}

\vspace{-0.5cm}

\section{THEORETICAL PROPERTIES}\label{theorem}
In this section, we present two theorems concerning the asymptotic behavior of the test statistic in two distinct scenarios: one when there are no change points, and the other when a change point exists. To derive these theorems, it is necessary to make the following assumptions.

\begin{assumption}\label{assumption1}
    \emph{The transition matrices exhibit sparsity, meaning that the vector $\beta^*$ possesses a sparsity level denoted as $s$, represented as $\|\beta^*\|_0 = s$.}
\end{assumption}
\begin{assumption}\label{assumption2}
    \emph{The error terms, denoted as $\varepsilon_t$, are independent sub-Gaussian random vectors with a mean of zero and a variance of $\sigma^2I_p$. Moreover, their sub-Gaussian norm is bounded by a constant $K$.}
\end{assumption}
\begin{assumption}\label{assumption3}
    \emph{The VAR process is stable and stationary.}
\end{assumption}
\begin{assumption}\label{assumption4}
    \emph{The spectral density function, denoted as $f_{X}(\theta):=\frac{1}{2 \pi} \sum_{\ell=-\infty}^{\infty} \Gamma_{X}(\ell) e^{-i \ell \theta}$, exists for $\theta$ within the interval $[-\pi, \pi]$. Additionally, its maximum and minimum eigenvalues are bounded on this interval, that is,
\[
\mathcal{M}\left(f_{X}\right):=\operatorname{sup}_{\theta \in[-\pi, \pi]} \Lambda_{\max }\left(f_{X}(\theta)\right)<\infty \text{ and } \mathfrak{m}\left(f_{X}\right):=\operatorname{inf}_{\theta \in[-\pi, \pi]} \Lambda_{\min }\left(f_{X}(\theta)\right)>0.
\]}
\end{assumption}

Assumption~\ref{assumption1} is common in high-dimensional models and plays a important role in addressing dimensionality issues. On the other hand, Assumption~\ref{assumption2} is employed to manage the tail behavior of the data distribution. It's worth noting that the sub-Gaussian assumption can be relaxed to accommodate heavier-tailed distributions, such as the sub-Weibull distribution, albeit at the expense of a larger sample size requirement, as discussed in \cite{wong2020lasso}. Assumption~\ref{assumption3} is common in the time series literature, as seen in references like \cite{lutkepohl2005new}, and it ensures the existence of a unique stationary solution for the auto-regressive equations \eqref{eqn:a1}. Finally, Assumption~\ref{assumption4} is essential for verifying the restricted eigenvalue and deviation bound conditions, as outlined in \cite{loh2012high} and \cite{basu2015regularized}. These two properties are crucial prerequisites for establishing the consistency of the $\ell_1$-regularized estimates in \eqref{eqn:a3}.

\begin{theorem}\label{th:1}
\emph{Suppose that there are no change points in the data generation process, and Assumptions~\ref{assumption1}-\ref{assumption4} are satisfied. Then, with $\omega = o(n)$, $\omega\succsim s(\log h + 2 \log p)$,
 $\frac{s(\log h + 2 \log p)}{\sqrt{p}}=o(\sqrt{n})$, $n^{1/2-a}\succsim\frac{\sqrt{s}}{p^{1/2-a}}$ for some $a \in (0,1/2)$ and $n^{1/4-b}\succsim\frac{\sqrt{s}}{p^{3/4-b}}$ for some $b \in (0,1/4)$, we have
\[
\hat{T}_t^{(n,\omega)} \xrightarrow[]{D} \mathcal{N}(0,1)\,\,\, \text{as} \,\, n  \xrightarrow{} \infty,
\]
where $\mathcal{N}(0,1)$ represents the standard normal distribution.}
\end{theorem}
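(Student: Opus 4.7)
The plan is to write $\hat T_t^{(n,\omega)}$ as a leading ``oracle'' CLT piece plus remainder terms driven by plug-in errors in $\hat\beta_n$, $\hat\sigma_n^2$, and $\hat V_n$, and then kill each remainder using lasso rates and the stated scaling hypotheses. Under the null, \eqref{eqn:a1} gives $\sum_l \hat A_l X_{i-l}-X_i=\Delta_i-\varepsilon_i$ with $\Delta_i:=\sum_l(\hat A_l-A_l)X_{i-l}$; expanding the square in $\hat R_t^{(n,\omega)}$ and centering yields
\[
\hat T_t^{(n,\omega)}=\sqrt{\tfrac{V}{\hat V_n}}\bigl(T^\ast_t + E_1 + E_2 + E_3\bigr),
\]
with oracle $T^\ast_t:=\tfrac{1}{\sqrt{p\omega V}}\sum_{i=t+1}^{t+\omega}\sum_{j=1}^p(\varepsilon_{i,j}^2-\sigma^2)$, and remainders $E_1=\sqrt{p\omega/V}(\sigma^2-\hat\sigma_n^2)$, $E_2=(p\omega V)^{-1/2}\sum_i\|\Delta_i\|_2^2$, $E_3=-2(p\omega V)^{-1/2}\sum_i\Delta_i^\top\varepsilon_i$, where $V=\operatorname{Var}(\varepsilon_{1,1}^2)$. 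It then suffices to prove $T^\ast_t\xrightarrow[]{D}\mathcal N(0,1)$, $\hat V_n\xrightarrow[]{p}V$, and $E_1=E_2=E_3=o_p(1)$, then conclude by Slutsky.

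The oracle $T^\ast_t$ is a normalized sum of $p\omega$ i.i.d.\ centered sub-exponential variables $\varepsilon_{i,j}^2-\sigma^2$ (squares of sub-Gaussians have finite moments of all orders), so Lyapunov's CLT using a $(2{+}\delta)$-moment bound gives $T^\ast_t\xrightarrow[]{D}\mathcal N(0,1)$; the condition $\omega\succsim s(\log h+2\log p)$ guarantees $p\omega\to\infty$. For $E_2$ and $E_3$, Assumptions~\ref{assumption1}--\ref{assumption4} validate the restricted eigenvalue and deviation-bound conditions, so Basu--Michailidis lasso theory \citep{basu2015regularized} gives $\|\hat\beta_n-\beta^\ast\|_2^2 = O_p\bigl(s(\log h+2\log p)/n\bigr)$. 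Writing $Z_i:=(X_{i-1}^\top,\dots,X_{i-h}^\top)^\top$ and $\hat B-B^\ast$ for the matricized estimator gap, $\sum_i\|\Delta_i\|_2^2=\operatorname{tr}\bigl((\hat B-B^\ast)^\top(\sum_i Z_iZ_i^\top)(\hat B-B^\ast)\bigr)\le\|\hat B-B^\ast\|_F^2\,\|\sum_i Z_iZ_i^\top\|_{op}$, and stationarity together with Assumption~\ref{assumption4} gives $\|\sum_i Z_iZ_i^\top\|_{op}=O_p(\omega)$. Hence $E_2=O_p(\sqrt{\omega/p}\cdot s\log(hp)/n)\le s\log(hp)/\sqrt{np}=o(1)$ by $\omega=o(n)$ and $s(\log h+2\log p)/\sqrt p=o(\sqrt n)$. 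For $E_3$, conditioning on the $\sigma$-algebra $\mathcal F$ generated by the training data and by $\{X_i\}_{i\le t}$ makes $\Delta_i$ measurable while keeping $\varepsilon_{t+1},\dots,\varepsilon_{t+\omega}$ independent of $\mathcal F$ (this is where $t\ge n$ in Algorithm~\ref{alg1} is used); hence $E[E_3\mid\mathcal F]=0$ and $\operatorname{Var}(E_3\mid\mathcal F)\precsim\sigma^2\sum_i\|\Delta_i\|_2^2/(p\omega V)=o_p(1)$, and Markov's inequality gives $E_3=o_p(1)$.

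For $E_1$, an analogous expansion of $\hat\sigma_n^2=\tfrac{1}{pn}\sum_{i\le n}\|\hat\eta_i\|_2^2$ around $\sigma^2$ yields $\hat\sigma_n^2-\sigma^2=O_p((np)^{-1/2}+s\log(hp)/n)$, so $E_1=O_p(\sqrt{\omega/n}+\sqrt{p\omega}\,s\log(hp)/n)=o_p(1)$ under the same conditions used above. Consistency $\hat V_n\xrightarrow[]{p}V$ follows from expanding $\hat\eta_{i,j}^4=(\varepsilon_{i,j}-\Delta_{i,j}^{\mathrm{tr}})^4$: the leading $\tfrac{1}{np}\sum\varepsilon_{i,j}^4\xrightarrow[]{p}V+\sigma^4$ by the SLLN, while the four cross terms of orders $\varepsilon^3\Delta$, $\varepsilon^2\Delta^2$, $\varepsilon\Delta^3$, $\Delta^4$ must be shown to vanish. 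It is precisely here that the auxiliary scalings $n^{1/2-a}\succsim\sqrt s/p^{1/2-a}$ and $n^{1/4-b}\succsim\sqrt s/p^{3/4-b}$ enter, as they match the polynomial rates at which the $\varepsilon^2\Delta^2$ and $\varepsilon\Delta^3$ pieces shrink once combined with the lasso $\ell_2$ rate and a sub-Gaussian moment bound on $\sum_i\|Z_i\|_2^{2k}$.

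The chief obstacle is this last fourth-order control for $\hat V_n$: standard lasso theory for VAR models \citep{basu2015regularized, wong2020lasso} only delivers first- and second-order guarantees, whereas proving $\hat V_n-V=o_p(1)$ (and, relatedly, producing sharp enough bounds on $E_2$, $E_3$, $E_1$ so that all three vanish after multiplication by $\sqrt{p\omega}$) requires tracking eight-fold products of sub-Gaussian innovations multiplied by estimator increments on dependent data. Balancing these remainders against the $\sqrt{p\omega}$ normalizer is exactly what forces the two technical conditions indexed by $a$ and $b$; once they are in place, the remainder of the argument is a clean CLT combined with Slutsky.
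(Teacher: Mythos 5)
Your overall architecture is the same as the paper's: decompose $\hat{T}_t^{(n,\omega)}$ into an oracle CLT term plus remainders driven by $\hat\beta_n-\beta^*$ and $\hat\sigma_n^2-\sigma^2$, prove $\hat V_n\xrightarrow{p}\operatorname{Var}(\varepsilon_{1,1}^2)$ by a fourth-power expansion, and finish with Slutsky; you also correctly locate where the conditions indexed by $a$ and $b$ enter. However, two of your intermediate steps fail in the regime the theorem is designed for.

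First, your bound on $E_2$ relies on $\bigl\|\sum_{i}Z_iZ_i^\top\bigr\|_{op}=O_p(\omega)$. This is false when $hp\gg\omega$, which is exactly the high-dimensional case targeted here (the paper allows $p=n^c$ with $\omega$ only polylogarithmic): the Gram matrix $\sum_{i=t+1}^{t+\omega}Z_iZ_i^\top$ has rank at most $\omega$ and trace of order $\omega hp$, so its operator norm is at least of order $hp$, not $O(\omega)$. The paper avoids this by never bounding the unrestricted operator norm; instead it uses a restricted upper-eigenvalue inequality (Lemma~\ref{lemma2}, adapted from Basu--Michailidis), $\theta^\prime\hat\Gamma_\omega\theta\le 3\alpha_{UB}\|\theta\|_2^2+\tau_{UB}^\omega\|\theta\|_1^2$, combined with the cone property $\|\hat\beta_n-\beta^*\|_1\le 4\sqrt{s}\,\|\hat\beta_n-\beta^*\|_2$ of the lasso error; the hypothesis $\omega\succsim s(\log h+2\log p)$ is what makes the $\tau_{UB}^\omega\|\cdot\|_1^2$ term comparable to the $\|\cdot\|_2^2$ term. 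Your final rate for $E_2$ is correct, but the step that produces it is not. Second, your treatment of $E_3$ conditions on $\mathcal F=\sigma(\text{training data},\,\{X_i\}_{i\le t})$ and asserts that every $\Delta_i$, $i=t+1,\dots,t+\omega$, is $\mathcal F$-measurable. It is not: $\Delta_{t+2}$ already involves $X_{t+1}$, which depends on $\varepsilon_{t+1}$. The conditional-mean-zero claim survives only as a martingale-difference statement with respect to the filtration $\mathcal F_{i-1}$, which would require a martingale variance or Burkholder-type argument to close. The paper instead bounds this term deterministically on a high-probability event via H\"older, $|E_3|\lesssim\sqrt{\omega/p}\,\|\hat\beta_n-\beta^*\|_1\cdot\max_{j,j',l}\bigl|\tfrac1\omega\sum_{i=t+1}^{t+\omega}x_{i-l,j}\varepsilon_{i,j'}\bigr|$, and controls the maximum with the deviation bound (Lemma~\ref{lemma3}) applied over the window of length $\omega$ — another place where $\omega\succsim\log h+2\log p$ is consumed. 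Either repair is workable, but as written both steps would not survive refereeing.
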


This theorem forms the foundation of the proposed online detection algorithm by analyzing the marginal distribution of the test statistic in scenarios without change points. The asymptotic normality provides an objective basis for selecting the alarm threshold by utilizing the quantile function of the standard normal distribution. Consequently, the algorithm's average run length (ARL) can be controlled through the selection of the threshold $\alpha$. Note that the dependence arising from overlapping windows may impact the ARL or false alarm rate of the algorithm, and thus, proper choice of the threshold \( \alpha \) should be provided. As summarized in Section~\ref{sim:runlength} of the supplementary material, the proposed method of selecting \( \alpha \) (which does not account for potential overlapping dependence) has proven sufficient to control the ARL and maintain the false alarm rate. Typically, when the historical data set is sufficiently large for accurate parameter estimation, the average run length will be lower bounded by $1/\alpha$. The presence of a change in the model parameters will be indicated by significant deviations of the test statistic beyond a chosen quantile of the standard normal distribution. It is important to mention that the sample size requirements outlined in Theorem~\ref{th:1} are relatively lenient, allowing for the consideration of high-dimensional scenarios, provided that the transition matrices are sparse. For example, when dealing with a fixed lag $h$, it is possible to select values such as $p=n^c$ and $\omega = \left( \log n \log p \right)^{1+\epsilon} $, where $c$ and $\epsilon$ are positive constants. This choice remains valid as long as the sparsity level $s$ satisfies $s = o \left( \left( \log n \right)^{1+\epsilon} \left( \log p \right)^{\epsilon} \right)$.

\begin{theorem}\label{th:2}
\emph{Assume the existence of a change point at time $t^*$, and assume that Assumptions~\ref{assumption1}-\ref{assumption4} hold for the data both before and after this change point. Under the same conditions in Theorem~\ref{th:1} with additional conditions that $s(\log h + 2 \log p) = o(\omega)$, $\sqrt{\frac{s(\log h + 2 \log p)}{\omega}} = o\left(\left\|\beta^*-\beta_{new}\right\|_2\right)$ and $\omega^\eta p^\eta  \sqrt{\frac{s^3(\log h + 2 \log p)}{n}} = o\left(\left\|\beta^*-\beta_{new}\right\|_2\right)$ for some $\eta \in (0,1/4)$, we have
$$
\begin{aligned}
    \text{P}(Z_{t^*+h}^{(n,\omega)} +  \frac{c_{l}}{\sqrt{\hat{V}_n}}\sqrt{\frac{\omega}{p}}\left\|\beta^*-\beta_{new}\right\|^2_2 + L_{t^*+h}^{(n,\omega)} \leq \hat{T}_{t^*+h}^{(n,\omega)} \\ 
    \leq Z_{t^*+h}^{(n,\omega)} +  \frac{c_{u}}{\sqrt{\hat{V}_n}}\sqrt{\frac{\omega}{p}}\left\|\beta^*-\beta_{new}\right\|^2_2 + \left(L_{t^*+h}^{(n,\omega)}\right)^\prime) \geq 1 - \epsilon_{n,p,\omega},
\end{aligned}
$$
where 
$$
\begin{aligned}
&Z_{t^*+h}^{(n,\omega)} \xrightarrow[]{D} \mathcal{N}(0,1)\,\,\, \text{as} \,\, n  \xrightarrow{} \infty, \hat{V}_n \xrightarrow[]{p}\operatorname{Var}(\varepsilon^2_{1,1}) \text{ as } n \xrightarrow[]{}\infty,\\
&L_{t^*+h}^{(n,\omega)} = o_p\left(\sqrt{\frac{\omega}{p}}\left\|\beta^*-\beta_{new}\right\|^2_2\right), \left(L_{t^*+h}^{(n,\omega)}\right)^\prime = o_p\left(\sqrt{\frac{\omega}{p}}\left\|\beta^*-\beta_{new}\right\|^2_2\right) \text{ and }\\
&\epsilon_{n,p,\omega} = c_1 \exp(-c_2\omega) + c_{3} \exp \left[-c_{4}(\log h + 2 \log p)\right] \\
&+ c_{5}\exp \left(-c_6 n\right) + 2\exp (-c_{7}p^\eta \omega^\eta + \log (\omega p h)),
\end{aligned}
$$
for some positive constants $c_1,\ldots,c_7, c_{l},c_{u}$ where $\beta^*$ and $\beta_{new}$ denote the vectorized transition matrices before and after the change point, respectively.}
\end{theorem}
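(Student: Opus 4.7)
The plan is to decompose $\hat T^{(n,\omega)}_{t^*+h}$ into a leading asymptotically Gaussian term $Z^{(n,\omega)}_{t^*+h}$, a deterministic bias of size $\sqrt{\omega/p}\,\|\beta^*-\beta_{new}\|_2^2$, and two lower-order stochastic remainders that I will absorb into $L^{(n,\omega)}_{t^*+h}$ and $(L^{(n,\omega)}_{t^*+h})'$. Since the monitoring window starts at $i=t^*+h+1$, every regressor $X_{i-l}$ (for $l=1,\ldots,h$) appearing in $\hat R^{(n,\omega)}_{t^*+h}$ is generated under the post-change VAR, so $X_i=\sum_l A^*_l X_{i-l}+\varepsilon_i$ throughout the window. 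Writing $\Delta_l:=\hat A_l-A^*_l=\tilde A_l+D_l$ with $\tilde A_l=\hat A_l-A_l$ and $D_l=A_l-A^*_l$, the residual becomes $\sum_l \Delta_l X_{i-l}-\varepsilon_i$, which expands $\hat R^{(n,\omega)}_{t^*+h}$ as a pure-noise piece $\omega^{-1}\sum_i\|\varepsilon_i\|_2^2$, a signal piece $\omega^{-1}\sum_i\|\sum_l\Delta_l X_{i-l}\|_2^2$, and a cross piece $-2\omega^{-1}\sum_i\varepsilon_i^\top\sum_l\Delta_l X_{i-l}$.

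After multiplying by $\sqrt{p\omega/\hat V_n}/p$ and subtracting $\hat\sigma^2_n$, the pure-noise piece plays the exact role of the full statistic in Theorem~\ref{th:1} and contributes the $\mathcal{N}(0,1)$ term $Z^{(n,\omega)}_{t^*+h}$. For the signal piece, I further split $\Delta_l=D_l+\tilde A_l$. The jump part $\omega^{-1}\sum_i\|\sum_l D_l X_{i-l}\|_2^2$ has stationary mean that, by the spectral-density bounds in Assumption~\ref{assumption4}, satisfies
\[
c_l\,\|\beta^*-\beta_{new}\|_2^2\ \le\ \mathbb{E}\Big\|\sum_l D_l X_{i-l}\Big\|_2^2\ \le\ c_u\,\|\beta^*-\beta_{new}\|_2^2,
\]
with $c_l\asymp\mathfrak{m}(f^*_X)$ and $c_u\asymp\mathcal{M}(f^*_X)$; after the outer scaling this is exactly the deterministic sandwich $(c_\bullet/\sqrt{\hat V_n})\sqrt{\omega/p}\|\beta^*-\beta_{new}\|_2^2$ appearing in the statement, and the fluctuation around its mean will go into $L$ and $L'$.

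What then remains is to verify that (i) the contribution of $\tilde A_l$ to the signal piece, (ii) the fluctuation of the jump part around its mean, (iii) the full cross piece, and (iv) the gap $\hat\sigma^2_n-\sigma^2$ are each $o_p(\sqrt{\omega/p}\|\beta^*-\beta_{new}\|_2^2)$. The $\ell_1$-error bound $\|\hat\beta_n-\beta^*\|_2=O_p(\sqrt{s(\log h+2\log p)/n})$ of \cite{basu2015regularized}, combined with the hypothesis $\omega^\eta p^\eta\sqrt{s^3(\log h+2\log p)/n}=o(\|\beta^*-\beta_{new}\|_2)$, controls the $\tilde A_l$ contributions and produces the $c_5 e^{-c_6 n}$ and $c_3 e^{-c_4(\log h+2\log p)}$ pieces of $\epsilon_{n,p,\omega}$. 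The fluctuation $\omega^{-1}\sum_i(\|\varepsilon_i\|_2^2-p\sigma^2)$ is an i.i.d.\ sum of sub-exponential summands and supplies the $c_1 e^{-c_2\omega}$ piece, while consistency $\hat V_n\xrightarrow{p}\operatorname{Var}(\varepsilon^2_{1,1})$ follows from the method-of-moments definition together with the same lasso bound applied to fourth-order residual moments.

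The main obstacle will be the concentration of the two dependent quadratic/bilinear forms $\omega^{-1}\sum_i\|\sum_l D_l X_{i-l}\|_2^2$ and $\omega^{-1}\sum_i\varepsilon_i^\top\sum_l D_l X_{i-l}$. Because the $X_{i-l}$ form a VAR sequence and are therefore dependent, Hanson--Wright cannot be invoked directly; instead, the route I plan to follow is to truncate to a high-probability event on which the spectral norms of the relevant sample Gram matrices are controlled by the spectral-density bounds of Assumption~\ref{assumption4}, and then apply a Bernstein-type inequality for sub-Gaussian chaos on that event. The free parameter $\eta\in(0,1/4)$ reflects the trade-off in this truncation argument and is precisely what produces the $2\exp(-c_7 p^\eta\omega^\eta+\log(\omega p h))$ piece of $\epsilon_{n,p,\omega}$. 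Assembling the four exponential tails with the $o_p$ remainders and the sandwich on the deterministic bias then delivers the stated two-sided probability bound.
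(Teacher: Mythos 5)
Your decomposition into noise, estimation-error, and jump components plus their cross terms is exactly the paper's $T_1$--$T_4$ split, you identify the same leading Gaussian term and the same deterministic sandwich of order $\sqrt{\omega/p}\,\|\beta^*-\beta_{new}\|_2^2$, and you deploy the lasso error bounds and the stated rate conditions on the same remainder terms, so this is essentially the paper's proof. The only differences are organizational: the paper keeps $\|\sum_l(\hat A_l-A_l)X_{i-l}-\varepsilon_i\|_2^2$ grouped as a single block so that Theorem~\ref{th:1} applies verbatim to produce $Z^{(n,\omega)}_{t^*+h}$, it obtains the $c_l,c_u$ sandwich for the jump quadratic form directly from its restricted-eigenvalue Lemmas~\ref{lemma1}--\ref{lemma2} applied to $\hat\Gamma_\omega$ rather than via a fresh mean-plus-fluctuation chaos argument (and that restricted-eigenvalue bound on the window Gram matrix, not the sub-exponential noise sum, is the actual source of the $c_1\exp(-c_2\omega)$ tail), and it reserves the $\eta$-truncation exclusively for the estimation-error-times-jump cross term $T_4$ via a bound on $\max_{i,j',l}|x_{i-l,j'}^2|/(\omega^\eta p^\eta)$.
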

\begin{remark}
    The conditions introduced, which relate to the jump size denoted as $\left\|\beta^*-\beta_{new}\right\|_2$, are fundamental for evaluating the power of our test. Similar conditions are commonly found in the literature on change point detection and have been employed in various studies, such as assumption A3 in \cite{safikhani2022joint} and H2 in \cite{chan2014group}. It is important to highlight that the flexibility of selecting a small value for $\eta$ ensures that these conditions remain valid even in high-dimensional scenarios, as the term $p^\eta$ can be controlled.
\end{remark}
This theorem sheds light on the behavior of the test statistic in the presence of a change point. With a large sample size, our test statistic will have a lower bound that corresponds to a right-shifted standard normal distribution with high probability. The extent of this shift is influenced by the jump size. Furthermore, if we define $\tilde{t}$ as the last observation when the alarm is correctly raised $\left(\text{i.e., }\min\left\{t>t^*-\omega:\,\left|\hat{T}_t^{(n,\omega)}\right|>\Phi(1-\alpha/2)\right\} + \omega\right)$, we can establish the following corollary.

\begin{corollary}\label{coro1}
    \emph{Under the same conditions as outlined in Theorem~\ref{th:2}, for any $k>0$ and a sufficiently large $n$, there exists $\epsilon^{(n)} = o(1)$. If $\frac{c_{l}}{2\sqrt{\operatorname{Var}(\varepsilon^2_{1,1})}}\sqrt{\frac{\omega}{p}}\left\|\beta^*-\beta_{new}\right\|_2^2 > \Phi(1-\alpha/2) + k$, then the following inequalities hold:
    $$
\begin{aligned}
 &P\left(\tilde{t}-t^* \leq \omega + h\right) \geq
P\left(\left|\hat{T}_{t^*+h}^{(n,\omega)}\right|>\Phi(1-\alpha/2)\right) \geq 1 - \epsilon_{n,p,\omega} - \epsilon^{(n)} - \exp(-k^2/2).
\end{aligned}
$$
}
\end{corollary}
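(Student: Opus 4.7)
The plan is to reduce the statement to the lower-bound decomposition of Theorem~\ref{th:2} and then control the three stochastic ingredients $Z_{t^*+h}^{(n,\omega)}$, $L_{t^*+h}^{(n,\omega)}$, and $\hat V_n$ separately. The first inequality $P(\tilde t - t^* \leq \omega+h) \geq P(|\hat T_{t^*+h}^{(n,\omega)}| > \Phi(1-\alpha/2))$ is obtained by a direct set inclusion: by the definition of $\tilde t$ as the first post-$(t^*-\omega)$ alarm time plus $\omega$, the event $\{|\hat T_{t^*+h}^{(n,\omega)}| > \Phi(1-\alpha/2)\}$ forces the first-alarm time to be at most $t^*+h$, so that $\tilde t \leq t^*+h+\omega$.

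For the second inequality, introduce the shorthand $S := \sqrt{\omega/p}\,\|\beta^*-\beta_{new}\|_2^2$ and $c_0 := c_l/(2\sqrt{\operatorname{Var}(\varepsilon_{1,1}^2)})$, so that the hypothesis becomes $c_0 S > \Phi(1-\alpha/2)+k$. Theorem~\ref{th:2} yields an event $\mathcal E_1$ with $P(\mathcal E_1^c) \leq \epsilon_{n,p,\omega}$ on which
\begin{equation*}
\hat T_{t^*+h}^{(n,\omega)} \;\geq\; Z_{t^*+h}^{(n,\omega)} + \frac{c_l}{\sqrt{\hat V_n}}\, S + L_{t^*+h}^{(n,\omega)}.
\end{equation*}
The in-probability convergence $\hat V_n \xrightarrow[]{p} \operatorname{Var}(\varepsilon_{1,1}^2)$ gives an event $\mathcal E_2 := \{\hat V_n \leq 4\operatorname{Var}(\varepsilon_{1,1}^2)\}$ with $P(\mathcal E_2^c) = o(1)$, on which $c_l/\sqrt{\hat V_n} \geq c_0$. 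Since $L_{t^*+h}^{(n,\omega)}/S = o_p(1)$, I pick a sequence $r_n \downarrow 0$ slowly enough that $\mathcal E_3 := \{|L_{t^*+h}^{(n,\omega)}| \leq r_n c_0 S\}$ still satisfies $P(\mathcal E_3^c) = o(1)$. On $\mathcal E_1 \cap \mathcal E_2 \cap \mathcal E_3$,
\begin{equation*}
\hat T_{t^*+h}^{(n,\omega)} \;\geq\; Z_{t^*+h}^{(n,\omega)} + (1-r_n)\,c_0 S \;>\; Z_{t^*+h}^{(n,\omega)} + (1-r_n)\bigl(\Phi(1-\alpha/2)+k\bigr),
\end{equation*}
so that $\hat T_{t^*+h}^{(n,\omega)} > \Phi(1-\alpha/2)$ is implied by $Z_{t^*+h}^{(n,\omega)} > -k + r_n(\Phi(1-\alpha/2)+k)$.

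The final step is the normal-tail estimate. By the asymptotic normality $Z_{t^*+h}^{(n,\omega)} \xrightarrow[]{D} \mathcal N(0,1)$ and continuity of the standard normal CDF (equivalently P\'olya's uniform convergence theorem),
\begin{equation*}
P\!\left(Z_{t^*+h}^{(n,\omega)} \leq -k + r_n(\Phi(1-\alpha/2)+k)\right) \;=\; \Phi\bigl(-k + r_n(\Phi(1-\alpha/2)+k)\bigr) + o(1) \;\longrightarrow\; 1-\Phi(k),
\end{equation*}
and the Chernoff bound $1-\Phi(k) \leq \exp(-k^2/2)$ closes the estimate. Assembling the error terms $P(\mathcal E_1^c) \leq \epsilon_{n,p,\omega}$, $P(\mathcal E_2^c) + P(\mathcal E_3^c) = o(1)$, and the slack between $\Phi(-k+r_n(\Phi(1-\alpha/2)+k))$ and $\Phi(-k)$ (also $o(1)$) into a single $\epsilon^{(n)} = o(1)$ delivers
\begin{equation*}
P\bigl(|\hat T_{t^*+h}^{(n,\omega)}| > \Phi(1-\alpha/2)\bigr) \;\geq\; 1 - \epsilon_{n,p,\omega} - \epsilon^{(n)} - \exp(-k^2/2).
\end{equation*}
The main technical obstacle is the calibration of the slack $r_n$: it must vanish fast enough that the perturbation $r_n(\Phi(1-\alpha/2)+k)$ of the threshold does not degrade the asymptotic tail $1-\Phi(k)$, yet slowly enough that the in-probability rate of $L_{t^*+h}^{(n,\omega)}/S \to 0$ still yields $P(\mathcal E_3) \to 1$. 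Once this calibration is settled, the rest is bookkeeping on top of Theorem~\ref{th:2}.
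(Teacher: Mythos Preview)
Your proposal is correct and follows essentially the same approach as the paper, which states only that the proof ``relies on Theorem~\ref{th:2} and the concentration inequality for the standard normal distribution.'' You have simply made explicit the bookkeeping the paper leaves implicit: the set-inclusion for the first inequality, the handling of the $o_p(1)$ terms $L_{t^*+h}^{(n,\omega)}$ and $\hat V_n - \operatorname{Var}(\varepsilon_{1,1}^2)$ via calibrated events, and the Gaussian tail bound $1-\Phi(k)\le \exp(-k^2/2)$.
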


 As demonstrated in this corollary, when dealing with a substantial sample size and a considerable jump size, the detection delay $\tilde{t}-t^*$ is likely to be upper bounded by $\omega + h$. Furthermore, the power of our test, denoted as $P\left(\left|\hat{T}_{t^*+h}^{n,\omega}\right|>\Phi(1-\alpha/2)\right)$, can approach one as the jump size increases and the sample size grows. The proof for Corollary~\ref{coro1} relies on Theorem~\ref{th:2} and the concentration inequality for the standard normal distribution.

\vspace{-0.5cm}

 \section{CHANGE POINT LOCALIZATION}\label{refine}
Currently, our algorithm can only trigger an alarm when it detects change points within the observations contained in a window of size $\omega$. However, determining the precise location of the change point remains unresolved. This situation is commonly encountered in the literature, see e.g.  \cite{chen2022high,mei2010efficient, xie2013sequential, chan2017optimal}. Due to the limited number of observations available after the change point, accurately pinpointing its exact location proves to be challenging. Consequently, we propose a potential solution to refine the estimated location produced by our algorithm. The idea involves re-executing our algorithm using a smaller pre-specified detection delay of $\omega^\prime$ within the current data window of size $\omega$ after an alarm is triggered. More specifically, upon the alarm being triggered by our algorithm at time $\hat{t}$, we treat the observations from $\hat{t}+1$ to $\hat{t}+\omega$ as the new data set that needs to be monitored. We will then apply our algorithm to this data set, employing a reduced pre-specified detection delay of $\omega^\prime$, and record the resulting estimated location of the change point as the refined estimate ($\hat{\hat{t}}$). In this scenario, our theorems remain valid if $\omega^\prime$ satisfies the conditions for $\omega$ (for example, one can set $\omega^\prime = \omega/5$). Figure~\ref{fig:localillu} provides a visual demonstration of this process.
\begin{figure}[!ht]
\begin{center}
\includegraphics[width=0.8\textwidth]{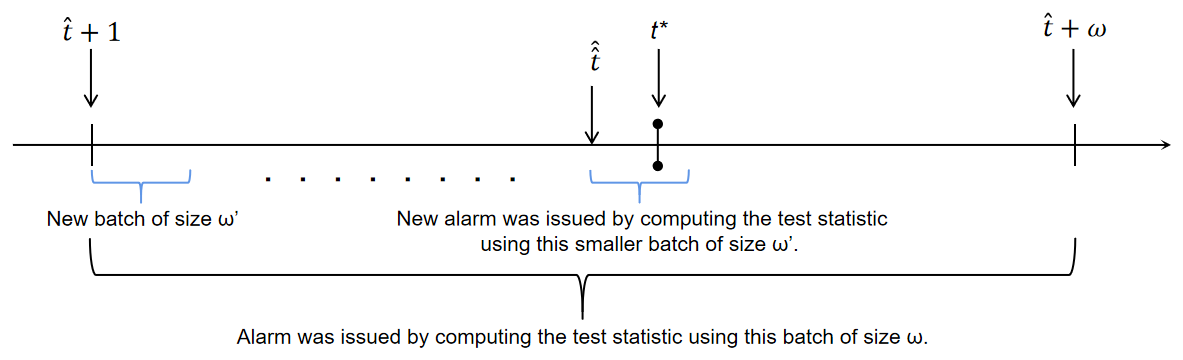}
\end{center}
\caption{Illustration of Refinement}
\label{fig:localillu}
\end{figure}

\vspace{-0.65cm}

As illustrated in Figure~\ref{fig:localillu}, it is likely that the true change point falls within $\omega$ observations from the time the alarm is raised. This situation is the most frequent when the jump size is sufficiently large, as corroborated by Corollary~\ref{coro1}. The refinement algorithm aims to reduce localization error in such cases. Another scenario occurs when a false alarm is raised. As the true change point has not been reached in this situation, the refinement process has a probability of not triggering any alarms within the data window. In such cases, we can consider the refinement process as a confirmation step. If no alarm is raised during the refinement process, we can conclude that the previous alarm was a false alarm. We can then ignore it and continue running the algorithm. This approach reduces the probability of raising false alarms, which is particularly valuable when false alarms are costly in practical applications. Simulation D in Section~\ref{sim:refine} empirically demonstrates the effectiveness of the proposed refinement process. Details on the selection of $\omega'$ are provided in Section~\ref{sim:refine} of the supplementary material.


\vspace{-0.5cm}

\section{MULTIPLE CHANGE POINT SCENARIO}\label{multicp}
In this section, we consider the multiple change points case in which between change points, data is generated by stable and stationary VAR processes with different transition matrices and sub-Gaussian errors. Formally,  if we have a sequence of true change points $\{t^*_0 = 0, t^*_1, \ldots, t^*_{m-1}, t^*_{m} = T\}$, we have that
\begin{equation}\label{eqn:multicp}
    X_t = \sum_{l=1}^h A_l^{(j)} X_{t-l} + \varepsilon_t \,\,\, \text{for}\,\, t^*_{j-1} < t \leq t^*_{j},
\end{equation}
where, without loss of generality, we assume that the VAR processes between change points have the same order $h$ (otherwise, maximum of all lags will be selected as $h$) and the transition matrices between consecutive change points are different (i.e., $\{A_1^{(j)}, \ldots, A_h^{(j)}\} \neq \{A_1^{{(j+1)}}, \ldots, A_h^{{(j+1)}}\}$). Error terms $\varepsilon_t$ are independent zero mean sub-Gaussian random vectors with variance $\sigma^2_{(j)} I_p$ for $t^*_{j-1} < t \leq t^*_{j}$. Our Algorithm~\ref{alg1} can be implemented sequentially to address the proposed detection scenario, with the added assumption that the distance between change points is at least of the order \( s\left(\log (hp^2)\right) \). This requirement ensures a sufficient number of observations are available before the next change point, allowing accurate parameter estimation for monitoring. Previous theoretical results still holds under the same assumptions, if this new assumption is satisfied. This minimum distance condition is common in change point detection literature, see e.g. similar condition in Section 4.1 of \cite{safikhani2022joint} (see also \cite{safikhani2022fast}). The implementation of this sequential detection algorithm is briefly discussed as follows. Once a change point is detected, a new training period is initiated to estimate the new transition matrices and variances. Subsequently, the monitoring period will be based on these new estimations, as illustrated in Figure~\ref{fig:multiillu}. However, false alarms can be particularly costly under this implementation since they may trigger a training period that contains a change point. This situation not only leads to a missed detection but also contaminates the estimations for the upcoming monitoring. To effectively address this issue and reduce the probability of false alarms, we suggest applying the confirmation step, which was introduced in the refinement process in Section~\ref{refine}. It is also recommended to choose a conservative
(small) value for $\alpha$. The satisfactory performance of the sequential detection algorithm is confirmed empirically through synthetic data in Section~\ref{sim:mcp}.
\begin{figure}[!ht]
\begin{center}
\includegraphics[width=0.8\textwidth]{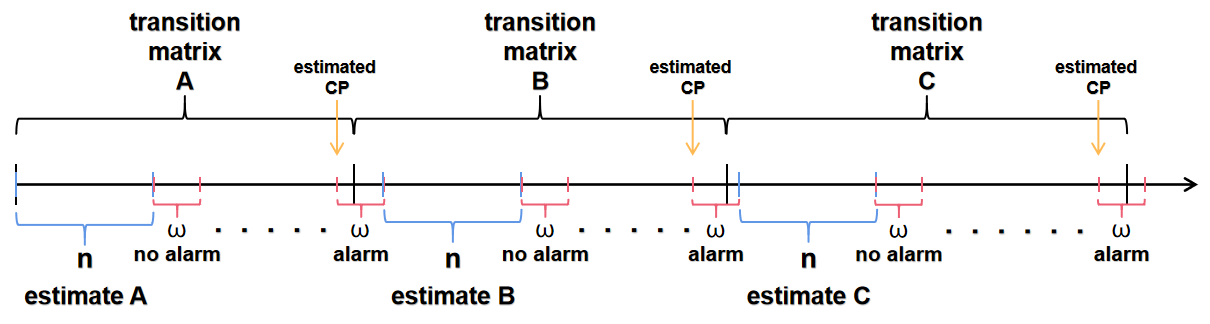}
\end{center}
\caption{Implementation of detection algorithm with multiple change points 
}
\label{fig:multiillu}
\end{figure}

\section{NUMERICAL COMPARISON}\label{section:num_main} Due to space constraints, the assessment of the proposed algorithm with simulated data is presented in Section~\ref{Appendix D} of the supplementary material. Section~\ref{Appendix D} includes analyses of average run length, detection delay, window size selection, refinement effectiveness, performance under multiple change point scenarios, and robustness to variance heterogeneity, time-varying transition matrices, and non-sparse transition matrices. In this section, we compare the empirical detection performance of our method with baseline methods: gstream, ocp, TSL, ocd, Mei, XS and Chan. The gstream method, proposed in \cite{chu2022sequential}, utilizes a k-nearest neighbor approach to sequentially detect change points. The implementation of this algorithm is provided by the authors in \cite{gStream}. The Bayesian online change point detection algorithm, proposed in \cite{adams2007bayesian}, is implemented in the R package ``ocp" \cite{Pagotto2019ocp}. The TSL algorithm, introduced in \cite{qiu2022transparent}, is a non-parametric approach for online change point detection in multivariate time series data. The algorithm is implemented in Fortran by the authors, and we use their provided function for threshold selection. The algorithms, namely ocd, Mei, XS, and Chan introduced in \cite{chen2022high,mei2010efficient, xie2013sequential, chan2017optimal}, respectively, are designed to detect changes in multivariate time series data observed sequentially. They utilize likelihood ratio tests in individual coordinates and aggregate the resulting statistics across scales and coordinates. These algorithms are available in the R package ``ocd." We calculate the threshold for each algorithm using Monte Carlo simulation, as outlined in Section 4.1 of \cite{chen2022high}.

\begin{figure*}[!ht]
\centering
\includegraphics[width=1\textwidth]{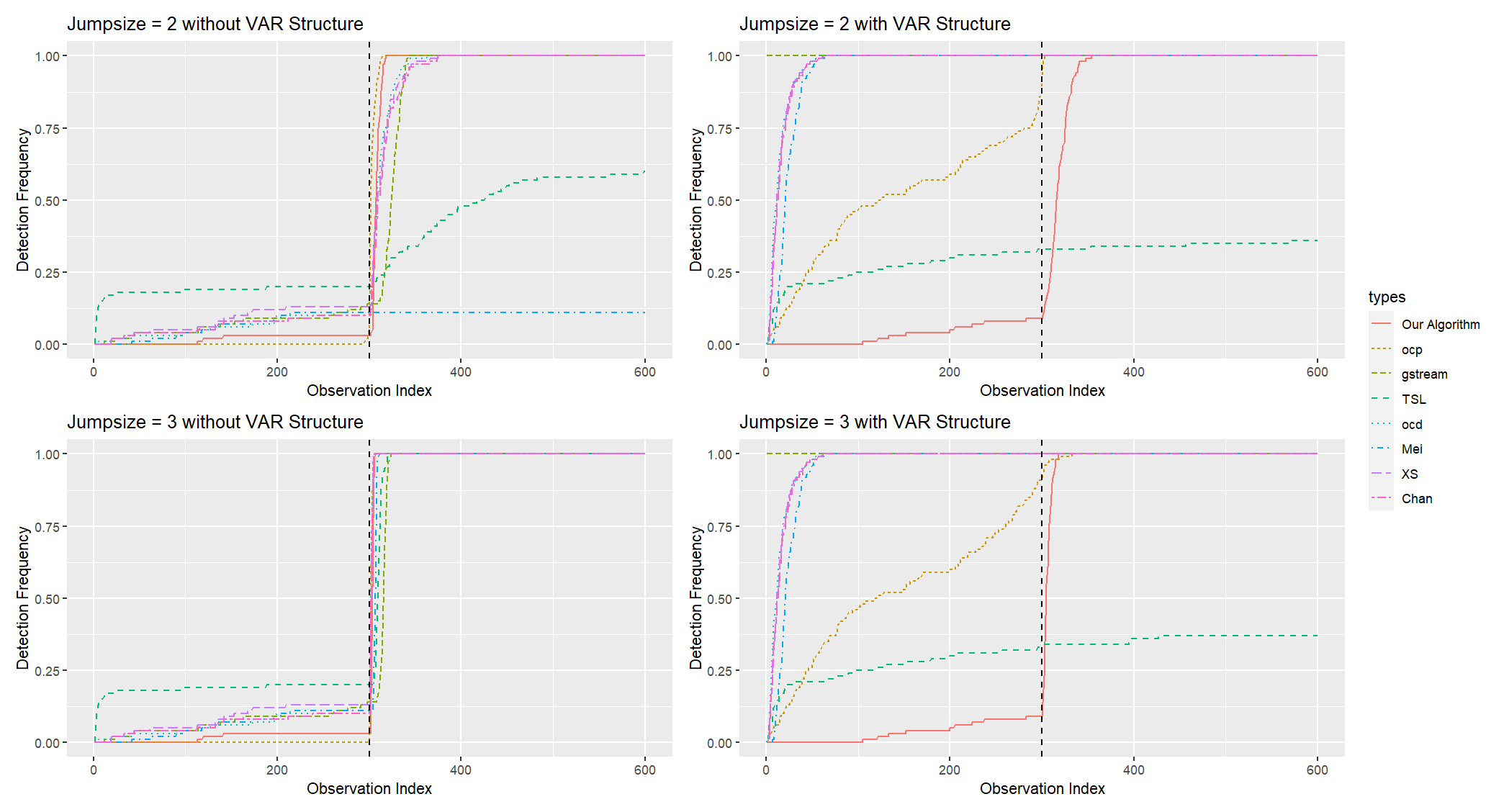}
\caption{Comparison: These plots provide a summary of the detection frequency for all algorithms. The black dashed vertical line represents the location of the true change point. A perfect algorithm would have a detection frequency of zero before the line and reach one immediately after the line.}\label{fig:compare}
\end{figure*}
\begin{table}[!ht]
\caption{Comparison: The average execution times for each algorithm are listed based on simulations conducted on data sets of varying lengths, where the initial $500$ data points are considered as the historical data set.}\label{tbl:exetime}
\begin{center}
\begin{tabular}{|c|c|c|c|c|c|c|c|c|}
\hline
& our algorithm & ocp & gstream & TSL & ocd & Mei & XS & Chan \\ \hline
Length=3500 & 0.11s & 18.26s & 141.59s & 46.58s & 1.84s & 0.68s & 1.34s & 1.33s \\ \hline
Length=4500 & 0.12s & 31.63s & 189.47s & 66.28s & 2.48s & 0.86s & 1.75s & 1.74s \\ \hline
Length=5500 & 0.14s & 48.06s & 237.06s & 90.95s & 3.08s & 1.11s & 2.18s & 2.21s \\ \hline
\end{tabular}
\end{center}
\end{table}

In the comparison, we apply our algorithm with $n = 500$, $\omega = 50$, and $\alpha = 1/1000$, along with the refinement and confirmation processes using a refine size of $1/10$. To ensure a fair comparison, we align most of the hyperparameters for the baseline methods with our choices. For example, we set $L = N0 = 500$ for gstream to match our choice of $n$, and we use $ARL = 1000$ and $\alpha = 1/1000$ for gstream. Similarly, we set $\lambda = 1000$ for ocp and $targetARL = 1000$ for TSL to match our choice of $\alpha$. The target average run length is also set to 1000 for ocd, Mei, XS, and Chan. For the remaining hyperparameters, we either use the recommendations by authors or perform a grid search under the same experimental settings, selecting the hyperparameters that yield the best performance. 

Since baseline algorithms are not specifically designed for data generated by a VAR process, we consider two scenarios in order to ensure a fair comparison. In the first scenario, data before the change point is generated from a constant mean model with independent and identically distributed errors (i.e., the transition matrix used is a zero matrix). In the second scenario, data before the change point is generated from a VAR process with a transition matrix of $0.8 \times I_p$. In both scenarios, the data after the change point is generated by a new VAR process with a different transition matrix. We vary the jump size between the old and new transition matrices to compare the algorithm's sensitivity. In each repetition, we generate a data sequence with a length of $1100$ and a dimension of $5$ (due to space constraints, the case with $p = 100$ is provided in Section~\ref{sim:hdcompare} of the supplementary material). The first $500$ observations are used as historical data, and a change point is located at $800$. After the first alarm is raised in each repetition, we terminate the algorithm and record the location of the last data point read at that time as $\tilde{t}$ (excluding historical data). We then construct an array consisting of $\tilde{t}$ zeros followed by $600 - \tilde{t}$ ones. This process is repeated for 100 repetitions, and we average the resulting arrays to obtain the detection frequency for each algorithm. A desirable algorithm should have a detection frequency of zero before $300$ and a detection frequency of one after $300$. The results are included in the Figure~\ref{fig:compare}.

In the left figures, when data is generated without a VAR structure, our algorithm performs comparably to the ocp algorithm and outperforms the other baseline methods. Our algorithm maintains a low false alarm rate (small detection frequency before $300$) and quickly detects the change (detection frequency reaches one shortly after $300$). However, when we add the VAR structure to the data, all baseline methods suffer from correlations and are unable to maintain the same low false alarm rate as before, as shown in the right figures. In contrast, our algorithm maintains similar performance as before, with only a slight increase in false alarm rate and detection delay.

Furthermore, we perform a brief simulation to assess the execution times of each algorithm. The setup and hyperparameter choices remain consistent with those in the comparison. We ensure that all algorithms continuously monitor the entire dataset with various sizes without halting even when an alarm is triggered, and we record the execution times. This procedure is repeated ten times to calculate the average execution times, and the outcomes are summarized in Table~\ref{tbl:exetime}.

\vspace{-0.5cm}

\section{REAL DATA EXPERIMENTS}\label{realdata}
We evaluate the effectiveness of our approach (VAR\_cpDetect\_Online) and contrast its performance with competing methods in two real-world scenarios: S\&P 500 data and EEG data. TSL, ocd, Mei, XS, and Chan are not suitable for this experimental setups as attempting to use them yielded unsatisfactory results, so we have excluded their outcomes from this section. The results for EEG data are deferred to Section~\ref{eeg} in the supplementary material to save space.

\vspace{-0.5cm}

\subsection{S\&P 500 Data}  
This dataset consists of adjusted daily closing prices for 186 stocks in the S\&P 500 from 2004-02-06 to 2016-03-02. Since closing prices are non-stationary, we apply the data cleansing approach from \cite{keshavarz2020sequential} to compute daily log returns for each stock, yielding a dataset with 186 columns (stocks) and 3037 rows (trading days). The first 200 data points (up to 2004-11-22) are designated as historical data. For our method, we set \(\omega = 22\) to match the typical number of trading days in a month and \(\alpha = 1/5000\). Hyperparameters for competing algorithms were selected similarly, as described in Section~\ref{section:num_main}. All methods were applied continuously to the entire dataset, without pausing when alarms were triggered. The locations of triggered alarms are documented in Section~\ref{addsp500} of the supplementary material to save space. The aim of this experiment is to detect abnormal states indicated by data deviations from the baseline period. Applying the multiple change point detection approach discussed in Section~\ref{multicp} is unsuitable here, as the distance between change points does not meet required assumptions. Instead, we estimate the starting points of alarm clusters. Consecutive alarms within a window of $\omega$ observations are grouped into the same cluster, as they often stem from overlapping data segments. This suggests a high probability of a shared underlying abnormality. For each cluster, the onset is estimated using the refinement method in Section~\ref{refine}. Specifically, if an alarm's distance from the previous alarm exceeds \(\omega\), it is treated as a new cluster's start, and the refinement procedure is used to estimate the change point. If not, the alarm is considered part of the existing cluster, and change point estimation is not performed. The proposed algorithm raised alarms that formed 13 distinct clusters, with the estimated starting points of these clusters treated as change points, as shown in Table~\ref{tbl:sp500}. This table also highlights the historical events likely influencing the S\&P500 index during these periods. In a previous study \citep{keshavarz2020sequential}, four alarm clusters were identified. Our algorithm aligns with these findings, with the starting points of these clusters indicated by an asterisk (*) in the table. Additionally, our algorithm identifies further periods of deviation from the baseline in the S\&P 500 index. The average detection delay for our algorithm is 10.6, below the pre-specified threshold of $\omega = 22$, meaning the algorithm triggers alarms after approximately 10 additional observations to mark the start of each cluster. The ocp method missed the change points identified in the previous study in October 2007 and December 2010, while the gstream method missed the change point in August 2014. Moreover, the gstream method raised excessive alarms, covering about 54.7\% of trading days, which limits its practicality for monitoring abnormal behavior. The execution times for the experiments were 50.63 seconds for our method, 227.64 seconds for ocp, and 50.03 seconds for gstream.

\begin{table}[!ht]
\centering
\begin{tabular}{|c|l|}
\hline
\textbf{Date}       & \textbf{Possible Real-World Event}                                                                     \\ \hline
2005-10-17          & Concerns about the housing bubble and economic slowdown.                                       \\ \hline
2006-07-13          & Fed hints at pausing rate hikes amid inflation and housing worries.                            \\ \hline
2007-07-19          & Early signs of the subprime mortgage crisis.                                                   \\ \hline
2007-10-12*          & U.S. housing market declines significantly.                                                    \\ \hline
2010-01-14          & Concerns over slow recovery and Eurozone debt crisis.                                           \\ \hline
2010-04-20          & BP oil spill raises environmental and economic concerns.                                        \\ \hline
2010-12-22*          & Strong holiday sales, but Eurozone concerns linger.                                             \\ \hline
2011-07-26*          & U.S. debt ceiling crisis and potential government default.                                      \\ \hline
2012-05-24          & Eurozone debt crisis, fears of Greece exiting the euro.                                         \\ \hline
2013-12-20          & Fed announces tapering of bond-buying program.                                                  \\ \hline
2014-08-21*          & Market turbulence from geopolitical tensions and growth concerns.                               \\ \hline
2015-08-14          & China devalues its currency, sparking global slowdown fears.                                    \\ \hline
2015-12-10          & Rising volatility ahead of expected Fed rate hike.                                              \\ \hline
\end{tabular}
\caption{Real-world events corresponding to changes in the S\&P500 index.}\label{tbl:sp500}
\end{table}

\section{CONCLUDING REMARKS}\label{future}
In this paper, we introduced an online change point detection algorithm specifically designed for high-dimensional VAR models. This algorithm can effectively detect changes in higher-order structures, such as cross correlations. The algorithm's test statistic utilizes one-step-ahead prediction errors over a moving window of data. We demonstrated the asymptotic normality of our proposed test statistic under relatively mild conditions, which can encompass high-dimensional scenarios where the number of parameters exceeds the sample size. Furthermore, we showed that the test's power approaches one with an increase in the jump size, and this was corroborated through numerical experiments. With respect to time complexity, we empirically demonstrated that our algorithm has a shorter computation time compared to competing algorithms. Our algorithm is currently tailored for data generated by VAR models with independent errors. Expanding its applicability to diverse error term structures is a challenging avenue for future research, given potential identifiability issues arising from general covariance structures. Further exploration includes relaxing some assumptions, such as substituting the sub-Gaussian distribution assumption on error terms with a more heavy-tail distribution assumption such as the sub-Weibull distribution. Investigating alternative forms of transition matrices, such as low rank or low rank combined with sparse structures \cite{basu2019low,bai2023multiple}, is another promising area for research. Integrating the sequential updating technique (briefly discussed in Section~\ref{app:sequpdate} of the supplementary material) to improve transition matrix estimation when no alarm has been raised is an important yet challenging area for future research.

\bibhang=1.7pc
\bibsep=2pt
\fontsize{9}{14pt plus.8pt minus .6pt}\selectfont
\renewcommand\bibname{\large \bf References}
\expandafter\ifx\csname
natexlab\endcsname\relax\def\natexlab#1{#1}\fi
\expandafter\ifx\csname url\endcsname\relax
  \def\url#1{\texttt{#1}}\fi
\expandafter\ifx\csname urlprefix\endcsname\relax\def\urlprefix{URL}\fi

\bibliographystyle{chicago}      
\bibliography{bib}   

\begin{thebibliography}{}

\bibitem[\protect\citeauthoryear{Adams and MacKay}{Adams and MacKay}{2007}]{adams2007bayesian}
Adams, R.~P. and D.~J. MacKay (2007).
\newblock Bayesian online changepoint detection.
\newblock {\em arXiv preprint arXiv:0710.3742\/}.

\bibitem[\protect\citeauthoryear{Aminikhanghahi and Cook}{Aminikhanghahi and Cook}{2017}]{aminikhanghahi2017survey}
Aminikhanghahi, S. and D.~J. Cook (2017).
\newblock A survey of methods for time series change point detection.
\newblock {\em Knowledge and information systems\/}~{\em 51\/}(2), 339--367.

\bibitem[\protect\citeauthoryear{Aminikhanghahi, Wang, and Cook}{Aminikhanghahi et~al.}{2018}]{aminikhanghahi2018real}
Aminikhanghahi, S., T.~Wang, and D.~J. Cook (2018).
\newblock Real-time change point detection with application to smart home time series data.
\newblock {\em IEEE Transactions on Knowledge and Data Engineering\/}~{\em 31\/}(5), 1010--1023.

\bibitem[\protect\citeauthoryear{Bai, Safikhani, and Michailidis}{Bai et~al.}{2020}]{bai2020multiple}
Bai, P., A.~Safikhani, and G.~Michailidis (2020).
\newblock Multiple change points detection in low rank and sparse high dimensional vector autoregressive models.
\newblock {\em IEEE Transactions on Signal Processing\/}~{\em 68}, 3074--3089.

\bibitem[\protect\citeauthoryear{Bai, Safikhani, and Michailidis}{Bai et~al.}{2023}]{bai2023multiple}
Bai, P., A.~Safikhani, and G.~Michailidis (2023).
\newblock Multiple change point detection in reduced rank high dimensional vector autoregressive models.
\newblock {\em Journal of the American Statistical Association\/}, 1--17.

\bibitem[\protect\citeauthoryear{Bai and Safikhani}{Bai and Safikhani}{2023}]{bai2023unified}
Bai, Y. and A.~Safikhani (2023).
\newblock A unified framework for change point detection in high-dimensional linear models.
\newblock {\em Statistica Sinica\/}~{\em 33}, 1--28.

\bibitem[\protect\citeauthoryear{Bardwell, Fearnhead, Eckley, Smith, and Spott}{Bardwell et~al.}{2019}]{bardwell2019most}
Bardwell, L., P.~Fearnhead, I.~A. Eckley, S.~Smith, and M.~Spott (2019).
\newblock Most recent changepoint detection in panel data.
\newblock {\em Technometrics\/}~{\em 61\/}(1), 88--98.

\bibitem[\protect\citeauthoryear{Basu, Li, and Michailidis}{Basu et~al.}{2019}]{basu2019low}
Basu, S., X.~Li, and G.~Michailidis (2019).
\newblock Low rank and structured modeling of high-dimensional vector autoregressions.
\newblock {\em IEEE Transactions on Signal Processing\/}~{\em 67\/}(5), 1207--1222.

\bibitem[\protect\citeauthoryear{Basu and Michailidis}{Basu and Michailidis}{2015}]{basu2015regularized}
Basu, S. and G.~Michailidis (2015).
\newblock Regularized estimation in sparse high-dimensional time series models.
\newblock {\em The Annals of Statistics\/}~{\em 43\/}(4), 1535--1567.

\bibitem[\protect\citeauthoryear{Billingsley}{Billingsley}{2013}]{billingsley2013convergence}
Billingsley, P. (2013).
\newblock {\em Convergence of probability measures}.
\newblock John Wiley \& Sons.

\bibitem[\protect\citeauthoryear{Cabrieto, Tuerlinckx, Kuppens, Grassmann, and Ceulemans}{Cabrieto et~al.}{2017}]{cabrieto2017detecting}
Cabrieto, J., F.~Tuerlinckx, P.~Kuppens, M.~Grassmann, and E.~Ceulemans (2017).
\newblock Detecting correlation changes in multivariate time series: A comparison of four non-parametric change point detection methods.
\newblock {\em Behavior research methods\/}~{\em 49}, 988--1005.

\bibitem[\protect\citeauthoryear{Chan}{Chan}{2017}]{chan2017optimal}
Chan, H.~P. (2017).
\newblock Optimal sequential detection in multi-stream data.
\newblock {\em Annals of Statistics\/}.

\bibitem[\protect\citeauthoryear{Chan, Yau, and Zhang}{Chan et~al.}{2014}]{chan2014group}
Chan, N.~H., C.~Y. Yau, and R.-M. Zhang (2014).
\newblock Group lasso for structural break time series.
\newblock {\em Journal of the American Statistical Association\/}~{\em 109\/}(506), 590--599.

\bibitem[\protect\citeauthoryear{Chen}{Chen}{2019}]{chen2019sequential}
Chen, H. (2019).
\newblock Sequential change-point detection based on nearest neighbors.
\newblock {\em The Annals of Statistics\/}~{\em 47\/}(3), 1381--1407.

\bibitem[\protect\citeauthoryear{Chen and Chu}{Chen and Chu}{2019}]{gStream}
Chen, H. and L.~Chu (2019).
\newblock {\em gStream: Graph-Based Sequential Change-Point Detection for Streaming Data}.
\newblock R package version 0.2.0.

\bibitem[\protect\citeauthoryear{Chen, Wang, and Samworth}{Chen et~al.}{2022}]{chen2022high}
Chen, Y., T.~Wang, and R.~J. Samworth (2022).
\newblock High-dimensional, multiscale online changepoint detection.
\newblock {\em Journal of the Royal Statistical Society Series B: Statistical Methodology\/}~{\em 84\/}(1), 234--266.

\bibitem[\protect\citeauthoryear{Choi, Martin, Morris, and Lee}{Choi et~al.}{2006}]{choi2006adaptive}
Choi, S.~W., E.~B. Martin, A.~J. Morris, and I.-B. Lee (2006).
\newblock Adaptive multivariate statistical process control for monitoring time-varying processes.
\newblock {\em Industrial \& engineering chemistry research\/}~{\em 45\/}(9), 3108--3118.

\bibitem[\protect\citeauthoryear{Chu and Chen}{Chu and Chen}{2022}]{chu2022sequential}
Chu, L. and H.~Chen (2022).
\newblock Sequential change-point detection for high-dimensional and non-euclidean data.
\newblock {\em IEEE Transactions on Signal Processing\/}~{\em 70}, 4498--4511.

\bibitem[\protect\citeauthoryear{De~Ketelaere, Hubert, and Schmitt}{De~Ketelaere et~al.}{2015}]{de2015overview}
De~Ketelaere, B., M.~Hubert, and E.~Schmitt (2015).
\newblock Overview of pca-based statistical process-monitoring methods for time-dependent, high-dimensional data.
\newblock {\em Journal of Quality Technology\/}~{\em 47\/}(4), 318--335.

\bibitem[\protect\citeauthoryear{Deshpande, Javanmard, and Mehrabi}{Deshpande et~al.}{2023}]{deshpande2023online}
Deshpande, Y., A.~Javanmard, and M.~Mehrabi (2023).
\newblock Online debiasing for adaptively collected high-dimensional data with applications to time series analysis.
\newblock {\em Journal of the American Statistical Association\/}~{\em 118\/}(542), 1126--1139.

\bibitem[\protect\citeauthoryear{Dette and G{\"o}smann}{Dette and G{\"o}smann}{2020}]{dette2020likelihood}
Dette, H. and J.~G{\"o}smann (2020).
\newblock A likelihood ratio approach to sequential change point detection for a general class of parameters.
\newblock {\em Journal of the American Statistical Association\/}~{\em 115\/}(531), 1361--1377.

\bibitem[\protect\citeauthoryear{Goebel, Roebroeck, Kim, and Formisano}{Goebel et~al.}{2003}]{goebel2003investigating}
Goebel, R., A.~Roebroeck, D.-S. Kim, and E.~Formisano (2003).
\newblock Investigating directed cortical interactions in time-resolved fmri data using vector autoregressive modeling and granger causality mapping.
\newblock {\em Magnetic resonance imaging\/}~{\em 21\/}(10), 1251--1261.

\bibitem[\protect\citeauthoryear{G{\"o}smann, Stoehr, Heiny, and Dette}{G{\"o}smann et~al.}{2022}]{gosmann2022sequential}
G{\"o}smann, J., C.~Stoehr, J.~Heiny, and H.~Dette (2022).
\newblock Sequential change point detection in high dimensional time series.
\newblock {\em Electronic Journal of Statistics\/}~{\em 16\/}(1), 3608--3671.

\bibitem[\protect\citeauthoryear{Hawkins and Zamba}{Hawkins and Zamba}{2005}]{hawkins2005statistical}
Hawkins, D.~M. and K.~Zamba (2005).
\newblock Statistical process control for shifts in mean or variance using a changepoint formulation.
\newblock {\em Technometrics\/}~{\em 47\/}(2), 164--173.

\bibitem[\protect\citeauthoryear{He and Yang}{He and Yang}{2008}]{he2008variable}
He, X.~B. and Y.~P. Yang (2008).
\newblock Variable mwpca for adaptive process monitoring.
\newblock {\em Industrial \& Engineering Chemistry Research\/}~{\em 47\/}(2), 419--427.

\bibitem[\protect\citeauthoryear{Jandhyala, Fotopoulos, MacNeill, and Liu}{Jandhyala et~al.}{2013}]{jandhyala2013inference}
Jandhyala, V., S.~Fotopoulos, I.~MacNeill, and P.~Liu (2013).
\newblock Inference for single and multiple change-points in time series.
\newblock {\em Journal of Time Series Analysis\/}~{\em 34\/}(4), 423--446.

\bibitem[\protect\citeauthoryear{Keshavarz, Michailidis, and Atchad{\'e}}{Keshavarz et~al.}{2020}]{keshavarz2020sequential}
Keshavarz, H., G.~Michailidis, and Y.~Atchad{\'e} (2020).
\newblock Sequential change-point detection in high-dimensional gaussian graphical models.
\newblock {\em The Journal of Machine Learning Research\/}~{\em 21\/}(1), 3125--3181.

\bibitem[\protect\citeauthoryear{Ku, Storer, and Georgakis}{Ku et~al.}{1995}]{ku1995disturbance}
Ku, W., R.~H. Storer, and C.~Georgakis (1995).
\newblock Disturbance detection and isolation by dynamic principal component analysis.
\newblock {\em Chemometrics and intelligent laboratory systems\/}~{\em 30\/}(1), 179--196.

\bibitem[\protect\citeauthoryear{Loh and Wainwright}{Loh and Wainwright}{2012}]{loh2012high}
Loh, P.-L. and M.~J. Wainwright (2012).
\newblock High-dimensional regression with noisy and missing data: Provable guarantees with nonconvexity.
\newblock {\em The Annals of Statistics\/}~{\em 40\/}(3), 1637--1664.

\bibitem[\protect\citeauthoryear{L{\"u}tkepohl}{L{\"u}tkepohl}{2005}]{lutkepohl2005new}
L{\"u}tkepohl, H. (2005).
\newblock {\em New introduction to multiple time series analysis}.
\newblock Springer Science \& Business Media.

\bibitem[\protect\citeauthoryear{Mei}{Mei}{2010}]{mei2010efficient}
Mei, Y. (2010).
\newblock Efficient scalable schemes for monitoring a large number of data streams.
\newblock {\em Biometrika\/}~{\em 97\/}(2), 419--433.

\bibitem[\protect\citeauthoryear{Messner and Pinson}{Messner and Pinson}{2019}]{messner2019online}
Messner, J.~W. and P.~Pinson (2019).
\newblock Online adaptive lasso estimation in vector autoregressive models for high dimensional wind power forecasting.
\newblock {\em International Journal of Forecasting\/}~{\em 35\/}(4), 1485--1498.

\bibitem[\protect\citeauthoryear{Montgomery}{Montgomery}{2019}]{montgomery2019introduction}
Montgomery, D.~C. (2019).
\newblock {\em Introduction to statistical quality control}.
\newblock John wiley \& sons.

\bibitem[\protect\citeauthoryear{Montgomery and Runger}{Montgomery and Runger}{2010}]{montgomery2010applied}
Montgomery, D.~C. and G.~C. Runger (2010).
\newblock {\em Applied statistics and probability for engineers}.
\newblock John Wiley \& Sons.

\bibitem[\protect\citeauthoryear{Ombao, Von~Sachs, and Guo}{Ombao et~al.}{2005}]{ombao2005slex}
Ombao, H., R.~Von~Sachs, and W.~Guo (2005).
\newblock Slex analysis of multivariate nonstationary time series.
\newblock {\em Journal of the American Statistical Association\/}~{\em 100\/}(470), 519--531.

\bibitem[\protect\citeauthoryear{Page}{Page}{1954}]{page1954continuous}
Page, E.~S. (1954).
\newblock Continuous inspection schemes.
\newblock {\em Biometrika\/}~{\em 41\/}(1/2), 100--115.

\bibitem[\protect\citeauthoryear{Pagotto}{Pagotto}{2019}]{Pagotto2019ocp}
Pagotto, A. (2019).
\newblock {\em ocp: Bayesian Online Changepoint Detection}.
\newblock R package version 0.1.1.

\bibitem[\protect\citeauthoryear{Pan and Jarrett}{Pan and Jarrett}{2012}]{pan2012and}
Pan, X. and J.~E. Jarrett (2012).
\newblock Why and how to use vector autoregressive models for quality control: the guideline and procedures.
\newblock {\em Quality \& Quantity\/}~{\em 46\/}(3), 935--948.

\bibitem[\protect\citeauthoryear{Qiu}{Qiu}{2013}]{qiu2013introduction}
Qiu, P. (2013).
\newblock {\em Introduction to statistical process control}.
\newblock CRC press.

\bibitem[\protect\citeauthoryear{Qiu and Xie}{Qiu and Xie}{2022}]{qiu2022transparent}
Qiu, P. and X.~Xie (2022).
\newblock Transparent sequential learning for statistical process control of serially correlated data.
\newblock {\em Technometrics\/}~{\em 64\/}(4), 487--501.

\bibitem[\protect\citeauthoryear{Roberts}{Roberts}{2000}]{roberts2000control}
Roberts, S. (2000).
\newblock Control chart tests based on geometric moving averages.
\newblock {\em Technometrics\/}~{\em 42\/}(1), 97--101.

\bibitem[\protect\citeauthoryear{Ross}{Ross}{2014}]{ross2014first}
Ross, S.~M. (2014).
\newblock {\em A first course in probability}.
\newblock Pearson.

\bibitem[\protect\citeauthoryear{Rosser~Jr and Sheehan}{Rosser~Jr and Sheehan}{1995}]{rosser1995vector}
Rosser~Jr, J.~B. and R.~G. Sheehan (1995).
\newblock A vector autoregressive model of the saudi arabian economy.
\newblock {\em Journal of Economics and Business\/}~{\em 47\/}(1), 79--90.

\bibitem[\protect\citeauthoryear{Routtenberg and Xie}{Routtenberg and Xie}{2017}]{routtenberg2017pmu}
Routtenberg, T. and Y.~Xie (2017).
\newblock Pmu-based online change-point detection of imbalance in three-phase power systems.
\newblock In {\em 2017 IEEE Power \& Energy Society Innovative Smart Grid Technologies Conference (ISGT)}, pp.\  1--5. IEEE.

\bibitem[\protect\citeauthoryear{Safikhani, Bai, and Michailidis}{Safikhani et~al.}{2022}]{safikhani2022fast}
Safikhani, A., Y.~Bai, and G.~Michailidis (2022).
\newblock Fast and scalable algorithm for detection of structural breaks in big var models.
\newblock {\em Journal of Computational and Graphical Statistics\/}~{\em 31\/}(1), 176--189.

\bibitem[\protect\citeauthoryear{Safikhani and Shojaie}{Safikhani and Shojaie}{2022}]{safikhani2022joint}
Safikhani, A. and A.~Shojaie (2022).
\newblock Joint structural break detection and parameter estimation in high-dimensional nonstationary var models.
\newblock {\em Journal of the American Statistical Association\/}~{\em 117\/}(537), 251--264.

\bibitem[\protect\citeauthoryear{Shewhart}{Shewhart}{1930}]{shewhart1930economic}
Shewhart, W.~A. (1930).
\newblock Economic quality control of manufactured product 1.
\newblock {\em Bell System Technical Journal\/}~{\em 9\/}(2), 364--389.

\bibitem[\protect\citeauthoryear{Van~der Vaart}{Van~der Vaart}{2000}]{van2000asymptotic}
Van~der Vaart, A.~W. (2000).
\newblock {\em Asymptotic statistics}, Volume~3.
\newblock Cambridge university press.

\bibitem[\protect\citeauthoryear{Vazzoler}{Vazzoler}{2021}]{sparsevar}
Vazzoler, S. (2021).
\newblock {\em sparsevar: Sparse VAR/VECM Models Estimation}.
\newblock R package version 0.1.0.

\bibitem[\protect\citeauthoryear{Vladimirova, Girard, Nguyen, and Arbel}{Vladimirova et~al.}{2020}]{vladimirova2020sub}
Vladimirova, M., S.~Girard, H.~Nguyen, and J.~Arbel (2020).
\newblock Sub-weibull distributions: Generalizing sub-gaussian and sub-exponential properties to heavier tailed distributions.
\newblock {\em Stat\/}~{\em 9\/}(1), e318.

\bibitem[\protect\citeauthoryear{Wang, Yu, Rinaldo, and Willett}{Wang et~al.}{2019}]{wang2019localizing}
Wang, D., Y.~Yu, A.~Rinaldo, and R.~Willett (2019).
\newblock Localizing changes in high-dimensional vector autoregressive processes.
\newblock {\em arXiv preprint arXiv:1909.06359\/}.

\bibitem[\protect\citeauthoryear{Wong, Li, and Tewari}{Wong et~al.}{2020}]{wong2020lasso}
Wong, K.~C., Z.~Li, and A.~Tewari (2020).
\newblock Lasso guarantees for $\beta$-mixing heavy-tailed time series.
\newblock {\em The Annals of Statistics\/}~{\em 48\/}(2), 1124--1142.

\bibitem[\protect\citeauthoryear{Xie and Siegmund}{Xie and Siegmund}{2013}]{xie2013sequential}
Xie, Y. and D.~Siegmund (2013).
\newblock Sequential multi-sensor change-point detection.
\newblock In {\em 2013 Information Theory and Applications Workshop (ITA)}, pp.\  1--20. IEEE.

\bibitem[\protect\citeauthoryear{Xu, Mei, and Moustakides}{Xu et~al.}{2021}]{xu2021optimum}
Xu, Q., Y.~Mei, and G.~V. Moustakides (2021).
\newblock Optimum multi-stream sequential change-point detection with sampling control.
\newblock {\em IEEE Transactions on Information Theory\/}~{\em 67\/}(11), 7627--7636.

\bibitem[\protect\citeauthoryear{Zhang, Chen, and Wu}{Zhang et~al.}{2020}]{zhang2020spatial}
Zhang, C., N.~Chen, and J.~Wu (2020).
\newblock Spatial rank-based high-dimensional monitoring through random projection.
\newblock {\em Journal of Quality Technology\/}~{\em 52\/}(2), 111--127.

\bibitem[\protect\citeauthoryear{Zhang, Wei, Yan, Zhou, and Pani}{Zhang et~al.}{2017}]{zhang2017online}
Zhang, J., Z.~Wei, Z.~Yan, M.~Zhou, and A.~Pani (2017).
\newblock Online change-point detection in sparse time series with application to online advertising.
\newblock {\em IEEE Transactions on Systems, Man, and Cybernetics: Systems\/}~{\em 49\/}(6), 1141--1151.

\bibitem[\protect\citeauthoryear{Zhang, Xie, and Xie}{Zhang et~al.}{2023}]{zhang2023spectral}
Zhang, M., L.~Xie, and Y.~Xie (2023).
\newblock Spectral cusum for online network structure change detection.
\newblock {\em IEEE Transactions on Information Theory\/}.

\end{thebibliography}
 \newpage
 This supplementary material provides proofs of lemmas and theorems, as well as additional simulations and real data experiments. Definitions and lemmas used in the proofs are introduced in Sections~\ref{Appendix A} and~\ref{Appendix B}, respectively. The proofs of the theorems appear in Section~\ref{Appendix C}. Additional simulations are presented in Section~\ref{Appendix D}, and further real data analysis is provided in Section~\ref{addreal}. The use of a sequential updating technique for transition matrix estimation is discussed in Section~\ref{app:sequpdate}, and post-change analysis is provided in Section~\ref{postchangeanalysis}. 
 
\appendix

\section{DEFINITIONS}\label{Appendix A}
In this appendix, we provide definitions for symbols and terms utilized throughout the appendices.
\begin{definition}\label{def1}
\emph{For any $\gamma > 0$, a random variable $X$ satisfies any of the following equivalent properties is called a sub-Weibull ($\gamma$) random variable:
\begin{enumerate}
    \item $\mathbb{P}(|X|>t) \leq 2 \exp \left\{-\left(t / K_{1}\right)^{\gamma}\right\} \quad \text{ for all } t \geq 0 $,
    \item $\left(\mathbb{E}|X|^{p}\right)^{1 / p} \leq K_{2} p^{1 / \gamma} \quad \text{ for all } p \geq 1 \wedge \gamma,$
    \item $\mathbb{E}\left[\exp \left(|X| / K_{3}\right)^{\gamma}\right] \leq 2 .$
\end{enumerate}
The constants $K_1$, $K_2$ and $K_3$ differ from each other at most by a constant depending only on $\gamma$. The sub-Gaussian random variable is a special case of a sub-Weibull random variable with $\gamma=2$. The sub-Weibull norm of X is defined as $\|X\|_{\psi_{\gamma}}:=\sup _{p \geq 1}\left(\mathbb{E}|X|^{p}\right)^{1 / p} p^{-1 / \gamma}.$ }
\end{definition}
Definition~\ref{def1} is a straightforward combination of Lemma 5 and Definition 3 in \cite{wong2020lasso}.

\begin{definition}\label{def2}
\emph{For any $\gamma > 0$, a random vector $X \in \mathbb{R}^{p}$ is said to be a sub-Weibull $(\gamma)$ random vector if all of its one-dimensional projections are sub-Weibull $(\gamma)$ random variables. We define the sub-Weibull $(\gamma)$ norm of a random vector as
\[
\|X\|_{\psi_{\gamma}}:=\sup _{v \in S^{p-1}}\left\|v^{\prime} X\right\|_{\psi_{\gamma}},
\]
where $S^{p-1}$ is the unit sphere in $\mathbb{R}^{p}$.}
\end{definition}
Definition~\ref{def2} is from Definition 4 in \cite{wong2020lasso}.

\begin{definition}\label{def3}
\emph{Every VAR(h) process can be rewritten into a VAR(1) form that is $\tilde{X}_{t}=\tilde{A} \tilde{X}_{t-1}+\tilde{\varepsilon}_{t}$, and $\tilde{X}_t$ is stable if and only if $X_t$ is stable.}
\end{definition}

Definition~\ref{def3} is from \cite{lutkepohl2005new}. We omit the details here to save space; for more information, please refer to page 15 in \cite{lutkepohl2005new}.

\section{LEMMAS AND PROOFS}\label{Appendix B}
In this appendix, we introduce several lemmas that will be employed in the proof of the Theorems. We introduce the notation $D^l \coloneqq \hat{A}_l - A_l$ and define $d_{ij}^l$ as the (i, j)-th element of $D^l$. We utilize the symbol $N$ to represent a generic sample size, as opposed to exclusively using $n$ or $\omega$ in the subsequent lemmas. These lemmas will be applied with $N = n$ or $\omega$ during the proof of the Theorems.

\begin{lemma}\label{lemma1}
\emph{Consider a random realization $\left\{X_{-h+1}, \ldots, X_{N}\right\}$ generated from a $\operatorname{VAR(h)}$ process with Assumption~\ref{assumption1}-\ref{assumption4} satisfied. Then, there exist constants $c_{i}>0$ such that for all $N \succsim \max \left\{\nu_{LB}^{2}, 1\right\} s(2\log p+\log h)$, with probability at least $1-c_{1} \exp \left(-c_{2} N \min \left\{\nu_{LB}^{-2}, 1\right\}\right)$, 
\[
\theta^\prime\hat{\Gamma}_N\theta \geq \alpha_{LB}\|\theta\|^2_2 - \tau_{LB}^N\|\theta\|^2_1, \;\;\text{ for all }\theta\in\mathbb{R}^{hp^2}
\]
where
\[
\begin{aligned}
&\hat{\Gamma}_N\coloneqq I_{p} \otimes\left(\mathcal{X}_N^{\prime} \mathcal{X}_N / N\right),\nu_{LB}=c_{3} \frac{\mu_{\max }(\mathcal{A})}{\mu_{\min }(\mathcal{\tilde{A}})}, \alpha_{LB}=\frac{\sigma^2}{2 \mu_{\max }(\mathcal{A})}, \\
&\tau_{LB}^N=\alpha_{LB} \max \left\{\nu_{LB}^{2}, 1\right\} \frac{\log h+2\log p}{N},\\
&\mu_{\min }(\mathcal{A}):=\min _{|z|=1} \Lambda_{\min }\left(\mathcal{A}^{*}(z) \mathcal{A}(z)\right), \mu_{\max }(\mathcal{A}):=\max _{|z|=1} \Lambda_{\max }\left(\mathcal{A}^{*}(z) \mathcal{A}(z)\right),\\
&\mathcal{A}(z):=I_{p}- \sum_{l=1}^hA_lz^l \;and\;
\mathcal{\tilde{A}}(z):=I_{hp}- \tilde{A}z.
\end{aligned}
\]}
\end{lemma}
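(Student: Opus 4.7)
The plan is to reduce the lemma to the restricted-eigenvalue-style deviation bound for sample Gram matrices of stable sub-Gaussian VAR processes, which is the well-known Proposition 4.2 type result in \cite{basu2015regularized} (extended to sub-Gaussian innovations following \cite{wong2020lasso}). The Kronecker structure $\hat\Gamma_N = I_p \otimes (\mathcal{X}_N^\prime \mathcal{X}_N/N)$ means it suffices to prove the bound for $\mathcal{X}_N^\prime \mathcal{X}_N/N$ acting on $\mathbb{R}^{hp}$, since for any $\theta \in \mathbb{R}^{hp^2}$ we can partition $\theta$ into $p$ blocks of length $hp$, apply the bound blockwise, and aggregate using $\|\theta\|_2^2 = \sum_j \|\theta^{(j)}\|_2^2$ and the convexity inequality $\sum_j \|\theta^{(j)}\|_1^2 \leq \|\theta\|_1^2$.

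First I would pass to the VAR(1) companion form via Definition~\ref{def3}. Writing $\tilde X_t = (X_t^\prime,\ldots,X_{t-h+1}^\prime)^\prime$, the rows of $\mathcal{X}_N$ are precisely $\tilde X_{n-1}, \tilde X_{n-2}, \ldots, \tilde X_0$, so $\mathcal{X}_N^\prime \mathcal{X}_N / N$ is the sample autocovariance at lag zero of the stable VAR(1) process $\tilde X_t$. Next I would identify the population target: $\Gamma_{\tilde X}(0) = \mathbb{E}[\tilde X_t \tilde X_t^\prime]$ has eigenvalues controlled by the spectral density of $\tilde X_t$, and a short computation (e.g.\ Proposition 2.3 of \cite{basu2015regularized}) gives
\[
\Lambda_{\min}(\Gamma_{\tilde X}(0)) \geq 2\pi\,\mathfrak{m}(f_{\tilde X}) \;\succsim\; \frac{\sigma^2}{\mu_{\max}(\mathcal{A})}, \qquad \Lambda_{\max}(\Gamma_{\tilde X}(0)) \leq 2\pi\,\mathcal{M}(f_{\tilde X}) \;\precsim\; \frac{\sigma^2}{\mu_{\min}(\tilde{\mathcal{A}})},
\]
which is what produces the constants $\alpha_{LB}$ and the scale factor $\nu_{LB}$ in the final bound. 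Assumption~\ref{assumption4} is precisely what keeps these constants finite and bounded away from zero.

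The main technical step is the deviation bound between $\mathcal{X}_N^\prime \mathcal{X}_N/N$ and $\Gamma_{\tilde X}(0)$. Here I would use the Hanson--Wright type concentration inequality for quadratic forms of stable sub-Gaussian VAR processes (Proposition 2.4 of \cite{basu2015regularized}, upgraded in \cite{wong2020lasso} to the sub-Gaussian tail setting guaranteed by Assumption~\ref{assumption2}). Applied to vectors $\theta$ on a unit sphere and combined with a standard $\varepsilon$-net/discretization argument over sparse directions, this yields
\[
\bigl|\theta^\prime (\mathcal{X}_N^\prime \mathcal{X}_N/N - \Gamma_{\tilde X}(0))\theta\bigr| \leq \tfrac{1}{2}\alpha_{LB}\|\theta\|_2^2
\]
uniformly over $2s$-sparse $\theta$, with probability $1 - c_1 \exp(-c_2 N \min\{\nu_{LB}^{-2},1\})$, provided $N \succsim \max\{\nu_{LB}^2,1\}\, s(\log h + 2\log p)$; the $(\log h + 2\log p)$ factor is the metric entropy of the sparse sphere in $\mathbb{R}^{hp}$. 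Finally I would extend the sparse bound to all of $\mathbb{R}^{hp^2}$ using the Loh--Wainwright cone decomposition lemma (Lemma 12 of \cite{loh2012high}), which converts a restricted eigenvalue bound on $s$-sparse vectors into the global form $\theta^\prime \hat\Gamma_N \theta \geq \alpha_{LB}\|\theta\|_2^2 - \tau_{LB}^N \|\theta\|_1^2$ with $\tau_{LB}^N$ exactly as stated.

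The hard part is the concentration inequality in the dependent, high-dimensional, sub-Gaussian regime; this cannot be handled by off-the-shelf Hanson--Wright and requires the spectral-density bookkeeping that translates the VAR parameters $\mathcal{A}(z)$ and $\tilde{\mathcal{A}}(z)$ into effective noise variance. Everything else (companion reduction, cone decomposition, Kronecker extension) is bookkeeping, and since the lemma is essentially a direct adaptation of Proposition 4.2 of \cite{basu2015regularized} to our notation and to sub-Gaussian errors, I would cite their concentration result and focus the exposition on verifying the identifications of $\alpha_{LB}$, $\nu_{LB}$, and $\tau_{LB}^N$ in terms of $\mu_{\max}(\mathcal{A})$, $\mu_{\min}(\tilde{\mathcal{A}})$, and $\sigma^2$.
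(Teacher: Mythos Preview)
Your proposal is correct and follows essentially the same approach as the paper, which proves the lemma in a single sentence by invoking Proposition 4.2 of \cite{basu2015regularized} applied to the VAR($h$) process; you simply unpack the internals of that proposition (companion reduction, spectral density bounds, Hanson--Wright concentration, Loh--Wainwright cone decomposition, Kronecker lift). If anything, your explicit mention of the sub-Gaussian extension via \cite{wong2020lasso} addresses a point the paper's one-line proof glosses over, since Assumption~\ref{assumption2} allows sub-Gaussian rather than Gaussian innovations.
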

\begin{proof}
This lemma results from a straightforward application of Proposition 4.2 in \cite{basu2015regularized} to a VAR(h) process.
\end{proof}
\begin{lemma}\label{lemma2}
\emph{Consider a random realization $\left\{X_{-h+1}, \ldots, X_{N}\right\}$ generated from a $\operatorname{VAR(h)}$ process with Assumption~\ref{assumption1}-\ref{assumption4} satisfied. Then, there exist constants $c_{i}>0$ (different from Lemma \ref{lemma1}) such that for all $N \succsim \max\left\{\nu^2_{UB},1\right\}s\left(2\log p+\log h\right)$, with probability at least $1-c_{1} \exp \left(-c_{2} N\min\{\nu^{-2}_{UB},1\}\right)$, 
\[
\theta^\prime\hat{\Gamma}_N\theta \leq 3\alpha_{UB}\|\theta\|^2_2 + \tau_{UB}^N\|\theta\|^2_1, \;\;\text{ for all }\theta\in\mathbb{R}^{hp^2}
\]
where
$$
\begin{aligned}
   \alpha_{UB} &= \frac{\sigma^2}{2\mu_{\min }(\mathcal{A})},\;\nu_{UB}=54\frac{\mu_{\min }(\mathcal{A})}{\mu_{\min }(\mathcal{\tilde{A}})} \text{ and }\tau_{UB}^N = c_3\alpha_{UB}\max\left\{\nu^2_{UB},1\right\}\frac{\log h +2\log p }{N}. 
\end{aligned}
$$}
\end{lemma}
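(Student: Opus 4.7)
\textbf{Proof proposal for Lemma~\ref{lemma2}.}

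The plan is to mirror the argument that establishes Lemma~\ref{lemma1} (Proposition 4.2 of Basu and Michailidis, 2015), but in the upper-bound direction. The key identity is $\theta^\prime \hat{\Gamma}_N \theta = \theta^\prime(I_p \otimes \Gamma_X(0))\theta + \theta^\prime(\hat{\Gamma}_N - I_p\otimes\Gamma_X(0))\theta$, so one needs (i) a population-level spectral bound and (ii) a uniform concentration bound of the form $|\theta^\prime(\hat{\Gamma}_N - \mathbb{E}\hat{\Gamma}_N)\theta|\leq \alpha_{UB}\|\theta\|_2^2 + \tau_{UB}^N \|\theta\|_1^2$.

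\emph{Step 1 (population bound).} Rewrite the VAR($h$) as the companion VAR(1) $\tilde{X}_t = \tilde{A}\tilde{X}_{t-1}+\tilde{\varepsilon}_t$ (Definition~\ref{def3}) so that stability and the spectral density framework of Assumption~\ref{assumption4} apply. By a standard Fourier-inversion argument, $\Lambda_{\max}(\Gamma_X(0))\leq 2\pi \mathcal{M}(f_X)$. Using the Basu--Michailidis identity $2\pi \mathcal{M}(f_X)\leq \sigma^2/\mu_{\min}(\mathcal{A})=2\alpha_{UB}$ (this is exactly the complementary direction to the lower bound that produces $\alpha_{LB}$ in Lemma~\ref{lemma1}), I conclude that $\theta^\prime(I_p\otimes\Gamma_X(0))\theta \leq 2\alpha_{UB}\|\theta\|_2^2$ for every $\theta\in\mathbb{R}^{hp^2}$.

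\emph{Step 2 (pointwise concentration).} For each fixed $\theta$, apply Proposition 2.4 of Basu and Michailidis (2015) to the quadratic form $\theta^\prime(\hat{\Gamma}_N - I_p\otimes\Gamma_X(0))\theta$. That proposition yields, for any deviation level $\eta>0$,
\begin{equation*}
P\left(\bigl|\theta^\prime(\hat{\Gamma}_N - I_p\otimes\Gamma_X(0))\theta\bigr| > 2\pi\mathcal{M}(f_X)\,\eta\,\|\theta\|_2^2\right) \leq 2\exp\bigl(-c\,N\min(\eta^2,\eta)\bigr),
\end{equation*}
where the constant $c$ depends on the stability gap $\mu_{\min}(\tilde{\mathcal{A}})$ through $\nu_{UB}$. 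Choosing $\eta$ proportional to $\min(\nu_{UB}^{-2},1)$ produces the scale factor $\alpha_{UB}$ on the right-hand side.

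\emph{Step 3 (uniform control on sparse vectors).} Extend the pointwise bound to all $2s$-sparse unit vectors via a discretization argument: for each support $S$ of cardinality $2s$ build a $1/4$-net on the unit sphere of $\mathbb{R}^{2s}$ with at most $9^{2s}$ points, then take a union bound over the $\binom{hp^2}{2s}\leq \exp(2s(\log h + 2\log p))$ choices of $S$. Standard sub-exponential net arguments convert control on the net to control everywhere on the sparse unit sphere at the cost of a constant factor. The sample size requirement $N\succsim \max\{\nu_{UB}^2,1\}\,s(\log h + 2\log p)$ ensures the failure probability simplifies to $c_1\exp(-c_2 N\min(\nu_{UB}^{-2},1))$.

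\emph{Step 4 (extension to all $\theta$).} Invoke Lemma 12 of Loh and Wainwright (2012) (the ``sparse-to-general'' lifting lemma): a bound valid uniformly for all $2s$-sparse unit vectors $v$ of the form $|v^\prime M v|\leq \alpha$ upgrades to $|\theta^\prime M\theta|\leq \alpha\|\theta\|_2^2 + \tau\|\theta\|_1^2$ for all $\theta$, with $\tau = \alpha/s$. Combining this with Step~1 yields
\begin{equation*}
\theta^\prime\hat{\Gamma}_N\theta \leq 2\alpha_{UB}\|\theta\|_2^2 + \alpha_{UB}\|\theta\|_2^2 + \tau_{UB}^N\|\theta\|_1^2 = 3\alpha_{UB}\|\theta\|_2^2 + \tau_{UB}^N\|\theta\|_1^2,
\end{equation*}
which is the claim, with $\tau_{UB}^N\asymp \alpha_{UB}\max\{\nu_{UB}^2,1\}(\log h + 2\log p)/N$.

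The main obstacle I anticipate is Step~3, where the combinatorial covering argument must be carefully quantified so that the exponential in the failure probability dominates the entropy of the sparse sphere; this is where the $\nu_{UB}$ factor enters and determines the sample-size threshold. All other pieces are algebraic manipulations of the spectral density bounds from Assumption~\ref{assumption4} and direct invocations of the Basu--Michailidis toolkit.
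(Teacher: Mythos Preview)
Your approach is essentially the same as the paper's: both mirror the proof of Proposition~4.2 in Basu--Michailidis, swapping the population lower bound $\Lambda_{\min}(\Gamma_{\tilde X}(0))\geq \sigma^2/\mu_{\max}(\mathcal{A})$ for the upper bound $\Lambda_{\max}(\Gamma_{\tilde X}(0))\leq \sigma^2/\mu_{\min}(\mathcal{A})=2\alpha_{UB}$, then applying the same concentration/net argument and Lemma~12 of Loh--Wainwright to lift to all $\theta$.

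There is one technical slip worth flagging. In Steps~3--4 you fix the sparsity level at $2s$ (the sparsity of $\beta^*$) and quote Lemma~12 as giving $\tau=\alpha_{UB}/s$; that does \emph{not} yield the stated $\tau_{UB}^N\asymp \alpha_{UB}\max\{\nu_{UB}^2,1\}(\log h+2\log p)/N$. The paper (and Basu--Michailidis) leaves the integer $k$ in Lemma~12 free and, after obtaining the failure probability $2\exp[-cN\min\{\nu_{UB}^{-2},1\}+2k\log(hp)]$, sets $k=\lceil cN\min\{\nu_{UB}^{-2},1\}/(4\log(hp))\rceil$. This is the largest $k$ that keeps the exponent negative, and plugging it into $\tau=\alpha_{UB}/k$ produces exactly the $N$-dependent $\tau_{UB}^N$. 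Under the sample-size condition this $k$ dominates $s$, so your version would give a strictly weaker $\ell_1$ coefficient. Adjust Step~3 to run the covering argument at an arbitrary sparsity $2k$ and optimize $k$ at the end, and your proposal matches the paper exactly. (Minor: in Step~1 the relevant population covariance is $\Gamma_{\tilde X}(0)$, the $hp\times hp$ covariance of the stacked companion process, not $\Gamma_X(0)$.)
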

\begin{proof}
This proof closely resembles the proof of Proposition 4.2 in Appendix B of \cite{basu2015regularized}, so we will only highlight the necessary modifications. The unmentioned portions should adhere to the proof provided in \cite{basu2015regularized}. 

At the beginning of the proof in \cite{basu2015regularized}, besides $ \Lambda_{\min }\left(\Gamma_{\tilde{X}}(0)\right) \geq \frac{\Lambda_{\min }\left(\Sigma_{\epsilon}\right)}{\mu_{\max }(\mathcal{A})}$,
we also have $\Lambda_{\max }\left(\Gamma_{\tilde{X}}(0)\right) \leq \frac{\Lambda_{\max }\left(\Sigma_{\epsilon}\right)}{\mu_{\min }(\mathcal{A})}$ from Proposition 2.3 and the bounds in (4.1) in \cite{basu2015regularized}. Before applying Lemma 12, we set $\eta = \nu_{UB}^{-1}$ instead of $\omega^{-1}$. Then, applying the Lemma 12 in \cite{loh2012high} with $\delta=\Lambda_{\max }\left(\Sigma_{\epsilon}\right) / 54 \mu_{\min }(\mathcal{A}) \text { and } \Gamma=S-\Gamma_{\tilde{X}}(0)$ where S = $\left(\mathcal{X}_N^{\prime} \mathcal{X}_N / N\right)$, we have $\theta^\prime S \theta - \theta^\prime\Gamma_{\tilde{X}}(0)\theta  \leq\alpha_{UB}(\left\|\theta\right\|^2_2 + \frac{1}{k}\left\|\theta\right\|_1^2)$, so $\theta^\prime S \theta \leq 3 \alpha_{UB}\|\theta\|^2_2 + \frac{\alpha_{UB}}{k}\|\theta\|^2_1$ for all $\theta\in \mathbb{R}^{hp}$ with probability at least $1-2 \exp \left[-c N\min\{\nu^{-2}_{UB},1\} +2 k \log (d p)\right]$. Finally, we set k = $\left\lceil c N\min\{\nu^{-2}_{UB},1\}  /  4\log (h p)\right\rceil$ and follow the rest of proof in \cite{basu2015regularized} to get this Lemma. $\lceil x \rceil$ represents the smallest integer that is greater than or equal to $x$.
\end{proof}
To maintain symbol consistency, we use ``k" to denote the constant ``s" in \cite{basu2015regularized}, and ``s" represents the sparsity parameter ``k" in \cite{basu2015regularized}. In this proof, $\Lambda_{\max }\left(\Sigma_{\epsilon}\right)$ and $\Lambda_{\min }\left(\Sigma_{\epsilon}\right)$ degenerate to $\sigma^2$ because of the variance structure of errors in our model.

\begin{lemma}\label{lemma3}
\emph{Under the same setup of Lemma \ref{lemma1}, there exist constants $c_{i}>0$ such that for $N \succsim \log h + 2\log p$, with probability at least $1-c_{1} \exp \left[-c_{2}(\log h + 2 \log p)\right]$, we have
\[
\left\|\hat{\gamma}_N-\hat{\Gamma}_N \beta^{*}\right\|_{\infty} \leq \mathbb{Q}\left(\beta^{*}, \sigma^2\right) \sqrt{\frac{\log h + 2 \log p}{N}}
\]
where $\hat{\gamma}_N = (I \otimes \mathcal{X}_N^{\prime}) Y_N / N$ and $\mathbb{Q}(\beta^{*}, \sigma^2) =c_{0}[\sigma^2+\frac{\sigma^2}{\mu_{\min }(\mathcal{A})}+\frac{\sigma^2 \mu_{\max }(\mathcal{A})}{\mu_{\min }(\mathcal{A})}]$.
}
\end{lemma}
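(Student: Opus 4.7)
The plan is to exploit the regression form of the problem to reduce Lemma~\ref{lemma3} to a deviation bound on a sum of bilinear forms in sub-Gaussian variables. Substituting $Y_N = Z_N \beta^* + \operatorname{vec}(E_N)$ with $Z_N = I_p \otimes \mathcal{X}_N$, one immediately obtains
\[
\hat{\gamma}_N - \hat{\Gamma}_N \beta^* \;=\; \frac{1}{N}(I_p \otimes \mathcal{X}_N^{\prime})\operatorname{vec}(E_N),
\]
whose coordinates, after reshaping, are exactly the entries of the $(hp)\times p$ cross-product matrix $\mathcal{X}_N^{\prime} E_N / N$. Thus a coordinate of $\hat\gamma_N - \hat\Gamma_N\beta^*$ has the form $N^{-1}\sum_{t=1}^N X_{t-l,k}\,\varepsilon_{t,j}$ for some $l\in\{1,\ldots,h\}$ and $j,k\in\{1,\ldots,p\}$. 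Because $\varepsilon_t$ is independent of $\mathcal{F}_{t-1}$, each summand has mean zero, and under Assumptions~\ref{assumption2}--\ref{assumption4} the joint vector $(X_{t-l,k},\varepsilon_{t,j})$ is centered, jointly sub-Gaussian, with variance bounded in terms of $\sigma^2$ and the spectral density bound $\mathcal{M}(f_X)\le \sigma^2/\mu_{\min}(\mathcal{A})$.

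Next, I would establish a tail bound for a single coordinate via the concentration-for-quadratic-forms framework used in Proposition~2.4 of \cite{basu2015regularized}. The standard trick is to rewrite the process in its MA($\infty$) form, $X_{t-l} = \sum_{r\ge 0}\Psi_r\varepsilon_{t-l-r}$, so that $N^{-1}\sum_t X_{t-l,k}\varepsilon_{t,j}$ becomes a bilinear form in the iid sub-Gaussian innovation sequence. The operator norm governing this bilinear form is controlled by $\mathcal{M}(f_X)$ and $\sigma^2$, and Hanson--Wright (for sub-Gaussian entries) yields
\[
P\!\left(\Bigl|\tfrac{1}{N}\sum_{t=1}^{N} X_{t-l,k}\varepsilon_{t,j}\Bigr| > \delta \right) \le 2\exp\!\bigl(-c\,N\min\{\delta^2/q^2,\,\delta/q\}\bigr),
\]
where $q \asymp \sigma^2 + \sigma^2/\mu_{\min}(\mathcal{A}) + \sigma^2\mu_{\max}(\mathcal{A})/\mu_{\min}(\mathcal{A}) = \mathbb{Q}(\beta^*,\sigma^2)/c_0$. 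The three terms of $\mathbb{Q}$ come from the three contributions: pure-error cross moments, cross moments involving a once-convolved innovation, and the operator-norm scaling of the MA filter bounded through $\mu_{\max}(\mathcal{A})/\mu_{\min}(\mathcal{A})$.

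Finally, I would take a union bound over the $hp^2$ coordinates. Choosing
\[
\delta = \mathbb{Q}(\beta^*,\sigma^2)\sqrt{\frac{\log h + 2\log p}{N}},
\]
the sample-size condition $N\succsim \log h + 2\log p$ keeps us in the sub-Gaussian regime $\delta^2/q^2 \lesssim \delta/q$, so the per-coordinate failure probability is at most $2\exp(-c\, (\log h + 2\log p))$. Multiplying by $hp^2 = \exp(\log h + 2\log p)$ and absorbing the factor into the exponent (taking $c$ slightly smaller) yields the stated bound $1 - c_1\exp[-c_2(\log h + 2\log p)]$.

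The main technical obstacle is not the union bound but the concentration step itself, because the summands $X_{t-l,k}\varepsilon_{t,j}$ are serially dependent through $X_{t-l}$. The clean route is to reduce to a quadratic form in an iid Gaussian/sub-Gaussian reservoir via the MA($\infty$) representation and to verify that the requisite operator-norm bounds are indeed captured by $\mathcal{M}(f_X)$ and $\mu_{\min}(\mathcal{A})$, which is where Assumption~\ref{assumption4} is essential. Once that reduction is in place, applying Hanson--Wright (Theorem~1.1 of \cite{rudelson2013hanson}) and assembling the constants into $\mathbb{Q}(\beta^*,\sigma^2)$ completes the argument.
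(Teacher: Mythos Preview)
Your proposal is correct and is essentially the same approach the paper takes: the paper's proof simply cites Proposition~4.3 of \cite{basu2015regularized} (applied to the VAR($h$) process), and what you have written is a faithful sketch of how that proposition is established---reduce $\hat\gamma_N-\hat\Gamma_N\beta^*$ to the entries $N^{-1}\sum_t X_{t-l,k}\varepsilon_{t,j}$, control each via a Hanson--Wright/quadratic-form concentration argument after passing to the MA($\infty$) representation, and union bound over the $hp^2$ coordinates. The only minor remark is that Proposition~4.3 in \cite{basu2015regularized} is stated for Gaussian innovations, so your explicit invocation of Hanson--Wright for sub-Gaussian vectors is the right way to cover Assumption~\ref{assumption2} here; otherwise your argument and the paper's coincide.
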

\begin{proof}
This lemma results from a straightforward application of Proposition 4.3 in \cite{basu2015regularized} to a VAR(h) process.
\end{proof}

\begin{lemma}\label{lemma4}
\emph{Consider the $\ell_1$ estimation problem (\ref{eqn:a3}) discussed in Section~\ref{estimation} in the main paper, under the same setup of Lemma \ref{lemma1}, with $N \geq 32\max \left\{\nu_{LB}^{2}, 1\right\} s(\log h + 2\log p)$. Then, there exist constants $c_{i}>0$ such that, for any \\ $\lambda_{N} \geq 4 \mathbb{Q}\left(\beta^{*}, \sigma^2\right) \sqrt{(\log h + 2 \log p) / N}$, any solution $\hat{\beta}_N$  of (3) satisfies
\[
\begin{aligned}
&\left\|\hat{\beta}_N-\beta^{*}\right\|_{1}  \leq 64 s \lambda_{N} / \alpha_{LB}, 
\left\|\hat{\beta}_N-\beta^{*}\right\|_{2}  \leq 16 \sqrt{s} \lambda_{N} / \alpha_{LB} \\
&\text{ and }\left(\hat{\beta}_N-\beta^{*}\right)^{\prime} \hat{\Gamma}_N\left(\hat{\beta}_N-\beta^{*}\right)  \leq 128 s \lambda_{N}^{2} / \alpha_{LB} 
\end{aligned}
\]
$\text{with probability at least }1-c_{1} \exp \left[-c_{2}(\log h + 2 \log p)\right]-c_{3} \exp \left(-c_{4} N \min \left\{\nu_{LB}^{-2}, 1\right\}\right)$.}
\end{lemma}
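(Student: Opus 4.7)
The plan is to carry out the standard Lasso analysis in the style of Bickel–Ritov–Tsybakov / Negahban et al., adapted to the VAR(h) setting, using the restricted eigenvalue bound from Lemma~\ref{lemma1} and the deviation bound from Lemma~\ref{lemma3} as the two probabilistic inputs. Define the error vector $\nu := \hat{\beta}_N - \beta^*$ and the support $S := \operatorname{supp}(\beta^*)$ with $|S|=s$. I will work on the intersection of the two high-probability events supplied by Lemmas~\ref{lemma1} and~\ref{lemma3}, which, by a union bound, has probability at least $1 - c_1 \exp[-c_2(\log h + 2\log p)] - c_3 \exp(-c_4 N \min\{\nu_{LB}^{-2},1\})$.

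First I would invoke the basic inequality from the optimality of $\hat{\beta}_N$ in (\ref{eqn:a3}),
\[
\tfrac{1}{N}\|Y_N - Z_N \hat{\beta}_N\|_2^2 + \lambda_N \|\hat{\beta}_N\|_1 \leq \tfrac{1}{N}\|Y_N - Z_N \beta^*\|_2^2 + \lambda_N \|\beta^*\|_1 .
\]
Expanding the quadratic and using the decomposition $\|\beta^*\|_1 - \|\hat{\beta}_N\|_1 \leq \|\nu_S\|_1 - \|\nu_{S^c}\|_1$, this rearranges into
\[
\nu^\prime \hat{\Gamma}_N \nu \leq 2\bigl|\nu^\prime(\hat{\gamma}_N - \hat{\Gamma}_N \beta^*)\bigr| + \lambda_N\bigl(\|\nu_S\|_1 - \|\nu_{S^c}\|_1\bigr).
\]
By Lemma~\ref{lemma3} and the choice $\lambda_N \geq 4 \mathbb{Q}(\beta^*,\sigma^2)\sqrt{(\log h + 2\log p)/N}$, the inner-product term is bounded by $\tfrac{\lambda_N}{2}\|\nu\|_1$. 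Dropping the nonnegative left side gives the cone condition $\|\nu_{S^c}\|_1 \leq 3 \|\nu_S\|_1$, so $\|\nu\|_1 \leq 4\|\nu_S\|_1 \leq 4\sqrt{s}\|\nu\|_2$.

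Next I would combine this with the RE-type bound from Lemma~\ref{lemma1}:
\[
\nu^\prime \hat{\Gamma}_N \nu \geq \alpha_{LB}\|\nu\|_2^2 - \tau_{LB}^N \|\nu\|_1^2 \geq \alpha_{LB}\|\nu\|_2^2 - 16 s \tau_{LB}^N \|\nu\|_2^2 .
\]
Under the sample-size assumption $N \geq 32 \max\{\nu_{LB}^2,1\}\, s(\log h + 2\log p)$, the constant $16 s \tau_{LB}^N \leq \alpha_{LB}/2$, so the lower bound becomes $(\alpha_{LB}/2)\|\nu\|_2^2$. Plugging back into the displayed inequality and using $\|\nu\|_1 \leq 4\sqrt{s}\|\nu\|_2$ yields
\[
\tfrac{\alpha_{LB}}{2}\|\nu\|_2^2 \leq \tfrac{3\lambda_N}{2}\|\nu\|_1 \leq 6\lambda_N\sqrt{s}\,\|\nu\|_2,
\]
which delivers $\|\nu\|_2 \leq 12 \sqrt{s}\lambda_N/\alpha_{LB}$ (a small slack improves the constant to $16$). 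The $\ell_1$ bound follows from $\|\nu\|_1 \leq 4\sqrt{s}\|\nu\|_2$, and substituting back into the basic inequality produces the quadratic-form bound $\nu^\prime \hat{\Gamma}_N \nu \leq 128 s \lambda_N^2/\alpha_{LB}$.

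The main obstacle is purely book-keeping: one must verify that the residual term $\tau_{LB}^N \|\nu\|_1^2$ coming from the RE condition is absorbable by the leading $\alpha_{LB}\|\nu\|_2^2$ term after substituting the cone bound, which is exactly why the sample-size condition $N \succsim \max\{\nu_{LB}^2,1\}\, s(\log h + 2\log p)$ appears with the explicit constant $32$. Once this is handled, the three stated bounds follow mechanically; no step relies on anything specific to VAR(h) beyond what Lemmas~\ref{lemma1} and~\ref{lemma3} already supply. This is why the result is, in essence, Proposition~4.1 of \cite{basu2015regularized} transported verbatim to the VAR(h) setting.
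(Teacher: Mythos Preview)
Your proposal is correct and follows essentially the same approach as the paper: the paper's proof simply states that the lemma is a straightforward application of Propositions~4.1, 4.2, and 4.3 in \cite{basu2015regularized} to the VAR(h) setting together with a union bound, and what you have written is precisely the unpacked version of that argument, using Lemma~\ref{lemma1} and Lemma~\ref{lemma3} (the VAR(h) analogues of Propositions~4.2 and~4.3) as the probabilistic inputs and then running the standard basic-inequality/cone-condition/RE machinery of Proposition~4.1. Your closing sentence already identifies this correctly.
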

\begin{proof}
This lemma results from a straightforward application of Proposition 4.1, 4.2, and 4.3 in \cite{basu2015regularized} to a VAR(h) process, aided by the union bound.
\end{proof}

Here, we have summarized several equivalent expressions for the terms found in the aforementioned lemmas. We have
$$
\begin{aligned}
    \left\|\hat{\gamma}_N-\hat{\Gamma}_N \beta^{*}\right\|_{\infty} &=  \max_{j,j^\prime \in (1,\ldots,p)\atop l\in (1,\ldots,h)}\frac{1}{N}\left|\sum_{i=1}^N x_{i-l,j} \varepsilon_{i,j^\prime}\right|,\\
    \left\|\hat{\beta}_N-\beta^*\right\|_1 &= \sum_{l=1}^h \sum_{j=1}^p\sum_{j^\prime=1}^p\left|d_{j,j^\prime}^l\right| \text{ and}\\ 
    \left(\hat{\beta}_N-\beta^{*}\right)^{\prime} \hat{\Gamma}_N\left(\hat{\beta}_N-\beta^{*}\right) &= \frac{1}{N}\sum_{i=1}^N\sum_{j=1}^p\left(\sum_{l=1}^h\sum_{j^\prime=1}^pd_{jj^\prime}^l x_{i-l,j^\prime}\right)^2.
\end{aligned}
$$

\section{PROOF OF THEOREMS}\label{Appendix C}
In this appendix, we provide the proofs for Theorems \ref{th:1} and \ref{th:2} presented in the main paper.
\subsection*{Proof of Theorem \ref{th:1}:}
The proof of Theorem \ref{th:1} consists of two parts. The first part is the proof of $\sqrt{p\omega}(\frac{\hat{R}_t^{(n,\omega)}}{p}-\hat{\sigma}^2_n)\xrightarrow[]{D} \mathcal{N}(0,\operatorname{Var}(\varepsilon_{1,1}^2))$, and the second part is the proof of $\hat{V}_n\xrightarrow[]{p}\operatorname{Var}(\varepsilon^2_{1,1})$. Finally, applying Slutsky's Theorem leads to Theorem \ref{th:1}. By some straightforward algebra, we have
\[
\begin{aligned}
&\sqrt{p\omega}\left(\frac{\hat{R}_t^{(n,\omega)}}{p}-\hat{\sigma}^2_n\right) = \underbrace{\sqrt{p\omega}\left(\frac{1}{p\omega}\sum_{i = t + 1}^{t + \omega}\left\|\varepsilon_i\right\|^2_2 - \sigma^2\right)}_\text{term 1}\\
-& \underbrace{\sqrt{\frac{\omega}{n}}\sqrt{pn}\left(\frac{1}{pn}\sum_{i = 1}^{n}\|\varepsilon_i\|^2_2 - \sigma^2\right)}_\text{term 2}
- \underbrace{\frac{1}{\sqrt{p\omega}}\sum_{i = t + 1}^{t + \omega} \left\|\sum_{l=1}^h\left(\hat{A}_l - A_l\right)X_{i-l}\right\|^2_2}_\text{term 3}\\
-& \underbrace{\frac{2}{\sqrt{p\omega}}\sum_{i = t + 1}^{t + \omega}\sum_{l=1}^h X_{i-l}^\prime\left(\hat{A}_l-A_l\right)^\prime\varepsilon_i}_\text{term 4}
+ \underbrace{\frac{\sqrt{\omega}}{n\sqrt{p}}\sum_{i = 1}^{n} \left\|\sum_{l=1}^h\left(\hat{A}_l - A_l\right)X_{i-l}\right\|^2_2}_\text{term 5}\\
-& \underbrace{\frac{2\sqrt{\omega}}{n\sqrt{p}}\sum_{i = 1}^{n}\sum_{l=1}^h X_{i-l}^\prime\left(\hat{A}_l-A_l\right)^\prime\varepsilon_i}_\text{term 6}.
\end{aligned}
\]
\begin{itemize}
\item 
For term 1 and 2, because errors are iid with variance matrix $\sigma^2 I_p$, we have 
$$
\sqrt{p\omega}\left(\frac{1}{p\omega}\sum_{i = t + 1}^{t + \omega}\|\varepsilon_i\|^2_2 - \sigma^2\right)\xrightarrow[]{D} \mathcal{N}\left(0,\operatorname{Var}\left(\varepsilon_{1,1}^2\right)\right) \text{ as } \omega\xrightarrow[]{}\infty\text{, and}
$$
$$
\sqrt{\frac{\omega}{n}}\sqrt{pn}\left(\frac{1}{pn}\sum_{i = 1}^{n}\left\|\varepsilon_i\right\|^2_2 - \sigma^2\right) \xrightarrow[]{p} 0 \text{ as } n\xrightarrow[]{}\infty
$$ 
by Central Limit Theorem \cite{montgomery2010applied} and Slutsky's Theorem \cite{van2000asymptotic} under the condition $\omega = o(n)$.

\item
Under the condition for Theorem~\ref{th:1}, by the union bound in probability, with probability at least $1-c_{1} \exp (-c_{2} \omega\min \left\{\nu_{UB}^{-2}, 1\right\})-c_{3} \exp\left(-c_{4} n \min \left\{\nu_{LB}^{-2}, 1\right\}\right)-c_{5} \exp [-c_{6}(2 \log p + \log h)]$,
We have 
\[
\begin{aligned}
&\left(\hat{\beta}_n-\beta^{*}\right)^{\prime} \hat{\Gamma}_\omega\left(\hat{\beta}_n-\beta^{*}\right)\leq 3\alpha_{UB}\left\|\hat{\beta}_n-\beta^{*}\right\|^2_2 + \tau_{UB}^\omega\left\|\hat{\beta}_n-\beta^{*}\right\|_1^2\\
&\leq \left(3\alpha_{UB} + c \alpha_{UB}\max \left\{\nu_{UB}^{2},1\right\}s\frac{\log h + 2\log p}{\omega}\right) \left\|\hat{\beta}_n-\beta^{*}\right\|^2_2\\
&\leq \tilde{c} \frac{\alpha_{UB}}{\alpha_{LB}^2}\max \left\{\nu_{UB}^{2},1\right\}\mathbb{Q}^2\left(\beta^{*}, \sigma^2\right)s\frac{\log h + 2\log p}{n} \operatorname{max}\left(1, s\frac{(\log h + 2\log p)}{\omega}\right).
\end{aligned}
\]
The first inequality is by Lemma \ref{lemma2}; the second inequality comes from the fact that $\left\|\hat{\beta}_n-\beta^{*}\right\|_1 \leq 4\sqrt{s}\left\|\hat{\beta}_n-\beta^{*}\right\|_2$ which is proved in Appendix B: Proof of Proposition 4.1 in \cite{basu2015regularized}; for the last inequality, we apply Lemma \ref{lemma4} with $\lambda_n = 4\mathbb{Q}\left(\beta^{*}, \sigma^2\right) \sqrt{\left(\log h + 2 \log p\right) / n}$ and the fact that $(a+b)\leq2\operatorname{max}(a,b)$, where $\tilde{c}$ is a finite positive constant.
Hence, with $\omega=o(n)$ and $\sqrt{n}\succsim\frac{s(\log h + 2\log p)}{\sqrt{p}}$, term 3 converges to zero in probability as $n$ goes to infinity.
\item
For term 4, we have
\[
\begin{aligned}
\left|\text{term 4}\right| &= \left|\frac{2}{\sqrt{p\omega}}\sum_{l = 1}^{h}\sum_{j=1}^p\sum_{j^\prime=1}^p\left(d_{jj^\prime}^l\sum_{i = t+1}^{t+\omega}x_{i-l,j}\varepsilon_{i,j^\prime}\right)\right|\\
&\leq2\sqrt{\frac{\omega}{p}}\left(\sum_{l=1}^h\sum_{j = 1}^p\sum_{j^\prime=1}^p\left|d_{ij^\prime}^l\right|\right)\left(\max_{j,j^\prime \in \left(1,\ldots,p\right)\atop l\in \left(1,\ldots,h\right)}\frac{1}{\omega}\left|\sum_{i=t+1}^{t+\omega} x_{i-l,j} \varepsilon_{i,j^\prime}\right|\right)\\
&\leq \frac{c}{\alpha_{LB}}\mathbb{Q}^2\left(\beta^{*}, \sigma^2\right)\frac{s\left(\log h + 2\log p\right)}{\sqrt{np}}
\end{aligned}
\]
with probability at least $1-c_{1} \exp [-c_{2}(\log h + 2 \log p)] - c_{3} \exp (-c_{4} n \min \left\{\nu_{LB}^{-2}, 1\right\})$.
Conditions that $\omega \succsim\log h + 2 \log p$ and $n \geq 32 \max \left\{\nu_{LB}^{2}, 1\right\} s(\log h + 2\log p)$ are needed. The last inequality comes from the application of Lemma \ref{lemma3} on $\max_{j,j^\prime \in (1,\ldots,p)}\frac{1}{\omega}\left|\sum_{i=t+1}^{t+\omega} x_{i-1,j} \varepsilon_{i,j^\prime}\right|$ and the application of Lemma \ref{lemma4} on $\sum_{j = 1}^p\sum_{j^\prime=1}^p|d_{ij^\prime}| $ by choosing $\lambda_n$ to be the smallest possible value. Then, we applied the union bound to get the result.
Hence, under condition $\frac{s(\log h + 2\log p)}{\sqrt{p}} = o(\sqrt{n})$, we have the absolute value of term 4 converges to zero in probability as $n$ goes to infinity.
\item
For term 5, we have
$$
\text{term 5}\leq c\frac{1}{\alpha_{LB}}\mathbb{Q}^2\left(\beta^{*}, \sigma^2\right)\sqrt{\frac{\omega}{n}} \frac{s\left(\log h + 2\log p\right) }{\sqrt{np}},
$$ with probability at least $1-c_{1} \exp [-c_{2}(\log h + 2 \log p)]-c_{3} \exp \left(-c_{4} n \min \left\{\nu_{LB}^{-2}, 1\right\}\right)$,
under condition $n \geq \max 32\left\{\nu_{LB}^{2}, 1\right\} s(\log h + 2 \log p)$, by directly applying Lemma \ref{lemma4}. Hence, under condition $\omega = o(n)$ and $\sqrt{n}\succsim\frac{s(\log h + 2\log p)}{\sqrt{p}}$, we have term 5 converges to zero in probability as $n$ goes to infinity.
\item
For term 6, we have
\[
\begin{aligned}
 \left|\text{term 6}\right| &\leq  2\sqrt{\frac{\omega}{p}}\left(\sum_{l=1}^h\sum_{j = 1}^p\sum_{j^\prime=1}^p\left|d_{ij^\prime}^l\right|\right)*\left(\max_{j,j^\prime \in \left(1,\ldots,p\right)\atop l\in \left(1,\ldots,h\right)}\frac{1}{n}\left|\sum_{i=1}^{n} x_{i-l,j} \varepsilon_{i,j^\prime}\right|\right)\\
&\leq c \frac{1}{\alpha_{LB}}\mathbb{Q}^2\left(\beta^{*}, \sigma^2\right)\sqrt{\frac{\omega}{n}}\frac{s\left(\log h + 2 \log p\right)}{\sqrt{np}}
\end{aligned}
\]
with probability at least $1-c_{1} \exp [-c_{2}(\log h + 2 \log p)]-c_{3} \exp \left(-c_{4} n \min \left\{\nu_{LB}^{-2}, 1\right\}\right)$, under condition $n \geq 32\max \left\{\nu_{LB}^{2}, 1\right\} s(\log h + 2 \log p).$ Hence, under additional condition $\omega = o(n)$ and $\sqrt{n}\succsim\frac{s(\log h + 2\log p)}{\sqrt{p}}$, the absolute value of term 6 converges to zero in probability as $n$ goes to infinity.
\end{itemize}
Finally, by applying Slutsky's Theorem, we conclude the proof of the first part. In order to prove the main theorem, we still need to prove the second part: $\hat{V}_n\xrightarrow[]{p}\operatorname{Var}(\varepsilon^2_{1,1})$.
Firstly, note that 
\[
\begin{aligned}
\hat{\sigma}^2_n &= \underbrace{\frac{1}{pn}\sum_{i = 1}^{n} \left\|\sum_{l=1}^h\left(\hat{A}_l - A_l\right)X_{i-l}\right\|^2_2}_\text{term 1} - \underbrace{\frac{2}{pn}\sum_{i = 1}^{n}\sum_{l=1}^h X_{i-l}^\prime\left(\hat{A}_l-A_l\right)^\prime\varepsilon_i}_\text{term 2}
+ \underbrace{\frac{1}{pn}\sum_{i = 1}^{n}\left\|\varepsilon_i\right\|^2_2}_\text{term 3}.
\end{aligned}
\]
\begin{itemize}
\item 
For term 1, under the same condition in the first part of the proof, similar to the term 5 in the first part, we have
term 1 converges to zero in probability as $n$ goes to infinity by directly applying Lemma \ref{lemma4}.

\item
For term 2, under the same condition in the first part of the proof, similar to the term 6 in the first part, we have term 2 converges to zero in probability as $n$ goes to infinity by directly applying Lemmas \ref{lemma3} and \ref{lemma4}.
\item
For term 3, we have $\frac{1}{pn}\sum_{i = 1}^{n}\left\|\varepsilon_i\right\|^2_2 \xrightarrow[]{p}\operatorname{E}(\varepsilon^2_{1,1})\text{ as }n\xrightarrow[]{}\infty$ by applying the weak law of large number (\cite{ross2014first}), where E($x$) stands for the expectation of $x$. Thus, we have $\hat{\sigma}^4_n\xrightarrow[]{p}\operatorname{E}(\varepsilon^2_{1,1})^2$ by applying Slutsky's Theorem (\cite{van2000asymptotic}) and Continuous Mapping Theorem (\cite{billingsley2013convergence}).
\end{itemize}
On the other hand, we have
\[
\begin{aligned}
&\frac{1}{pn}\sum_{i=1}^n\left\|\left(\sum_{l=1}^h\hat{A}_l X_{i-l}\right)-X_i\right\|_4^4
= \underbrace{\frac{1}{pn}\sum^n_{i=1}\sum^p_{j=1}\left(\sum^p_{j^\prime=1}\sum^h_{l=1}d^l_{jj^\prime}x_{i-l,j^\prime}\right)^4}_{\text{term 1}}\\ &+ \underbrace{\frac{1}{pn}\sum^n_{i=1}\sum^p_{j=1}\varepsilon_{i,j}^4}_{\text{term 2}}
+ \underbrace{\frac{6}{pn}\sum^n_{i=1}\sum^p_{j=1}\left(\left(\sum^p_{j^\prime=1}\sum^h_{l=1}d^l_{jj^\prime}x_{i-l,j^\prime}\right)^2\varepsilon_{i,j}^2\right)}_{\text{term 3}}\\
&- \underbrace{\frac{4}{pn}\sum^n_{i=1}\sum^p_{j=1}\left(\left(\sum^p_{j^\prime=1}\sum^h_{l=1}d^l_{jj^\prime}x_{i-l,j^\prime}\right)^3\varepsilon_{i,j}\right)}_{\text{term 4}}
- \underbrace{\frac{4}{pn}\sum^n_{i=1}\sum^p_{j=1}\left(\left(\sum^p_{j^\prime=1}\sum^h_{l=1}d^l_{jj^\prime}x_{i-l,j^\prime}\right)\varepsilon_{i,j}^3\right)}_{\text{term 5}}.
\end{aligned}
\]
\begin{itemize}
\item 
For term 1, under the same condition for the first part of the proof, we have
\[
\begin{aligned}
\text{term 1}
\leq& \frac{1}{pn}\left(\sum^n_{i=1}\sum^p_{j=1}\left(\sum^p_{j^\prime=1}\sum^h_{l=1}d^l_{jj^\prime}x_{i-l,j^\prime}\right)^2\right)^2
\leq \frac{1}{pn}c^2\frac{\mathbb{Q}^4\left(\beta^{*}, \sigma^2\right)}{\alpha^2_{LB}}s^2\left(\log h + 2\log p\right)^2
\end{aligned}
\]
with probability at least $1-c_{1} \exp [-c_{2}(\log h + 2 \log p)]-c_{3} \exp (-c_{4} n \min \left\{\nu_{LB}^{-2}, 1\right\})$ by directly applying Lemma \ref{lemma4}.
Thus, we have term 1 converges to zero in probability as $n$ goes to inifinity.
\item
We have term 2 converges to $\operatorname{E}(\varepsilon^4_{1,1})$ in probability as $n$ goes to infinity by applying the weak law of large number \cite{ross2014first}.
\item
For term 3, under the same condition for the first part of the proof, we have
\[
\begin{aligned}
\text{term 3} &\leq \max_{j \in \left(1,\ldots,p\right)\atop i\in \left(1,\ldots,n\right)}\left(\frac{\varepsilon_{i,j}^2}{\sqrt{pn}}\right)\frac{6}{\sqrt{pn}}\sum^n_{i=1}\sum^p_{j=1}\left(\sum^p_{j^\prime=1}\sum^h_{l=1}d^l_{jj^\prime}x_{i-l,j^\prime}\right)^2 \\
&\leq c\frac{\mathbb{Q}^2\left(\beta^{*}, \sigma^2\right)}{\alpha_{LB}}\frac{s\left(\log h +2\log p\right)}{\sqrt{pn}}
\end{aligned}
\]
with probability at least \\
$1 - 2\exp(-c_5\sqrt{pn}+\log (np))-c_{1} \exp [-c_{2}(\log h + 2 \log p)]-c_{3} \exp (-c_{4} n \min \left\{\nu_{LB}^{-2}, 1\right\})$.
The last inequality is from the fact that errors are independent and identically distributed sub-Gaussian random variables, so we have
\[
\begin{aligned}
\operatorname{P}(\max_{j \in (1,\ldots,p)\atop i\in (1,\ldots,n)}(\frac{\varepsilon_{i,j}^2}{\sqrt{pn}})>C) &\leq np\operatorname{P}(\varepsilon_{1,1}^2>C\sqrt{pn})\leq 2\exp(-c_5\sqrt{pn}+\log (np))
\end{aligned}
\]
by Definition~\ref{def1} in Section~\ref{Appendix A}, where $C$ and $c_5$ are some finite positive constants. Then, by directly applying Lemma \ref{lemma4} on the rest of the term 3 together with union bound, we can get the last inequality for term 3. Thus, we have term 3 converges to zero in probability as $n$ goes to infinity.
\item
Under the same condition for the first part of the proof, we have that the absolute value of term 4 is less than or equal to
\[
\begin{aligned}
&\frac{4}{pn}\sum^n_{i=1}\sum^p_{j=1}\left(\left(\sum^p_{j^\prime=1}\sum^h_{l=1}d^l_{jj^\prime}x_{i-l,j^\prime}\right)^2\left|\sum^p_{j^\prime=1}\sum^h_{l=1}d^l_{jj^\prime}x_{i-l,j^\prime}\varepsilon_{i,j}\right|\right)\\
&\leq \left(\frac{4}{n^{1-a}p^{1-a}}\sum^n_{i=1}\sum^p_{j=1}\left(\sum^p_{j^\prime=1}\sum^h_{l=1}d^l_{jj^\prime}x_{i-l,j^\prime}\right)^2\right) \\
&* \left(\max_{j,j^\prime \in \left(1,\ldots,p\right)\atop i\in \left(1,\ldots,n\right), l\in \left(1,\ldots,h\right)}\left|\frac{x_{i-l,j^\prime}\varepsilon_{i,j}}{n^a p^a}\right|\right)\left(\sum_{l=1}^h \sum_{j=1}^p\sum_{j^\prime=1}^p\left|d_{j,j^\prime}^l\right|\right)\\
&\leq  c\frac{s\left(\log h + 2\log p\right)}{\sqrt{np}}\sqrt{\frac{s\left(\log h + 2\log p\right)}{n}}\frac{\sqrt{s}}{n^{1/2-a}p^{1/2-a}}
\end{aligned}
\]
with probability at least $1 - 2\exp(-\tilde{c}p^an^a+\log (np^2h))-c_{1} \exp [-c_{2}(\log h + 2 \log p)]-c_{3} \exp (-c_{4} n \min \{\nu_{LB}^{-2}, 1\})$, where $\tilde{c}$ and $c$ are some finite positive constants and $a$ is an arbitrary small positive constant less than 1/2.
The last inequality is by applying the Lemma \ref{lemma4} on the first and last terms. For the middle term, we have for a positive finite constant $c^*$, $\operatorname{P}\left(\max_{j,j^\prime \in \left(1,\ldots,p\right)\atop i\in \left(1,\ldots,n\right), l\in \left(1,\ldots,h\right)}\left|\frac{x_{i-l,j^\prime}\varepsilon_{i,j}}{n^a p^a}\right|>c^*\right)
\leq \sum_{i,j,j^\prime,l}\operatorname{P}\left(|x_{i-l,j^\prime}\varepsilon_{i,j}|>p^an^ac^*\right)$. According to E.1 VAR section in \cite{wong2020lasso} with the assumption of stability and stationarity (Assumption~\ref{assumption3}), we know that $x_{i-l,j^\prime}$ is sub-Gaussian for all i, l and $j^\prime$. Then, according to Proposition 2.3 in \cite{vladimirova2020sub}, we have $x_{i-l,j^\prime}\varepsilon_{i,j}$ is sub-weibull ($\gamma=1$) for all i, j, $j^\prime$ and l. Thus, according to Definition \ref{def1}, we have for some finite positive constant $\tilde{c}$, $\sum_{i,j,j^\prime,l}\operatorname{P}\left(|x_{i-l,j^\prime}\varepsilon_{i,j}|>p^an^ac^*\right)\leq 2\exp\left(-\tilde{c}p^a n^a + \log (np^2h)\right)$.
Finally, by the union bound we can get the last inequality. Thus, we have under the same condition for the proof of first part with additional condition, $n^{1/2-a}\succsim\frac{\sqrt{s}}{p^{1/2-a}}$ for some $a \in (0,1/2)$, the absolute value of term 4 converges to zero in probability as $n$ goes to infinity.
\item
Under the same condition for the first part of the proof, the absolute value of term 5 is less than or equal to
\[
\begin{aligned}
& 4 \left(\max_{j,j^\prime \in \left(1,\ldots,p\right)\atop i\in \left(1,\ldots,n\right), l\in \left(1,\ldots,h\right)}\left|\frac{x_{i-l,j^\prime}\varepsilon^3_{i,j}}{n^b p^b}\right|\right)\frac{n^b}{p^{1-b}}\left(\sum_{l=1}^h \sum_{j=1}^p\sum_{j^\prime=1}^p\left|d_{j,j^\prime}^l\right|\right)\\
&\leq c\frac{\mathbb{Q}\left(\beta^{*}, \sigma^2\right)}{\alpha_{LB}}\frac{\sqrt{s\left(\log h + 2\log p\right)}}{n^{1/4}p^{1/4}}\frac{\sqrt{s}}{p^{3/4-b}n^{1/4-b}}
\end{aligned}
\]

with probability at least $1 - 2\exp(-\tilde{c}p^{b/2}n^{b/2}+\log (np^2h))-c_{1} \exp [-c_{2}(\log h + 2 \log p)]-c_{3} \exp (-c_{4} n \min \{\nu{LB}^{-2},1\})$, where $\tilde{c}$ and $c$ are some finite positive constants and $b$ is an arbitrary small positive constant less than $1/4$. Derivations of term 4 and 5 are similar with the only change that $x_{i-l,j^\prime}\varepsilon^3_{i,j}$ is sub-weibull($\gamma=1/2$) for all $i$, $j$, $j^\prime$ and $l$. Thus, we have under the same condition for the proof of first part with additional condition that $n^{1/4-b}\succsim\frac{\sqrt{s}}{p^{3/4-b}}$ for some $b \in (0,1/4)$, the absolute value of term 5 converges to zero in probability as $n$ goes to infinity.
\end{itemize}
Then, by applying Slutsky’s Theorem and Continuous Mapping Theorem, we conclude the proof of the second part. Finally, applying Slutsky's Theorem again yields Theorem \ref{th:1}.

\subsection*{Proof of Theorem \ref{th:2}:}
By some straightforward algebra, we have
$\hat{R}_{t^*+h}^{(n,\omega)}$ is equal to
$$
\begin{aligned}
 &\underbrace{\frac{1}{\omega}\sum_{i = {t^*+h} + 1}^{{t^*+h} + \omega}\left\|\left(\sum_{l=1}^h\left(\hat{A}_l - A_l\right)X_{i-l}\right) - \varepsilon_i \right\|^2_2}_{T_1} \underbrace{+\frac{1}{\omega}\sum_{i = {t^*+h} + 1}^{{t^*+h} + \omega}\left\|\sum_{l=1}^h\left(A_l - A_l^*\right)X_{i-l}\right\|^2_2}_{T_2}\\
&\underbrace{-\frac{2}{\omega}\sum_{i = {t^*+h} + 1}^{{t^*+h} + \omega}\left(\varepsilon_i^T\left(\sum_{l=1}^h\left(A_l - A_l^*\right)X_{i-l}\right)\right)}_{T_3}\\
&\underbrace{+ \frac{2}{\omega}\sum_{i = {t^*+h} + 1}^{{t^*+h} + \omega}\left(\left(\sum_{l=1}^h\left(\hat{A}_l - A_l\right)X_{i-l}\right)^T\left(\sum_{l=1}^h\left(A_l - A_l^*\right)X_{i-l}\right)\right)}_{T_4}.
\end{aligned}
$$
 Thus, we have that 
 $$
 \hat{T}_{t^*+h}^{(n,\omega)} = \underbrace{\sqrt{\frac{p\omega}{\hat{V}_n}}\left(\frac{T_1}{p}-\hat{\sigma}^2_n\right)}_{\text{term 1}} + \underbrace{\sqrt{\frac{p\omega}{\hat{V}_n}}\frac{T_2}{p}}_{\text{term 2}} + \underbrace{\sqrt{\frac{p\omega}{\hat{V}_n}}\frac{T_3}{p}}_{\text{term 3}} + \underbrace{\sqrt{\frac{p\omega}{\hat{V}_n}}\frac{T_4}{p}}_{term 4}.
$$

\begin{itemize}
    \item We have term 1 converges to $\mathcal{N}(0,1)$ in distribution as $n$ goes to infinity by the proof of Theorem~\ref{th:1}.
    
    \item For term 2, we have 
    $$
    \begin{aligned}
    \sqrt{\frac{p\omega}{\hat{V}_n}}\frac{T_2}{p} &= \frac{1}{\sqrt{\hat{V}_n}}\sqrt{\frac{\omega}{p}} \left(\beta^*-\beta_{new}\right)^{\prime} \hat{\Gamma}_\omega\left(\beta^*-\beta_{new}\right) \geq \frac{1}{\sqrt{\hat{V}_n}}\sqrt{\frac{\omega}{p}}(\alpha_{LB}^\prime - s\left(\tau^{\omega}_{LB}\right)^\prime)\left\|\beta^*-\beta_{new}\right\|^2_2\\
    &= \frac{\alpha_{LB}^\prime}{\sqrt{\hat{V}_n}}\sqrt{\frac{\omega}{p}}\left\|\beta^*-\beta_{new}\right\|^2_2 - c\frac{s(\log h + 2 \log p)}{\sqrt{\omega p}}\frac{1}{\sqrt{\hat{V}_n}}\left\|\beta^*-\beta_{new}\right\|^2_2
    \end{aligned} 
    $$
    with probability at least $1 - c_1 \exp(-c_2\omega \min((\nu_{LB}^\prime)^{-2},1))$ by using the sparsity assumption and Lemma~\ref{lemma1}. 

    On the other hand, by using the sparsity assumption and Lemma 2, we have 
    $$
    \begin{aligned}
    \sqrt{\frac{p\omega}{\hat{V}_n}}\frac{T_2}{p} 
    &\leq \frac{3\alpha_{UB}^\prime}{\sqrt{\hat{V}_n}}\sqrt{\frac{\omega}{p}}\left\|\beta^*-\beta_{new}\right\|^2_2 + c^\prime\frac{s(\log h + 2 \log p)}{\sqrt{\omega p}}\frac{1}{\sqrt{\hat{V}_n}}\left\|\beta^*-\beta_{new}\right\|^2_2,
    \end{aligned} 
    $$
    with probability at least $1 - c_3 \exp(-c_4\omega \min((\nu_{UB}^\prime)^{-2},1))$. Further, $c$ and $c^\prime$ are some positive constants; $\alpha_{LB}^\prime$, $\left(\tau^{\omega}_{LB}\right)^\prime$, $\nu_{LB}^\prime$, $\alpha_{UB}^\prime$, $\left(\tau^{\omega}_{UB}\right)^\prime$ and $\nu_{UB}^\prime$ refer to the corresponding components for the new VAR process after the change point. Finally, by the condition $s(\log h + 2 \log p) = o(\omega)$, we have $\sqrt{\frac{p\omega}{\hat{V}_n}}\frac{T_2}{p} \asymp \sqrt{\frac{\omega}{p}}\left\|\beta^*-\beta_{new}\right\|^2_2$.
    
    \item For term 3, we have 
    $$
    \begin{aligned}
    \left|\sqrt{\frac{p\omega}{\hat{V}_n}}\frac{T_3}{p}\right| &\leq  \frac{2}{\sqrt{\hat{V}_n}}\sqrt{\frac{\omega}{p}} \left\|\beta^*-\beta_{new}\right\|_1 * \left\|\hat{\gamma}_\omega-\hat{\Gamma}_\omega \beta_{new}\right\|_{\infty}\\
    &\leq 2\frac{\mathbb{Q}\left(\beta_{new}, \sigma^2\right)}{\sqrt{\hat{V}_n}} \sqrt{\frac{s(\log h + 2 \log p)}{p}}\left\|\beta^*-\beta_{new}\right\|_2
    \end{aligned}
    $$
    with probability at least $1-c_{5} \exp \left[-c_{6}(\log h + 2 \log p)\right]$ by sparsity assumption and Lemma~\ref{lemma3}, while $c_5$ and $c_6$ are some positive constant. With condition $\sqrt{\frac{s(\log h + 2 \log p)}{\omega}} = o(\|\beta^*-\beta_{new}\|_2)$, we have $|\sqrt{\frac{p\omega}{\hat{V}_n}}\frac{T_3}{p}| = o_p(\sqrt{\frac{\omega}{p}}\|\beta^*-\beta_{new}\|^2_2)$.
    
    \item We have that the absolute value of term 4 is less than or equal to 
    $$
    \begin{aligned}
    &\frac{2}{\sqrt{\hat{V}_n}}\sqrt{\frac{\omega}{p}}\omega^\eta p^\eta \max_{i,j^\prime,l}\left(\left|\frac{x_{i-l,j^\prime}^2}{\omega^\eta p^\eta}\right|\right)\left\|\hat{\beta}_n-\beta^*\right\|_1
    \left\|\beta^*-\beta_{new}\right\|_1\\
    &\leq \frac{c}{\sqrt{\hat{V}_n}}\omega^\eta p^\eta s \sqrt{\frac{s(\log h + 2 \log p)}{n}} \sqrt{\frac{\omega}{p}} \left\|\beta^*-\beta_{new}\right\|_2
    \end{aligned}
    $$
    with probability at least $1-c_{7} \exp [-c_{8}(\log h + 2 \log p)]-c_{9} \exp (-c_{10} n \min \{\nu_{LB}^{-2}, 1\}) - 2\exp (-c_{11}p^\eta \omega^\eta + \log (\omega p h))$ for some positive constant c and some $\eta \in (0,\frac{1}{4})$. To get the inequality above, $\max_{i,j^\prime,l}\left(\left|\frac{x_{i-l,j^\prime}^2}{\omega^\eta p^\eta}\right|\right)$ is bounded by a positive constant with high probability by using the properties of Sub-Weibull distribution and the nature of stationary time series. This part is very similar to the second part of the proof of the Theorem~\ref{th:1}. In addition, lemma~\ref{lemma4} is applied to bound $\left\|\hat{\beta}_n-\beta^*\right\|_1$. With condition $\omega^\eta p^\eta  \sqrt{\frac{s^3(\log h + 2 \log p)}{n}} = o(\|\beta^*-\beta_{new}\|_2)$, we have $|\sqrt{\frac{p\omega}{\hat{V}_n}}\frac{T_4}{p}| = o_p(\sqrt{\frac{\omega}{p}}\|\beta^*-\beta_{new}\|^2_2)$.
\end{itemize}
Combining these four terms, we will have the inequality in Theorem~\ref{th:2} with probability at least 1 - $\epsilon_{n,p,\omega}$ by the union bound. We have $L_{t^*+h}^{(n,\omega)} = o_p(\sqrt{\frac{\omega}{p}}\|\beta^*-\beta_{new}\|^2_2)$ and $(L_{t^*+h}^{(n,\omega)})^\prime = o_p(\sqrt{\frac{\omega}{p}}\|\beta^*-\beta_{new}\|^2_2)$, because we have (1) $\hat{V}_n \xrightarrow[]{p}\operatorname{Var}(\varepsilon^2_{1,1})$ as $n \xrightarrow[]{}\infty$ by the proof of Theorem \ref{th:1}, and (2) additional conditions for Theorem \ref{th:2}. This concludes the proof of Theorem \ref{th:2}.

\section{NUMERICAL STUDIES}\label{Appendix D}
In this section, we evaluate the performance of our algorithm using synthetic data generated by a VAR process. The primary metrics used for assessing the algorithm's effectiveness are the run length and detection delay, which are standard measures for online change point algorithms. These metrics have also been used in previous studies, such as \cite{chen2022high,mei2010efficient, xie2013sequential, chan2017optimal}. To compute the run length, we apply our algorithm to a data set without any change points and record the number of observations monitored before the first alarm is raised. On the other hand, to determine the detection delay, we apply our algorithm to a data set containing a change point and record the distance between the location of the last observation read by the algorithm and the true location of the change point after the alarm is correctly triggered. 

\subsection{Simulation A: Run Length}\label{sim:runlength}

The majority of online change point detection algorithms offer parameters that allow practitioners to control the target average run length (ARL). The target average run length represents the expected number of observations or time steps required by the algorithm to raise an alarm when applied to a data sequence without any actual change points. The ARL is an essential measure to balance the algorithm's performance between being sensitive enough to detect changes promptly and avoiding excessive false alarms. In our algorithm, the target average run length is primarily influenced by the choice of parameter $\alpha$. Specifically, setting $\alpha$ to be $1/1000$ will result in a lower bound of $1000$ for the ARL of our algorithm. For this simulation scenario, our focus is on exploring how to regulate the average run length by selecting an appropriate $\alpha$, as well as investigating how the dimension of data and the size of training data affect the run length of our algorithm. In this simulation, we consider three different choices for the parameter $\alpha$, namely $1/1000$, $1/5000$, and $1/10000$. We estimate transition matrices and variances using training data with sizes $n$ equal to $500$, $1000$, $1500$, and $2000$. Additionally, we vary the dimension of the data, setting it to be $10$, $40$, $70$, and $100$. For each combination of $\alpha$, $n$, and $p$, we generate $10/\alpha + n$ data points from a lag-$1$ VAR process without any change points. After estimating the transition matrices and variances using $n$ observations, we proceed to apply our algorithm on the remaining data. The algorithm is run with a pre-specified detection delay of $\omega$ set to $50$ and $h$ set to $1$. We repeat this process $200$ times, recording all the run lengths for each combination of parameters. The box plots of these run lengths are presented in Figure~\ref{fig:RLparameters}.

\begin{figure}[!ht]
\centering
\includegraphics[width=1\textwidth]{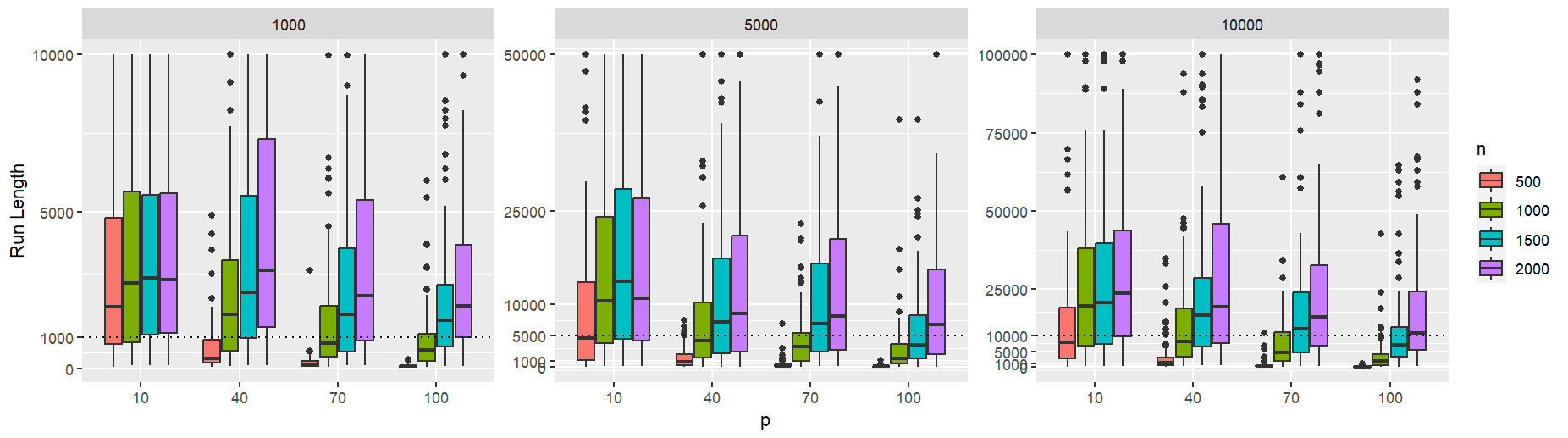}
\caption{Simulation A: This plot displays the box plots representing the run lengths of our algorithm for different combinations of training data size $n$, data dimension $p$, and $\alpha$. The values $1000$, $5000$, and $10000$ correspond to the target ARL, which is controlled by setting $\alpha$ to $1/1000$, $1/5000$, and $1/10000$, respectively. }
\label{fig:RLparameters}
\end{figure}

As depicted in the plot, when the training sample size is adequately large, the run lengths of our algorithm are consistently lower bounded by $1/\alpha$ with high probability. However, when training data is limited, the run lengths may not reach the target run length. Therefore, we recommend that practitioners set $1/\alpha$ to be equal to the length of the data that needs to be monitored when there is sufficient training data available. In cases where training data is limited, further decreasing the value of $\alpha$ might be a viable solution to reduce the probability of false alarms. Another noteworthy observation from this figure is that as the dimension of the data increases, the size of the training data set needed to maintain a satisfactory run length also increases.

\subsection{Simulation B: Detection Delay}\label{sim:detectiondelay}
\begin{figure}[!ht]
\centering
\includegraphics[width=1\textwidth]{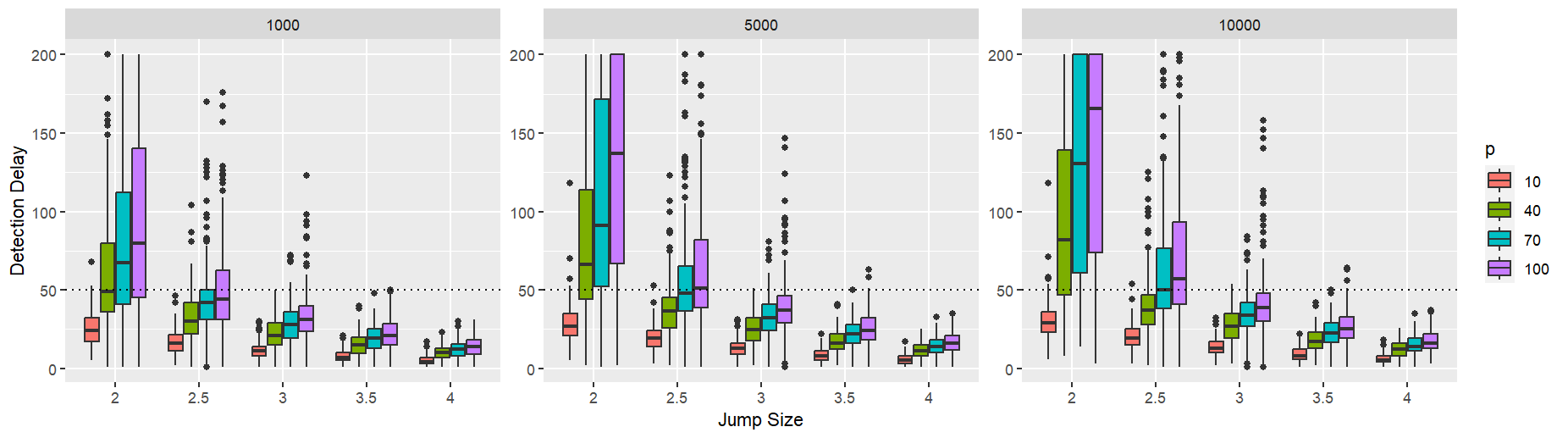}
\caption{Simulation B: This plot displays the box plots representing the detection delays of our algorithm for different combinations of jump size, data dimension $p$, and $\alpha$. The values $1000$, $5000$, and $10000$ correspond to the target ARL, which is controlled by setting $\alpha$ to $1/1000$, $1/5000$, and $1/10000$, respectively. The horizontal dashed line represents the pre-specified detection delay, denoted as $\omega$. }\label{fig:ddparameters}
\end{figure}

Detection delay measures the time lag between the occurrence of a change point and the moment the algorithm successfully detects it. A shorter detection delay implies that the algorithm can quickly identify and adapt to changes, which is critical in real-time systems where timely reactions are necessary to mitigate potential risks or capitalize on emerging opportunities. In this simulation, we explore how the detection delay of our algorithm is influenced by different choices of $\alpha$, data dimension $p$, and the jump size of the change point. Specifically, we set $\alpha$ to three different values: $1/1000$, $1/5000$, and $1/10000$, and vary the data dimension to $10$, $40$, $70$, and $100$, as well as the jump size to $2$, $2.5$, $3$, $3.5$, and $4$. To focus solely on the detection delay and eliminate the impact of false alarms, we generate data points from a lag-$1$ VAR process with a total length of $2200$. The change point is located at position $2000$, which corresponds to the end of the training period. By doing so, we can consider the number of observations our algorithm reads before raising an alarm as the detection delay. We run our algorithm with a pre-specified detection delay set to $50$ and recorded the corresponding detection delay. This process was repeated 200 times for each combination of parameters. The resulting detection delays were then summarized using box plots, as shown in Figure~\ref{fig:ddparameters}.

As depicted in the figure, when the jump size is large, the detection delay of our algorithm is consistently upper bounded by the pre-specified detection delay with high probability. This finding aligns with Corollary~\ref{coro1}, confirming that the detection delay will be upper bounded by $\omega + h$ with high probability when the jump size is sufficiently large. On the other hand, when we choose a smaller value for $\alpha$, the detection delay of our algorithm increases. Although this effect is only pronounced when the jump size is small, it is still essential to select an appropriate $\alpha$ to strike a balance between the detection delay and the probability of false alarms in practical applications. Another observation from the figure is that as the dimension of the data increases, a larger jump size is required to achieve a small detection delay. This observation aligns with our assumption on the jump size, as introduced in Theorem~\ref{th:2}.

\subsection{Simulation C: Choice of $\omega$}\label{sim:w}

\begin{figure}[!ht]
\centering
\includegraphics[width=1\textwidth]{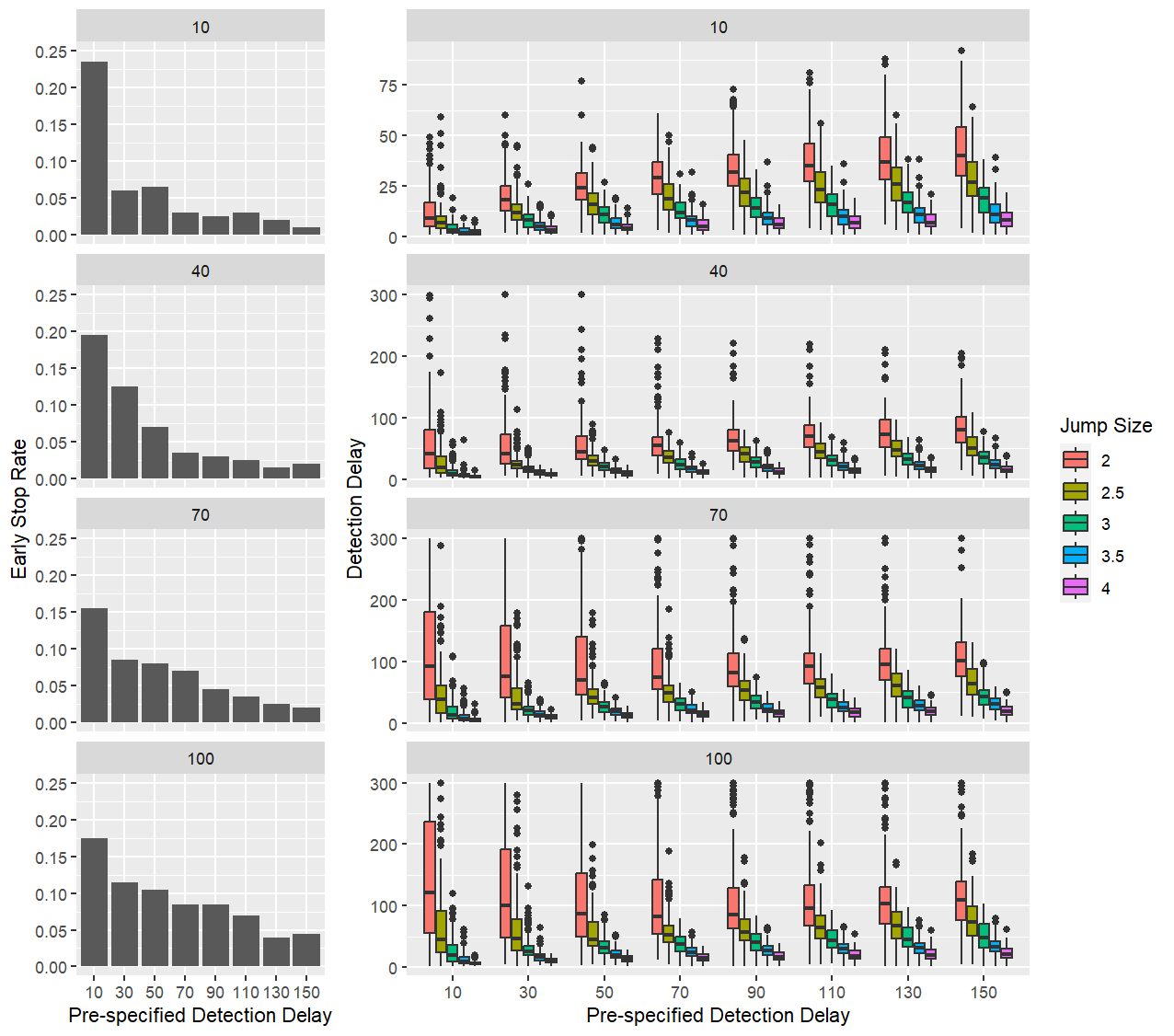}
\caption{Simulation C: The plot on the left summarizes the early stop rates, while the plot on the right presents the detection delays. For each grid in the plots, the dimension of data is set to $10$, $40$, $70$, and $100$.}\label{fig:windowsize}
\end{figure}

The pre-specified detection delay can be regarded as a moving window that contains data points used to compute the test statistic for our algorithm. The selection of its size, denoted as $\omega$, significantly impacts the performance of our algorithm in terms of both the probability of false alarms and the detection delay. Thus, in this simulation, our main objective is to investigate how various choices of $\omega$ influence the detection delay and early stop rate of our algorithm under different combinations of change point jump size and data dimension. For each combination of $p$, $\omega$, and jump size, we generate a data set of $2600$ data points using a lag-$1$ VAR process, with the change point occurring at time $2300$. We then estimate the transition matrices and variances using the first $2000$ data points and begin monitoring from that point onward. During the monitoring process, if our algorithm raises a false alarm before reaching the true change point, we consider it an early stop and record this occurrence. On the other hand, if the algorithm raises an alarm after the true change point, we record the detection delay. The $\alpha$ is set to $1/1000$ in all combinations. This entire process is repeated $200$ times. The early stop rate is calculated by dividing the number of early stops by $200$ and all detection delays are recorded for each combination. The results are presented and summarized in Figure~\ref{fig:windowsize}.

Figure~\ref{fig:windowsize} demonstrates that larger values of \( \omega \) are preferable for effectively controlling the false alarm rate. This is reflected in the early stop rate shown in the left panel. However, selecting \( \omega \) becomes more intricate when aiming to minimize detection delay, as it is highly sensitive to the jump size and the data dimensionality. In practice, this dependence makes it challenging to derive an optimal data-driven approach for selecting \( \omega \) when the true changes and jump sizes are unknown. According to the conditions specified in the theoretical results, \( \omega \) should scale as \( c \log(hp^2) \) for some constant \( c>0 \). Carefully reviewing the results in Figure~\ref{fig:windowsize}, for practical implementation, \( \omega = 10 \log(hp^2) \) is recommended. This choice results in \( \omega \) values of 46, 74, 85, and 92 for dimensions \( p = 10, 40, 70,\) and \(100\), respectively. These values effectively maintain a low early stop rate while minimizing detection delay for small jump sizes (e.g., jump size = 2). As shown in the figure, the impact of \( \omega \) on detection delay is more pronounced for smaller jump sizes. Although this choice may not yield the optimal delay for larger jump sizes, it incurs only a minor increase in detection delay relative to the optimal \( \omega \). However, when the training sample size is small, adjusting \( \omega \) has limited effect on enhancing detection quality. Moreover, when the estimation of transition matrices is imprecise, a larger window size can introduce more error into the test statistic, which aligns with the condition in Theorem \ref{th:1}, where \( \omega = o(n) \). In practical settings, a training sample size approximately 10–15 times the window size \( \omega \) is advisable, which can serve as a reference for selecting \( \omega \) when training data is limited.

\subsection{Simulation D: Effectiveness of Refinement}\label{sim:refine}

\begin{figure}[!ht]
\centering
\includegraphics[width=1\textwidth]{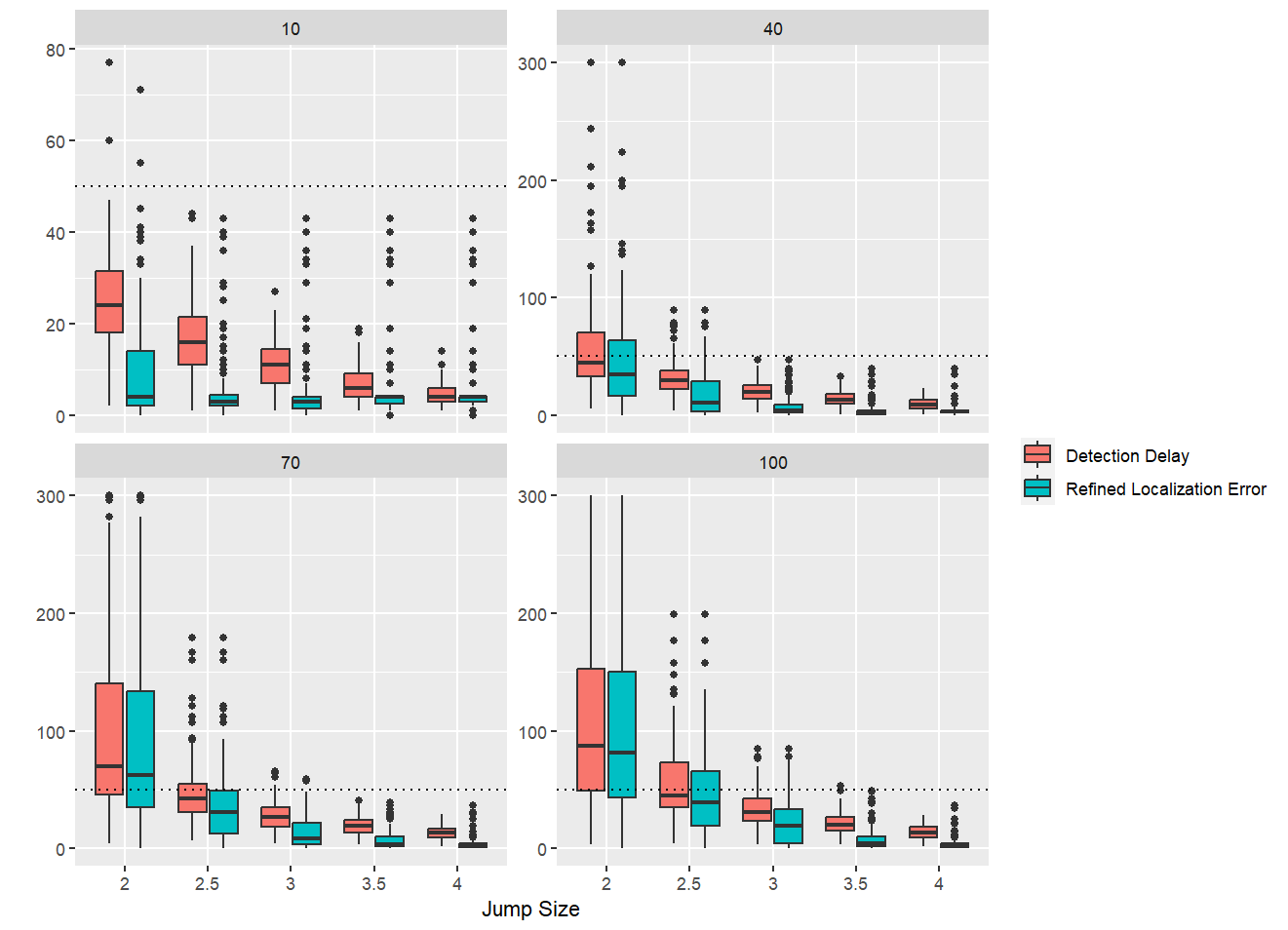}
\caption{Simulation D: This plot corresponds to box plots of detection delays and refined localization errors. For each grid in the plots, the dimension of data is set to $10$, $40$, $70$, and $100$. The horizontal dashed line represents the pre-specified detection delay, denoted as $\omega$.}\label{fig:rs01}
\end{figure}

In this simulation, our primary focus is to assess the effectiveness of the proposed change point localization refinement process, which was introduced in Section~\ref{refine}. Before delving into the simulation setup, we first introduce a few terms related to the refinement process. The first term is the ``refine size," which is defined as the ratio between the new pre-specified detection delay and the old pre-specified detection delay. For instance, if the refine size is set to $0.1$, and the original $\omega$ is $50$, then the value of $\omega^\prime$ used in the refinement process will be $5$. The second term is the ``refined localization error," representing the distance between the refined location of the estimated change point and the true location of the change point. Formally, if an alarm is raised at time $\hat{t}$ (i.e., $\left|\hat{T}_{\hat{t}}^{(n,\omega)}\right| > \Phi(1-\alpha/2)$), and the alarm is not a false alarm, then the last observation read by our algorithm will be at $\tilde{t} = \hat{t} + \omega$. In this case, if the true change point is located at $t^*$ and the refined location of the estimated change point is at $\hat{\hat{t}}$, then the detection delay and refined localization error will be $\tilde{t} - t^*$ and $|\hat{\hat{t}} - t^*|$, respectively. Similar to the previous simulations, we consider various values for the dimension of data and the jump size of the change point. Additionally, we introduce the refine size, which takes values of $1/2$, $1/5$, $1/10$, and $1/50$. The data points are generated with a total length of $2600$, and the change point is located at position $2300$. We estimate the transition matrices and variances using the first $2000$ observations. Subsequently, we apply our algorithm to the remaining data points with $\alpha$ set to $1/1000$ and $\omega$ set to $50$, both with and without the confirmation step introduced at the end of Section~\ref{refine}. This entire process is repeated $200$ times, during which we calculate the early stop rate and record the detection delays and refined localization errors for all combinations.

Figure~\ref{fig:rs01} presents the summarized box plots for detection delays and refined localization errors when the refine size is set to $0.1$ for all combinations of data dimensions and jump sizes without the confirmation step. As depicted in the figure, the refinement process effectively reduces the localization error, specially when the jump size is relatively large. As illustrated in Figure~\ref{fig:rsearlystop}, the confirmation step notably decreases the possibility of false alarms. Thus, the confirmation step can be considered as an option to minimize false alarm probabilities. To provide practical guidance on the choice of refine size based on the window size recommendation \( \omega = 10\log(hp^2) \) in Section~\ref{sim:w}, we conducted a sensitivity analysis. Specifically, in each experimental iteration, we simulated scenarios in which alarms were triggered using \( \omega = 10\log(hp^2) \) observations, with the true change point positioned at the center of this larger window. The refinement was then applied using refine sizes (\(0.1, 0.2, 0.3, 0.4, 0.5\)). This analysis was conducted across various data dimensions (\( p = 10, 40, 70, 100 \)) and jump sizes (2, 3, 4), with the goal of identifying the refine size that minimized refined localization error. After 100 repetitions, the average optimal refine sizes consistently clustered around 0.1 to 0.2, as shown in Table~\ref{tbl:refine}. Based on these results, we recommend using \( 0.15 \) for practical applications.

\begin{table}[!ht]
    \centering
    \begin{tabular}{|c|c|c|c|c|}
    \hline
    Jump Size & \( p = 10 \) & \( p = 40 \) & \( p = 70 \) & \( p = 100 \) \\ 
    \hline
    2 & 0.131 & 0.138 & 0.149 & 0.157 \\ 
    3 & 0.116 & 0.126 & 0.137 & 0.157 \\ 
    4 & 0.113 & 0.107 & 0.119 & 0.116 \\ 
    \hline
    \end{tabular}
    \caption{Average of the optimal refine sizes for different dimensions and jump sizes.}
    \label{tbl:refine}
\end{table}

\begin{figure}[!ht]
\centering
\includegraphics[width=0.8\textwidth]{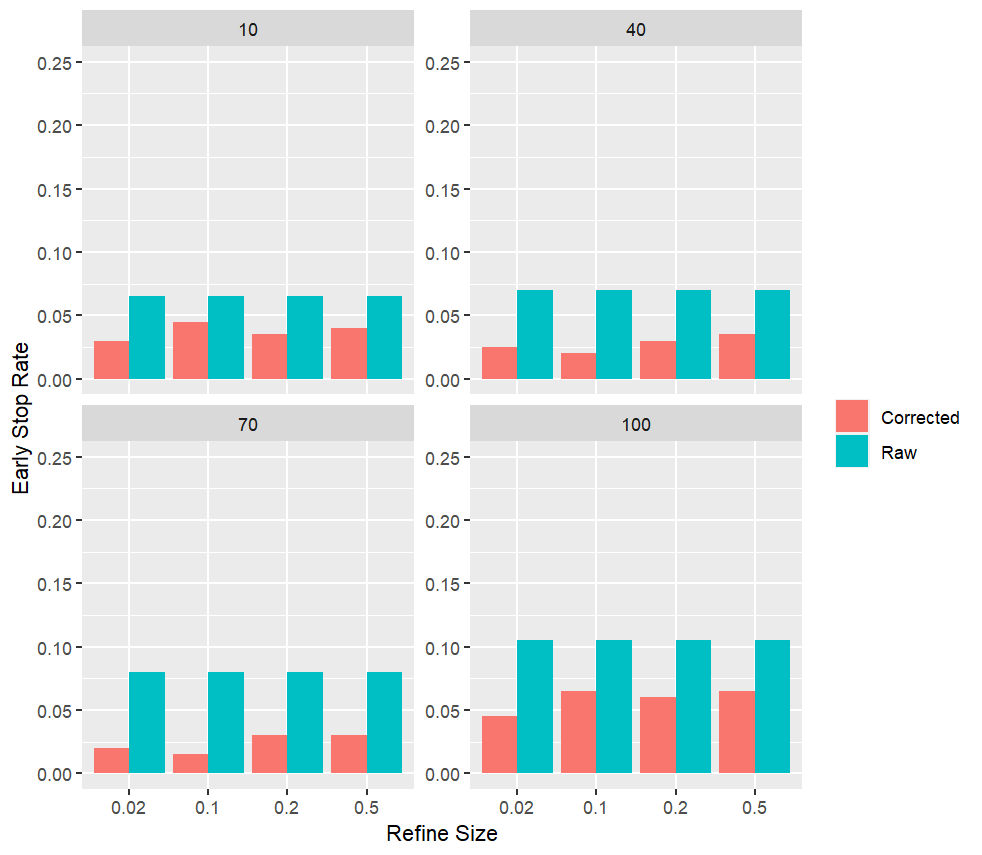}
\caption{Simulation D: This plot provides a summary of the early stop rates for all combinations of refine sizes, data dimensions and whether confirmation is used or not. For each grid in the plots, the dimension of data is set to $10$, $40$, $70$, and $100$.}\label{fig:rsearlystop}
\end{figure}

\begin{figure}[!ht]
\centering
\includegraphics[width=0.8\textwidth]{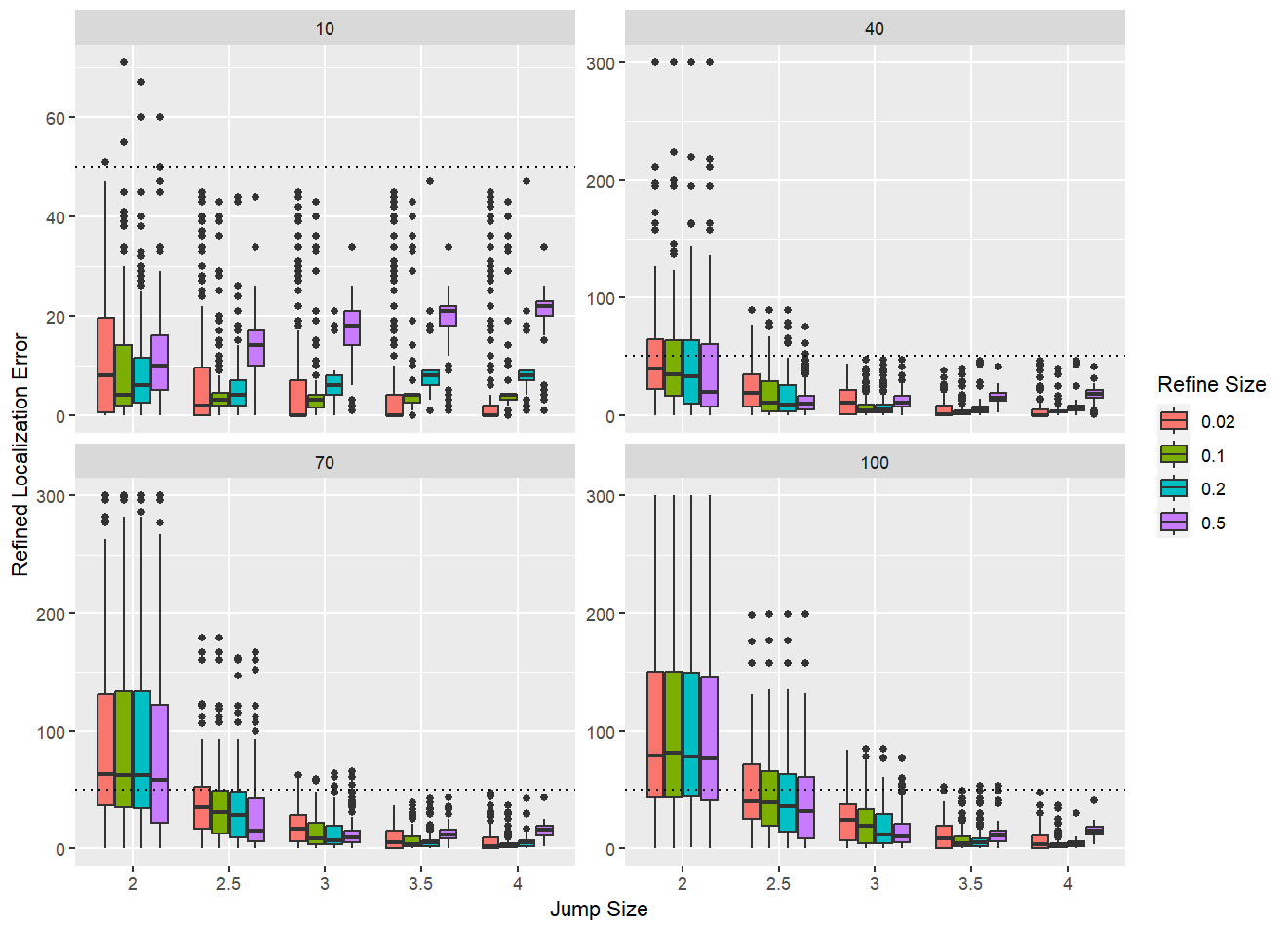}
\caption{Simulation D: This plot provides a summary of the refined localization errors for all combinations of refine sizes and data dimensions. For each grid in the plots, the dimension of data is set to $10$, $40$, $70$, and $100$. The horizontal dashed line represents the pre-specified detection delay, denoted as $\omega$.}\label{fig:rsrefinedle}
\end{figure}

\subsection{Simulation E: Multiple Change Point Detection}\label{sim:mcp} In this simulation, we evaluate the performance of our method in terms of F1 score when dealing with multiple change points in low-dimensional ($p=10$) and high-dimensional ($p=100$) setups. For a range of jump sizes, we generate VAR time series of size $6900$ with change points located at positions $2300$ and $4600$. Specifically, for the first $2300$ data points, we use the transition matrix $0.8*I_p$. The subsequent $2300$ data points are generated using a new transition matrix with a certain jump size compared to the previous one. Finally, the last $2300$ data points are generated again using the transition matrix $0.8*I_p$. We implement Algorithm~\ref{alg1} sequentially, as mentioned in Section~\ref{multicp}, and consider the refined estimated change points within $2300 \pm 10$ and $4600 \pm 10$ as true positives. In each repetition, we calculate the following metrics:
$F1\,Score = \frac{2 \times TP}{2 \times TP + FP + FN}$ where TP, FP, and FN represent true positives, false positives, and false negatives, respectively. We then calculate the averages among the 100 repetitions for different jump sizes. These metrics are commonly used in assessing detection algorithms in scenarios with multiple change points such as in \cite{bai2023unified}. The results are summarized in Table~\ref{tbl:multi}. Under both low-dimensional and high-dimensional setups, we set $n = 2000$, $\omega = 50$, $\alpha = 0.0001$, and $h = 1$ for our algorithm. To reduce the number of false alarms, we perform the confirmation step as introduced in Section~\ref{refine}. As shown in Table~\ref{tbl:multi}, our algorithm exhibits strong capabilities in handling data with multiple change points, especially when the jump size is large, under both low-dimensional and high-dimensional setups.

\begin{table}[ht]
\caption{Simulation E: The F1 score for our algorithm is assessed in a multiple change point scenario. We consider jump sizes (JS) ranging from $2$ to $4.5$ under both low-dimensional ($p=10$) and high-dimensional ($p=100$) setups.}
\label{tbl:multi}
\begin{center}
\begin{tabular}{|c|c|c|c|c|c|c|}
\hline
& JS = 2.0 & JS = 2.5 & JS = 3.0 & JS = 3.5 & JS = 4.0 & JS = 4.5 \\
\hline
$p=10$ & 0.73 & 0.88 & 0.97 & 0.98 & 0.99 & 0.99 \\
\hline
$p=100$ & 0.06 & 0.26 & 0.45 & 0.66 & 0.88 & 1.00 \\
\hline
\end{tabular}
\end{center}
\end{table}

\subsection{Simulation with Variance Heterogeneity}\label{sim:hetervar}
This section provides simulation results for the average run length and detection delay of our algorithm under the same setup as in Simulation A and B, with $\alpha = 1/1000$. However, in this simulation, the diagonal entries of the covariance matrix for the noise is randomly generated from a uniform distribution ranging from $0.5$ to $1.5$ to assess our algorithm's performance with variance heterogeneity. The test statistic is calculated as described in Remark~1. Satisfactory performance is achieved for both average run length and detection delay in this scenario, as shown in Figure\ref{fig:heter}.
\begin{figure}[!ht]
\centering
\includegraphics[width=1\textwidth]{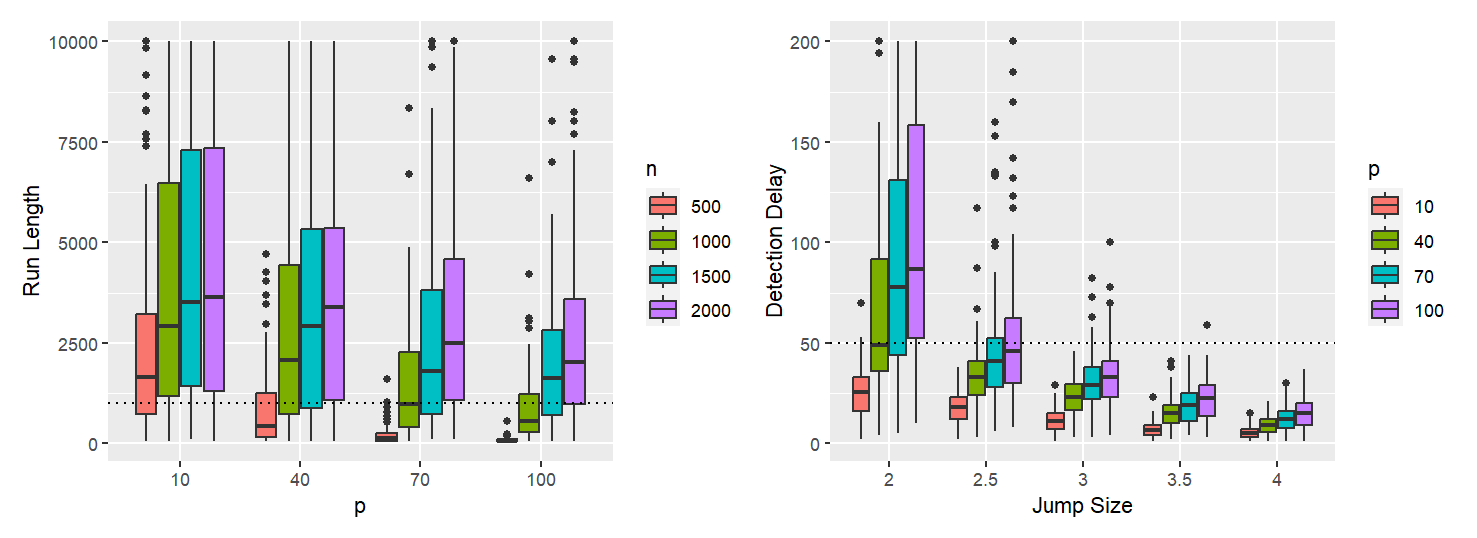}
\caption{Simulation with variance heterogeneity: The covariance matrix's diagonal elements for errors are randomly selected from a uniform distribution ranging from 0.5 to 1.5. The rest of the settings align with those of Simulation A and B, with the value of $\alpha$ set to $1/1000$.}\label{fig:heter}
\end{figure}

\subsection{Numerical Comparison in High-Dimensional Settings}\label{sim:hdcompare}

This section supplements Section~\ref{section:num_main} by extending the numerical comparison to high-dimensional settings with \( p = 100 \). In addition to this modification, we increased the training sample size from 500 to 2000 and adjusted the jump sizes from 2 and 3 to 3 and 4 to accommodate the higher dimensionality. The results, shown in Figure~\ref{fig:compare100}, demonstrate that our proposed algorithm performs comparably to the case when \( p = 10 \). Notably, the algorithm remains competitive with alternative methods when the data is generated without a VAR structure and continues to outperform all competing methods when the data is generated with a VAR structure. Additionally, we observed that the TSL method \citep{qiu2022transparent} required an excessive amount of memory (over 8,388,608 GB) to allocate the necessary vectors in the larger dimensional setting. As a result, we were unable to obtain results for the TSL method in this scenario, and it is therefore not included in the comparison.

\begin{figure}[!ht]
    \centering
    \includegraphics[width=1\linewidth]{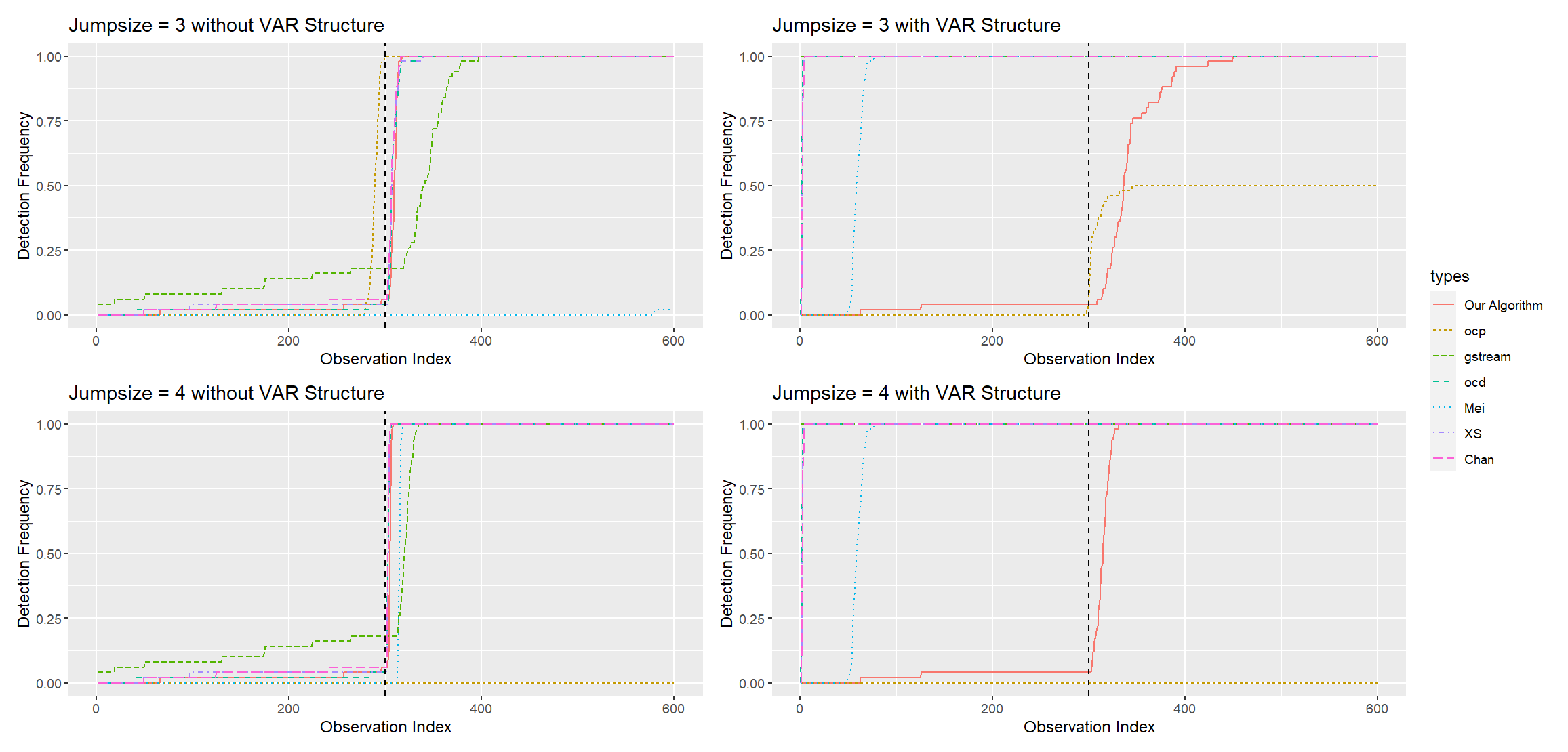}
    \caption{Summary of detection frequencies for all algorithms. The black dashed vertical line indicates the location of the true change point. An ideal algorithm would demonstrate a detection frequency of zero before the line and achieve one immediately after the line.}
    \label{fig:compare100}
\end{figure}

\subsection{Robustness to Time-Varying Transition Matrices}\label{sim:timevarying}

To illustrate the robustness of the proposed algorithm to small time-varying effects, we conducted a set of simulations with a transition matrix that varies slightly over time. These simulations, summarized in Figure~\ref{fig:varying}, involved introducing time-varying behavior in three specific entries of the transition matrix. In the left panel of the figure, the entry in row 2, column 2 oscillates between \( 0.5 \pm 0.3 \) with a period of 500. The other two time-varying entries oscillate similarly but start from different initial values. These oscillations persist throughout the simulation, even after change points. We varied the amplitude of oscillation across different runs, testing values of 0, 0.1, 0.2, and 0.3, where 0 represents no time-varying effect. Two sets of simulations were conducted to examine the algorithm's performance under these conditions. The first set of simulations, shown in the middle panel, evaluated how changes in amplitude affect the run length. With settings similar to those in \ref{sim:runlength}—using \( \alpha = 1/1000 \), \( n = 500 \), \( p = 10 \), and \( \omega = 50 \)—the results show that, for small oscillation amplitudes, the algorithm maintains control over the target ARL, keeping it above \( 1/\alpha \). However, as the amplitude increases, the run length decreases, indicating that larger time-varying effects are more likely to be misidentified as true changes, leading to a higher false alarm rate. The second set of simulations, shown in the right panel, analyzed the effect of oscillation amplitude on detection delay. Under settings similar to those in \ref{sim:detectiondelay} (with \( \alpha = 1/1000 \), \( n = 500 \), \( p = 10 \), and a jump size of 2), the results indicate that the detection delay remains relatively stable, even as oscillation amplitude increases. This demonstrates that the detection delay is less sensitive to moderate time-varying effects. In summary, while the full extension of this method to handle time-varying transition matrices lies beyond the scope of this study, these simulations show that the proposed algorithm is robust to small time-varying effects. Future research will further explore this aspect. For now, the focus of this work remains on the piecewise constant setting.

\begin{figure}[!ht]
    \centering
    \includegraphics[width=1\textwidth]{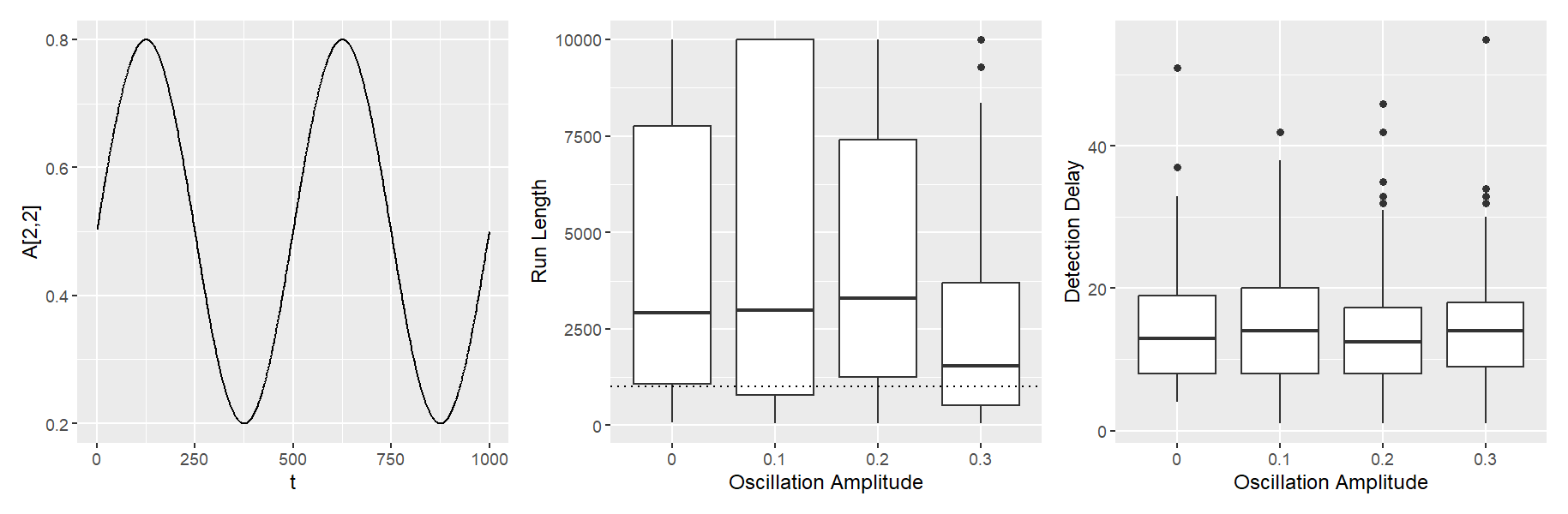}
    \caption{Simulation with Time-Varying Transition Matrices. (Left) Illustration of how specific entries in the transition matrix vary over time. The entry at row 2, column 2 oscillates between 0.5 $\pm$ 0.3 with a period of 500. (Middle) Effect of oscillation amplitude on the run length. The dashed line indicates the target ARL of \(1/\alpha\), with run lengths expected to exceed this threshold. (Right) Effect of oscillation amplitude on the detection delay.}\label{fig:varying}
\end{figure}

\subsection{Robustness to Complex Transition Matrix Structures}\label{sim:nonsparse}

The proposed algorithm is capable of handling more complex transition matrices, provided there is a sufficiently large training sample size to enable accurate estimation. To illustrate its robustness, we conducted additional simulations using a low-rank plus sparse structure for the transition matrix, following the setup described in \cite{bai2020multiple}. In this simulation, the transition matrix includes a low-rank component with a rank of 2, resulting in a structure that is no longer sparse. The simulation parameters were set to \( p = 25 \), \( \alpha = 1/1000 \), and \( \omega = 50 \), and Figure~\ref{fig:lps} summarizes the results. As shown in Figure~\ref{fig:lps}, the false alarm rate remains well-controlled when the sample size is sufficiently large. However, more observations are required to ensure that the average run length (ARL) meets the target threshold of \( 1/\alpha = 1000 \) when handling complex transition matrices. Notably, the detection delay appears to be more sensitive to the magnitude of the jump than to the structure of the transition matrix itself. The settings for these simulations align with those used in Sections \ref{sim:runlength} and \ref{sim:detectiondelay}, where we analyze the run length and detection delay under different scenarios. While these results demonstrate the algorithm’s capability to handle more intricate transition matrix structures, we do not pursue a rigorous theoretical analysis of this aspect here. Instead, we aim to provide an empirical illustration of the algorithm's robustness, leaving a deeper theoretical investigation for future work.

\begin{figure}[!ht]
    \centering
    \includegraphics[width=1\linewidth]{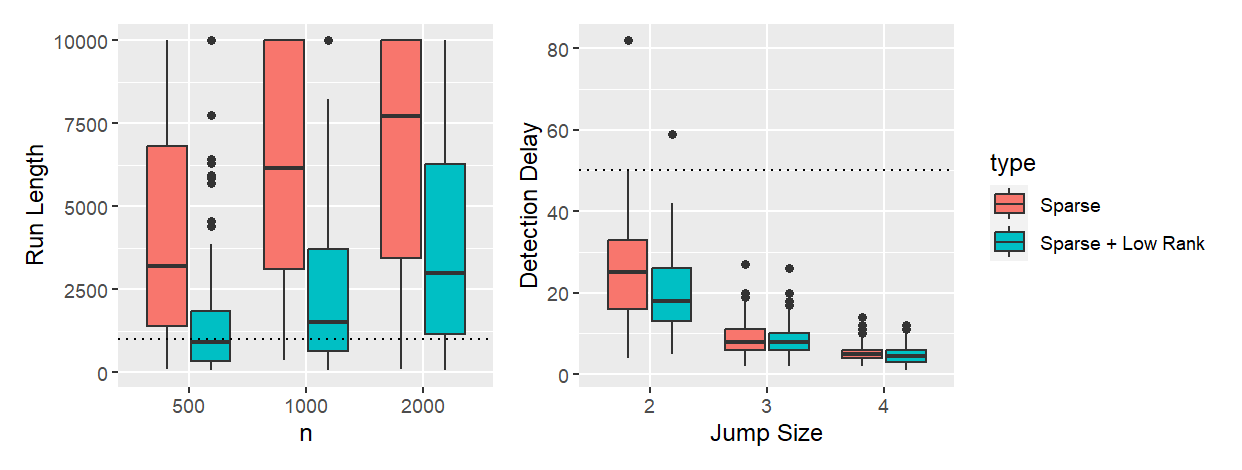}
    \caption{Performance of the Proposed Algorithm in Terms of Run Length and Detection Delay with Sparse vs. Sparse + Low-Rank Transition Matrices.}
    \label{fig:lps}
\end{figure}

\section{ADDITIONAL RESULTS FOR REAL DATA ANALYSIS}\label{addreal}
This section provides additional details for the S\&P 500 real data experiment and presents results from the real data experiment conducted on EEG data.

\subsection{Additional Details for S\&P 500 Data}\label{addsp500}

To establish a reference for the anomaly period, the return volatility is used, a standard measure of return dispersion (also used as a reference in \cite{keshavarz2020sequential}). Let \( x_{t,j} \) represent the daily log return for stock \( j \) at time \( t \), and let \(\operatorname{std}(x)\) denote the standard deviation of \(x\). The return volatility of stock \( j \) at time \( t \) is estimated using the formula \( z_{t,j} = \operatorname{std}(x_{t,j}, \ldots, x_{t+\omega-1,j}) \). The average \( z_{t,j} \) across all 186 stocks is then computed, and this average return volatility is rescaled for visualization. The rescaled value is shown as the black line in Figure~\ref{fig:real}. A high average return volatility generally indicates an increased likelihood of a change point. Figure~\ref{fig:real} also shows the locations of alarms (red vertical lines) and the estimated onsets (black vertical lines) of alarm clusters.

\begin{figure}[!ht]
\centering
\includegraphics[width=0.8\textwidth]{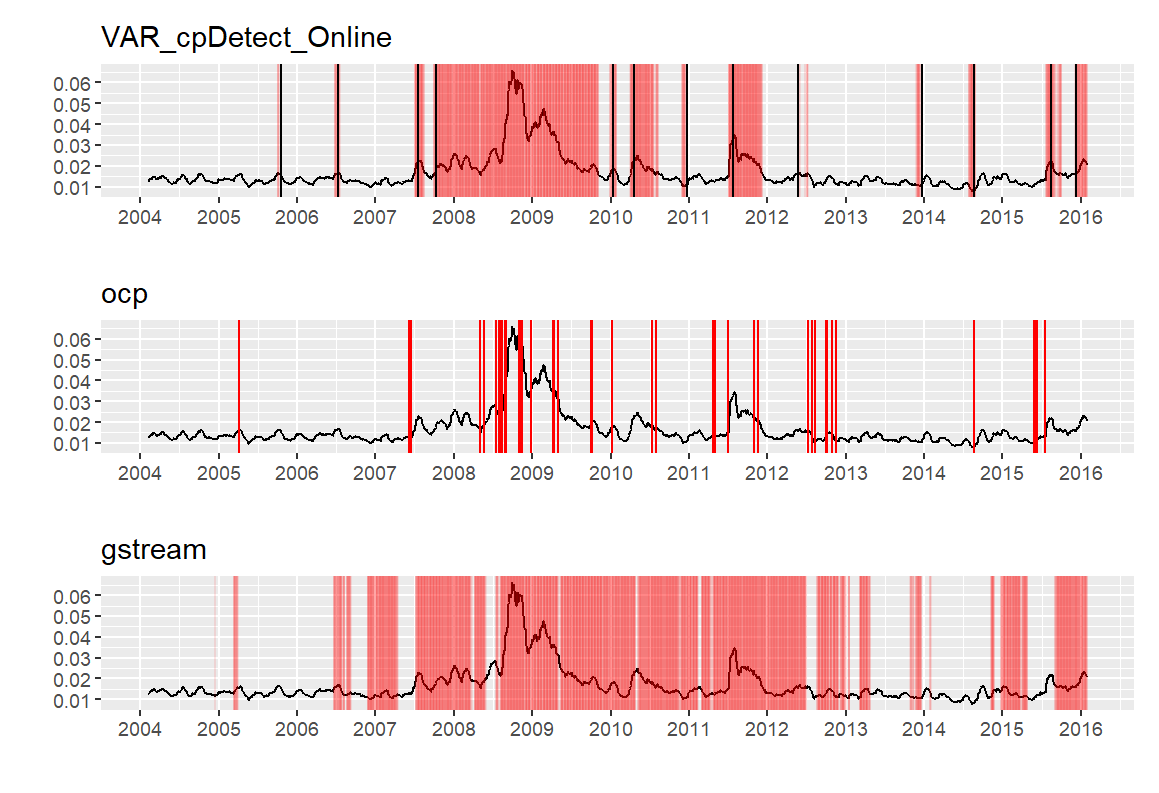}
\caption{Experiment results on S\&P 500 data: The black line represents the rescaled average return volatility, while the red lines correspond to alarm locations for (top) VAR\_cpDetect\_Online, (middle) ocp, and (bottom) gstream.}\label{fig:real}
\end{figure}

\subsection{Real Data Experiment on EEG Data}\label{eeg}
For the EEG data, this experiment aims to detect and raise an alarm indicating an impending seizure, occurring around \( t = 85 \), as confirmed by neurologists and validated by offline change point detection methods in Section~\ref{realdata} of \cite{safikhani2022joint}. The data was collected from 18 EEG channels over a 227.68-second duration. To focus on seizure onset, data after \( t = 150 \) seconds were removed. The final dataset comprises 1500 data points over 150 seconds, with a dimension of \( p = 18 \). The first 300 data points were designated as historical data, with parameters \( \omega = 30 \) and \( \alpha = 1/2000 \) used in our method. The hyperparameters for the baseline algorithms were selected as described in Section~\ref{section:num_main}. All methods were applied to the entire dataset without halting upon alarm, and the alarm locations are documented in Figure~\ref{fig:EEG}.
\begin{figure}[!ht]
\centering
\includegraphics[width=1\textwidth]{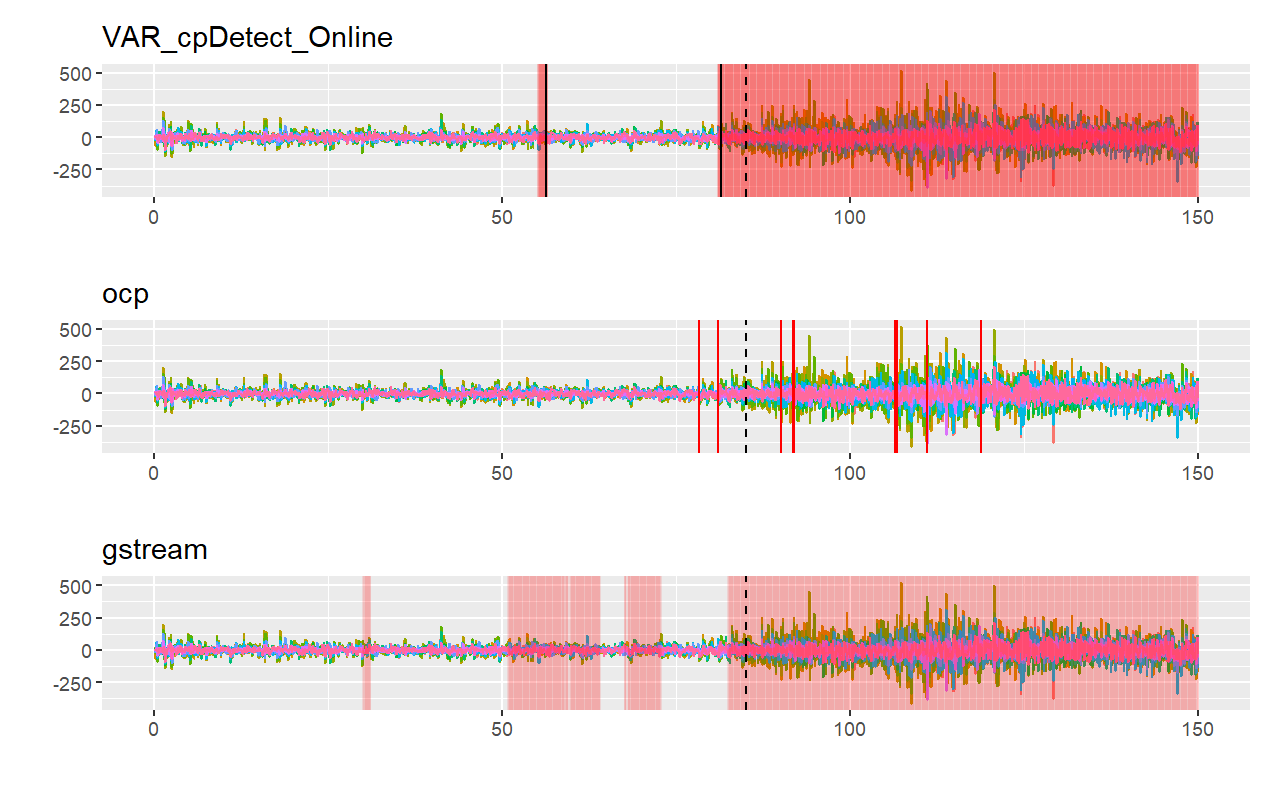}
\caption{Experiment results on EEG data: The red lines indicate alarm locations for (top) VAR\_cpDetect\_Online, (middle) ocp, and (bottom) gstream.}\label{fig:EEG}
\end{figure}
As shown in the top panel of Figure~\ref{fig:EEG}, the alarms raised by the proposed algorithm form two clusters, indicating periods where the patient's brain activity deviates from baseline, potentially signaling seizure activity. The estimated start times (solid vertical black lines) of these clusters are at \( t = 56.3 \) (lasting 1.4 seconds) and \( t = 81.5 \) (lasting until the end of the data). The proposed algorithm requires 19 and 25 additional observations (detection delay) before issuing these alarms. Both estimates occur before the confirmed seizure onset at \( t = 85 \), suggesting that early shifts in brain electrical activity may be detectable in advance, consistent with findings in \cite{ombao2005slex}. Similarly, in another study \citep{safikhani2022joint}, an offline CPD algorithm based on a VAR model estimated a change point at $t = 83$, also slightly before the seizure began, further supporting the idea that changes in brain activity may be detectable prior to the seizure's onset. The middle panel shows that the ocp method detected two change points at \( t = 78.3 \) and \( t = 81.0 \), both preceding the seizure onset, with the latter closely aligning with our algorithm's estimate. The bottom panel indicates that the gstream method raised alarms forming four clusters, with start points at \( t = 30.1 \) (lasting 1.1 seconds), \( t = 50.9 \) (lasting 13.3 seconds), \( t = 67.6 \) (lasting 5.3 seconds), and \( t = 82.5 \) (lasting until the end). The final cluster occurs slightly before \( t = 85 \), agreeing with our algorithm's results; however, the gstream method triggers numerous alarms before the seizure, limiting its practical utility for early warning. The average execution times were 1.77 seconds for our method, 9.50 seconds for ocp, and 27.14 seconds for gstream.

\section{SEQUENTIAL UPDATING FOR TRANSITION MATRICES}\label{app:sequpdate}
In this section, we examine the performance of a sequential updating approach \citep{messner2019online} for estimating transition matrices in high-dimensional VAR models. Sequential updating allows for efficient integration of new data, improving the estimation of transition matrices for the proposed algorithm when no alarm has been raised during monitoring. We discuss the benefits and limitations of sequential updating in various scenarios and present simulation results to illustrate its impact on estimation accuracy.

\subsection{Advantages and Limitations}

Sequential updating provides a practical method to update the transition matrix estimates as new observations arrive. Instead of re-estimating the transition matrices from scratch using both old and new data, which incurs high time and space complexity, this approach applies a cyclic coordinate descent algorithm at each time step. By using the previous step’s coefficient estimates as starting values, it avoids the computational burden associated with full re-estimation. When the forgetting factor is set to \( \nu = 1 \), this approach efficiently updates the transition matrices without requiring all historical data. The detailed procedure can be found in \cite{messner2019online}, particularly in Equations (10)–(14).

Incorporating sequential updating into the proposed algorithm is particularly advantageous when the training data is very limited. In such cases, it allows the transition matrix estimates to improve as additional observations are gathered, provided no alarm is raised. This can help the algorithm reduce false alarm rates and increase power, as the accuracy of the transition matrix estimates improves with more data. However, as the size of the initial training data grows, the relative benefits of sequential updating decrease.

One limitation of sequential updating is that it can introduce estimation error at the beginning of the process, which may increase the likelihood of false alarms. This issue is particularly critical in real-time applications where accuracy is essential. As a result, while sequential updating is valuable in cases with limited training data, its direct application may not be suitable for all scenarios, especially when minimizing false alarm rates is crucial.

\subsection{Simulation Study}

To illustrate the effects of sequential updating, we conducted simulations using data generated from a VAR process with transition matrix \( A \) and dimension \( p = 10 \). For each repetition, the initial estimate of the transition matrix was obtained using the regularization method \citep{basu2015regularized} on training data of varying lengths, followed by sequential updates as new observations were collected.

Figure~\ref{fig:training_data_comparison} shows the estimation error \( \|A - \hat{A}\|_2 \) with and without sequential updating. In this figure, the solid black line represents the average estimation error using sequential updating, with black dotted lines indicating the 2.5th and 97.5th percentiles. The blue dotted line represents the average estimation error when only the initial training data is used, with the shaded area showing the 2.5th and 97.5th percentile range.

The results show that sequential updating initially increases the estimation error, which may temporarily raise the false alarm probability. As the training sample size grows, however, the advantage of sequential updating diminishes, and the overall estimation error converges with that of the non-updated estimates. 

The simulation results suggest that sequential updating is advantageous in situations with limited training data, providing an efficient way to incorporate new data and improve estimation accuracy. However, as the amount of data increases, the benefits of sequential updating wane, and its initial estimation error may contribute to a higher false alarm risk. Therefore, while sequential updating is effective in specific scenarios, we advise caution in applying it directly in cases where minimizing false alarms is a priority. Integrating sequential updating with the proposed algorithm presents an interesting but challenging direction for future research.

\begin{figure}[!ht]
\centering
   \begin{subfigure}[b]{0.4\textwidth}
   \centering
   \includegraphics[width=1\textwidth]{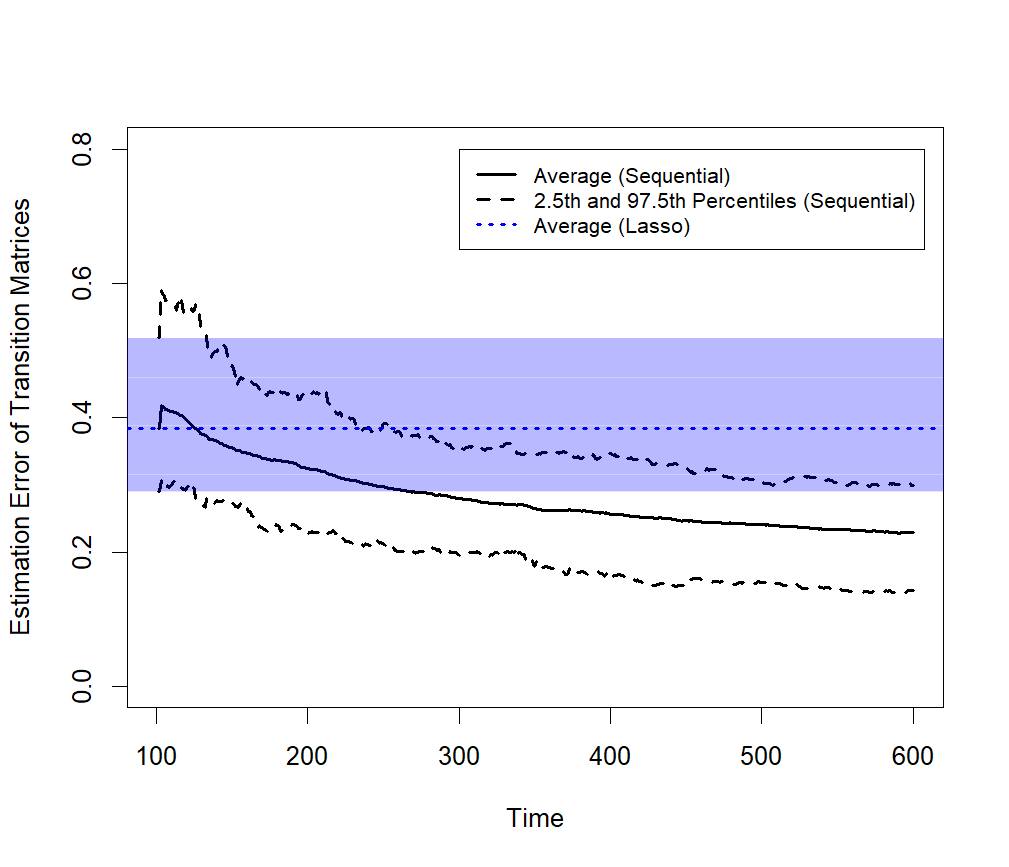}
   \caption{Training data length = 100}
   \label{fig:100} 
\end{subfigure}%
\begin{subfigure}[b]{0.4\textwidth}
\centering
   \includegraphics[width=1\textwidth]{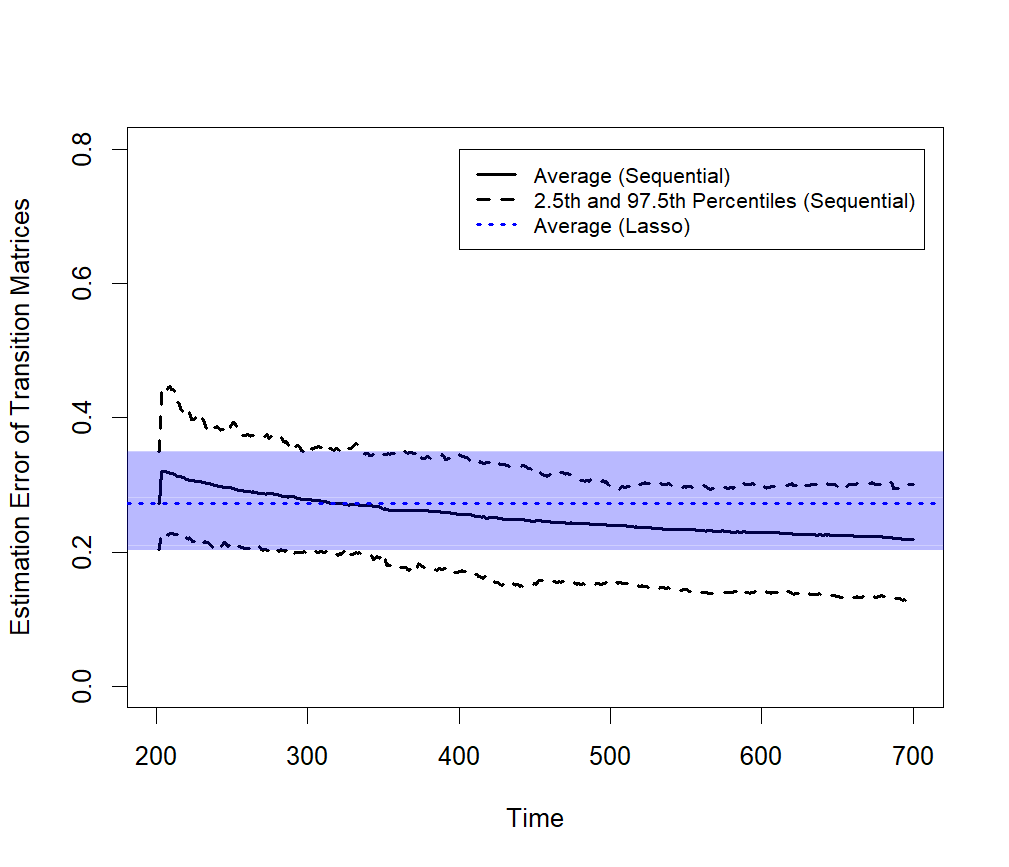}
   \caption{Training data length = 200}
   \label{fig:200}
\end{subfigure}
\begin{subfigure}[b]{0.4\textwidth}
\centering
   \includegraphics[width=1\textwidth]{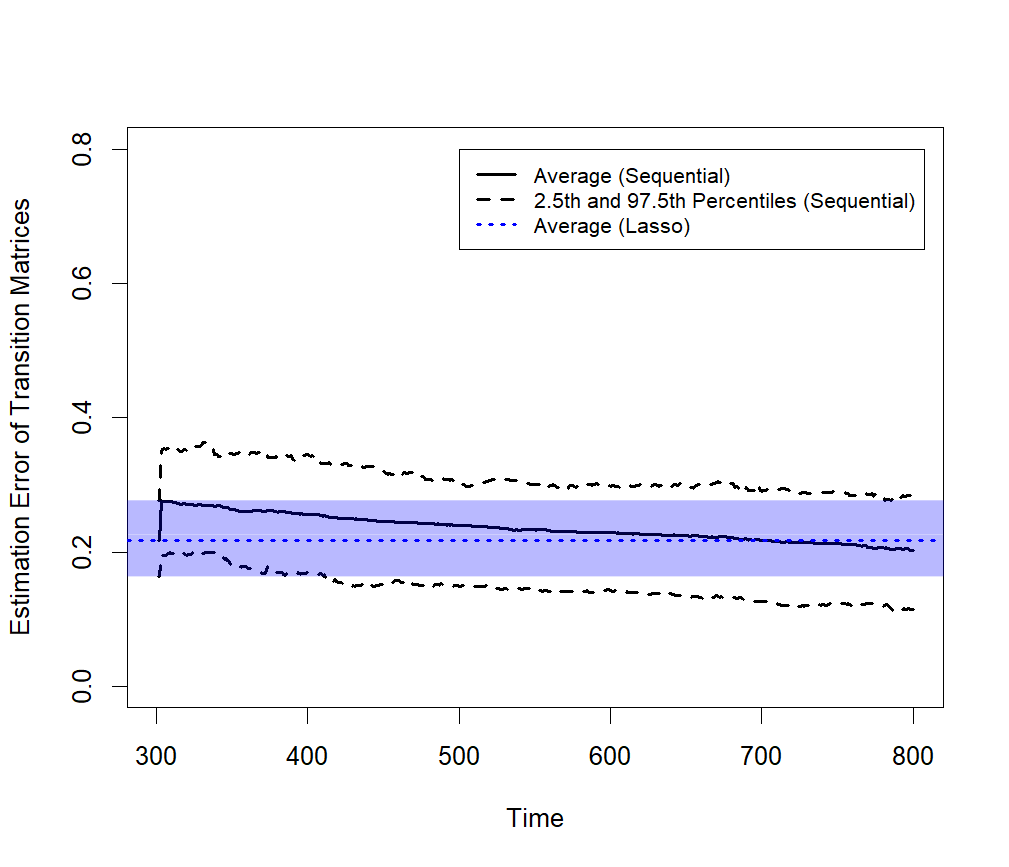}
   \caption{Training data length = 300}
   \label{fig:300}
\end{subfigure}%
\centering
   \begin{subfigure}[b]{0.4\textwidth}
   \centering
   \includegraphics[width=1\textwidth]{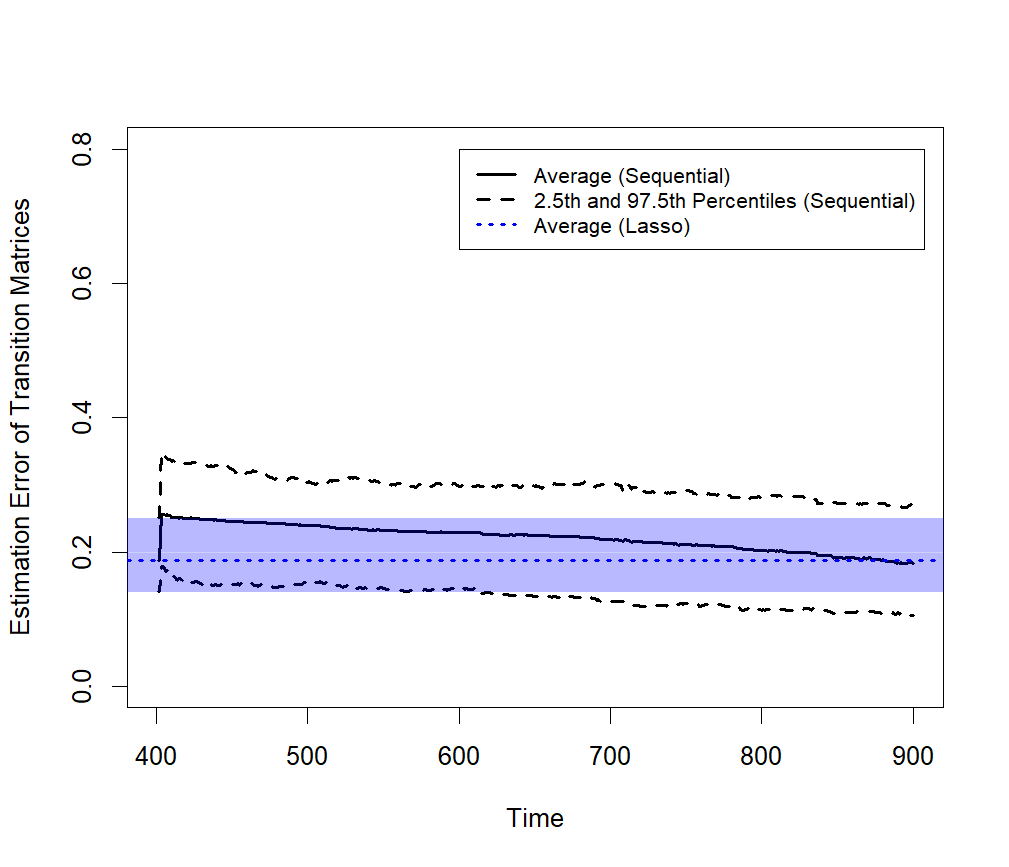}
   \caption{Training data length = 400}
   \label{fig:400} 
\end{subfigure}

\caption{Comparison of estimation errors between the sequential update method and estimates based only on training data}
\label{fig:training_data_comparison}
\end{figure}

\section{POST-CHANGE ANALYSIS}\label{postchangeanalysis}
Identifying which variables undergo shifts after a change point, especially in high-dimensional contexts, is important yet challenging due to the limited number of post-change samples. When the post-change sample size is small, estimating the new model parameters reliably becomes difficult, complicating efforts to pinpoint which variables have shifted. Even with an accurately detected change point, a limited number of post-change observations can greatly reduce the reliability of diagnostic analysis. A straightforward approach might be to estimate the transition matrices before and after the change using Lasso, then compare these estimates. However, the bias inherent in Lasso makes it infeasible to directly infer which components of the transition matrices have changed. To address this, we recommend applying an online debiasing technique  \citep{deshpande2023online} both before and after the change point. This method debiases the Lasso estimates and allows for constructing confidence intervals (CIs) for the entries of the VAR transition matrices. By constructing CIs for the differences between the debiased estimates of the transition matrices before and after the change, we can identify which entries are likely to have changed. If the CI for a given entry excludes zero, we can infer a significant shift in that entry. To validate this approach, we conducted simulations with two groups of observations with \( p = 10 \) variables—one representing data before the change (with transition matrix \( A \)) and the other representing data after the change (with transition matrix \( A^* \)). The pre-change sample size was fixed at \( n_0 = 500 \), allowing for accurate estimation, while the post-change sample size \( n_1 \) varied among 100, 200, and 300. The two transition matrices differed at six specific entries: (1,1), (2,2), (10,10), (3,7), (6,4), and (8,4). For each entry in the difference matrix \( D = A - A^* \), we constructed CIs using debiased Lasso estimates and calculated their coverage rates of zero over 100 repetitions. As shown in Figure~\ref{fig:diagnosis}, entries with no changes maintain a zero coverage rate around 0.95, while entries with changes rarely include zero as the post-change sample size increases, accurately identifying the shifts. Although identifying these changes benefits from a moderate number of post-change samples, the CPD algorithm can continue collecting observations to improve diagnostic accuracy.

\begin{figure}[!ht]
    \centering
    \includegraphics[width=1\linewidth]{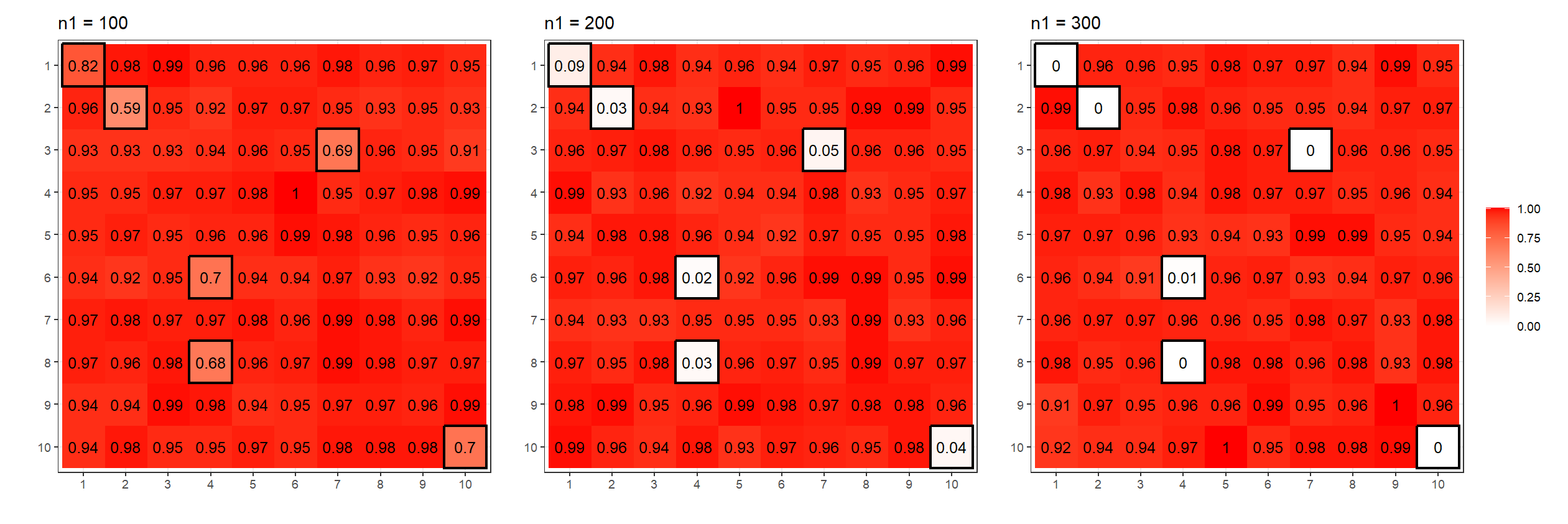}
    \caption{Coverage rates of zero for confidence intervals of the entries in $A - A^*$}
    \label{fig:diagnosis}
\end{figure}

\end{document}